\newtheorem{theorem}{Theorem}[section]
\newtheorem{corollary}[theorem]{Corollary}
\newtheorem{lemma}[theorem]{Lemma}
\newtheorem{definition}[theorem]{Definition}
\newtheorem{fact}[theorem]{Fact}
\newcommand{\chist}{\chi_{s,t}}
\newcommand{\bchist}{\boldsymbol{\mathit{\chi_{s,t}}}}
\newcommand{\maxflow}{F^{*}}
\newcommand{\eps}{\epsilon}
\newcommand{\reff}{\mathrm{Reff}_{st}}
\newcommand{\Reff}[1]{R_{\mathrm{eff}}(#1)}
\newcommand{\Ceff}[1]{C_{\mathrm{eff}}(#1)}
\newcommand{\Ot}[1]{\widetilde{O}\left(#1 \right)}
\def\defeq{\stackrel{\mathrm{def}}{=}}
\newdimen\pIR
\newcommand{\StevesR}{{\rm I\kern\pIR R}}
\def\Reals#1{\StevesR^{#1}}
\def\norm#1{\left\| #1 \right\|}
\def\abs#1{\left|#1  \right|}
\def\pinv#1{{#1}^{\dagger}}
\newcommand{\ff}{\boldsymbol{\mathit{f}}}
\newcommand{\ii}{\boldsymbol{\mathit{i}}}
\newcommand{\rr}{\boldsymbol{\mathit{r}}}
\newcommand{\ww}{\boldsymbol{\mathit{w}}}
\newcommand{\BB}{\boldsymbol{\mathit{B}}}
\newcommand{\CC}{\boldsymbol{\mathit{C}}}
\newcommand{\LL}{\boldsymbol{\mathit{L}}}
\newcommand{\RR}{\boldsymbol{\mathit{R}}}
\newcommand{\thr}[1]{|#1|}
\newcommand{\energy}[2]{\mathcal{E}_{#1}(#2)}
\newcommand{\phiphi}{\boldsymbol{\mathit{\phi}}}
\newcommand{\tphi}{\boldsymbol{\widetilde{\mathit{\phi}}}}
\newcommand{\tf}{\widetilde{f}}
\newcommand{\hf}{\widehat{f}}
\newcommand{\hff}{\boldsymbol{\mathit{\hf}}}
\newcommand{\tff}{\boldsymbol{\mathit{\tf}}}
\newcommand{\of}{\overline{f}}
\newcommand{\oracle}{\mathit{O}}
\renewcommand{\cong}[2]{\mathsf{cong}_{#1}(#2)}
\newcommand{\hphiphi}{\boldsymbol{\widehat{\phi}}}
\newcommand{\tphiphi}{\boldsymbol{\tilde{\phi}}}
\def\maxflowguess{F}
\newcommand{\rhop}{\rho}
\newcommand{\rrt}[1]{\rr^{#1}}
\newcommand{\rt}[2]{r^{#1}_{#2}}
\newcommand{\flowt}[1]{f_{\rr^{#1}}}
\newcommand{\newcode}[1]{\colorbox{yellow}{#1}}
\newcommand{\fail}{ \textbf{``fail''}}
\begin{document}
\title{Electrical Flows, Laplacian Systems, and Faster Approximation of Maximum Flow in Undirected Graphs}

\author{ Paul Christiano \\ MIT \and 
Jonathan A.\ Kelner\thanks{Research partially supported by NSF grant CCF-0843915.}\\ MIT \and 
Aleksander M\k{a}dry\thanks{Research partially supported by NSF grant CCF-0829878 and by ONR grant N00014-05-1-0148.} \\ MIT \and 
Daniel Spielman\thanks{This material is based upon work supported by the National Science Foundation 
  under Grant Nos. 0634957, 0634904 and 0915487.
Any opinions, findings, and conclusions or recommendations expressed in this material are those of the authors and do not necessarily reflect the views of the National Science Foundation.
}%
 \\ Yale University \and 
Shang-Hua Teng\thanks{This research is in part supported by NSF
  grants 1032367, 0964481, and a USC Viterbi School of Engineering
  startup grant which is in turn
  supported by a Powell Foundation Award. } \\ University of Southern California
}

\maketitle

\date{}
\thispagestyle{empty}
\begin{abstract}

We introduce a new approach to computing an approximately 
  maximum $s$-$t$ flow in a capacitated, undirected graph.
This flow is computed by solving a sequence of electrical flow problems.
Each electrical flow is 
   given by
  the solution of a system of linear equations
  in a Laplacian matrix, and thus may be 
  approximately computed in nearly-linear time.

Using this approach, we develop the fastest known
  algorithm for computing approximately maximum $s$-$t$ flows.
For a graph having $n$ vertices and $m$ edges,
  our algorithm computes a $(1-\epsilon)$-approximately maximum $s$-$t$ flow
  in time\footnote{We recall that $\Ot{f (m)}$ denotes $O (f (m) \log^{c} f (m))$ for some constant $c$.}  $\Ot{mn^{1/3} \epsilon^{-11/3}}$.
A dual version of our approach computes 
    a $(1+\epsilon)$-approximately minimum
  $s$-$t$ cut in time
   $\Ot{m+n^{4/3}\eps^{-8/3}}$, which is the fastest known algorithm
  for this problem as well.
Previously, the best dependence on $m$ and $n$ was
  achieved by the algorithm of Goldberg and Rao (J. ACM 1998),
  which can be used to compute approximately 
  maximum  $s$-$t$ flows in time $\Ot{m\sqrt{n}\epsilon^{-1}}$,\
  and
  approximately minimum $s$-$t$ cuts in time $\Ot{m+n^{3/2}\epsilon^{-3}}$.

\end{abstract}


\newpage
\setcounter{page}{1}
\section{Introduction} 

The maximum $s$-$t$ flow problem and its dual, the minimum $s$-$t$ cut problem,
  are two of the most fundamental and extensively
  studied problems in Operations Research and Optimization~\cite{SchrijverA,AhujaBook}.
They have many applications (see~\cite{AhujaApps})
  and are often used as subroutines in other algorithms (see~\cite{AroraHK05,ShermanBreaking}).
Many advances have been made in the development of algorithms for this
  problem (see Goldberg and Rao \cite{GoldbergRao} for an overview).
However, for the basic problem of computing or $(1-\epsilon)$-approximating the maximum flow
  in undirected, unit-capacity  graphs with $m = O (n)$ edges,
  the asymptotically fastest known algorithm is the one developed
  in 1975 by 
  Even and Tarjan \cite{EvenTarjan}, which takes time $O(n^{3/2})$.
Despite 35 years of extensive work on the problem, this bound has not been improved.

In this paper, we introduce a new approach to computing  approximately maximum
  $s$-$t$ flows and minimum $s$-$t$ cuts in undirected, capacitated graphs.  
Using it, we present the first algorithms that break the $O(n^{3/2})$ 
  complexity barrier described above.  
In addition to being the fastest known algorithms for this problem, they are simple
  to describe and introduce techniques that may be applicable to other problems.
In them, we reduce the problem of computing maximum flows subject to
  capacity constraints to the problem of computing electrical flows
  in resistor networks.
An approximate solution to each  electrical flow problem can be found
  in time $\Ot{m}$ using recently developed
  algorithms for solving systems of linear equations in Laplacian 
  matrices~\cite{KoutisMP10,SpielmanTeng_solving}.

There is a simple physical intuition that underlies our approach,
which we describe here in the case of a graph with unit edge capacities.
We begin by thinking of each edge of the input graph as a resistor with resistance one,
 and we 
  compute the electrical flow
  that results when we send current from the source 
  $s$ to the sink $t$.
These currents obey the flow conservation constraints, but they may not respect the
  capacities of the edges.
To remedy this, we increase the resistance of each edge in proportion to
  the amount of current flowing through it---thereby
 penalizing edges that violate their capacities---and
 compute the electrical flow with these new resistances.
 
After repeating this operation $\Ot{m^{1/3}\cdot \mathrm{poly(1/\epsilon)}}$ times,
we will be able to 
obtain a $(1-\epsilon)$-approximately maximum $s$-$t$ flow by taking a certain average
of the electrical flows that we have computed,
and we will be able to extract a $(1+\epsilon)$-approximately minimum $s$-$t$ cut from the vertex potentials\footnote{For clarity, we will analyze these two cases separately, and they will use slightly different rules for updating the resistances.}.
This will give us algorithms for both problems that run in time $\Ot{m^{4/3}\cdot \mathrm{poly}(1/\epsilon)}$. By combining this with the graph smoothing and sampling techniques of Karger \cite{Karger98}, we can get a $(1-\epsilon)$-approximately maximum $s$-$t$ flow in time $\Ot{mn^{1/3}\epsilon^{-11/3}}$.
Furthermore, by applying the cut algorithm to a sparsifier 
  \cite{BenczurK96} of the input graph,
  we can compute a $(1+\epsilon)$-approximately minimum $s$-$t$ 
  cut in time $\Ot{m + n^{4/3} \epsilon^{-8/3}}$.

We remark that the results in this paper immediately improve the running time of algorithms
  that use the computation of an approximately maximum $s$-$t$ flow on an undirected,
  capacitated graph as a subroutine.  For example, combining our work with that of
  Sherman~\cite{ShermanBreaking}
  allows us to
  achieve the best currently known approximation ratio of $O(\sqrt{\log{n}})$ for the
  sparsest cut problem in time $\Ot{m+n^{4/3}}$.

We are hopeful that our approach can be  extended to directed graphs and
  can also eventually lead to an algorithm
   that approximately solves the maximum flow problem
   in nearly-linear time.

\subsection{Previous Work on Maximum Flows and Minimum Cuts} 

The best previously known algorithms for the problems studied here are
due to Goldberg and Rao. 
In a breakthrough paper, Goldberg and Rao~\cite{GoldbergRao}
  developed an algorithm for computing exact maximum $s$-$t$ flows
  in directed or undirected capacitated graphs in time
$
O (m \min (n^{2/3}, m^{1/2}) \log (n^{2}/m) \log U),
$
  assuming that the edge capacities are 
  integers between $1$ and $U$.
When we are interested in finding $(1-\epsilon)$-approximately maximum $s$-$t$ flow, the dependence on $\log U$ can be removed and, by employing the smoothing technique of Karger \cite{Karger98}, one can obtain a 
 running time of
   $$\Ot{m\sqrt{n}\epsilon^{-1}}.$$
By applying their algorithm to a sparsifier, as constructed by 
Bencz\'{u}r and Karger \cite{BenczurK96}, 
Goldberg and Rao show how to compute a $(1+\epsilon)$-approximately
  minimum $s$-$t$ cut in an undirected graph in time
   $$\Ot{   m +   n^{3/2} \epsilon^{-3}}.$$
Their work was the culmination of a long line of papers on the problem;
we refer the reader to their paper for an extensive survey of 
  earlier developments in algorithms for computing maximum $s$-$t$ flows.
 In more recent work, Daitch and Spielman~\cite{DaitchSpielman} showed that
  fast solvers for Laplacian linear systems~\cite{SpielmanTeng_solving,KoutisMP10}
  could be used to make
  interior-point algorithms for the maximum flow and minimum-cost flow
  problems run in time $\Ot{m^{3/2} \log U}$, 
  and M\k{a}dry~\cite{Madry10} showed that one can approximate a wide range of cut problems,
  including the minimum $s$-$t$ cut problem, within a polylogarithmic factor in almost linear
  time.

\subsection{Outline}
We begin the technical part of this paper
  in Section~\ref{sec:ElectricalFlow} with a review of 
  maximum flows and electrical flows, along with several
  theorems about them that we will need in the sequel. 
In Section~\ref{sec:simple} we give a simplified version of our
  approximate maximum-flow
  algorithm that has running time $\Ot{m^{3/2} \epsilon^{-5/2}}$.
   In Section~\ref{sec:improved_flow}, we will show how to improve the running time of our algorithm to  $\Ot{m^{4/3} \epsilon^{-3}}$; we will then describe how to combine this with existing graph smoothing and sparsification techniques to   
compute approximately maximum $s$-$t$ flows  in time $\Ot{mn^{1/3}\epsilon^{-11/3}}$ and to approximate the value of such flows in time  $\Ot{m+n^{4/3} \epsilon^{-8/3}}$.
In Section~\ref{sec:dual_alg}, we present
  a variant of our algorithm that computes approximately minimum $s$-$t$
  cuts in time $\Ot{m+n^{4/3} \epsilon^{-8/3}}$.


\section{Maximum Flows, Electrical Flows, and Laplacian Systems}\label{sec:ElectricalFlow}

\subsection{Graph Theory Definitions}

Throughout the rest of the paper, let $G=(V,E)$ be an undirected graph with $n$ vertices and $m$ edges.   We distinguish two vertices, a \emph{source} vertex $s$ and a \emph{sink} vertex $t$.  
We assign each edge $e$ a nonzero integral \emph{capacity} $u_{e} \in  \mathbb{Z}^{+}$, and we let $U:=\max_e
  {u_e}/\min_e{u_e}$ be the ratio of the largest to the smallest capacities.  

We arbitrarily orient each edge in $E$; this divides the edges incident to a  
  vertex $v\in V$ into the set $E^{+}(v)$ of edges oriented towards
  $v$ 
   and the set $E^{-}(v)$ of edges 
  oriented away from $v$. These orientations are merely for notational convenience.
We use them to interpret the meaning of a positive flow
  on an edge.
If an edge has positive flow and is in $E^{+} (v)$, then the flow is
  towards $v$.
Conversely, if it has negative flow then the flow is away from $v$.
One should keep in mind that our graphs are undirected and that the flow on an
  edge can go in either direction, regardless of this edge's orientation.

We now define our primary objects of study, $s$-$t$ cuts and $s$-$t$ flows. 
\begin{definition}[Cuts]
  An \emph{$s$-$t$ cut} is a partition $(S,V \setminus S)$ of the vertices into two disjoint sets such that $s\in  S $
     and $t\in V\setminus S$.  
  The \emph{capacity $u(S)$} of the cut is defined to be the sum 
    $u(S):=\sum_{e\in E(S)} u_e,$
    where $E(S)\subseteq E$ is the set of edges with one endpoint in $S$ and one endpoint in $V\setminus S$.
\end{definition}
  
\begin{definition}[Flows]
  An \emph{$s$-$t$ flow} is a function $f:E\rightarrow \Reals{}$ that obeys the \emph{flow-conservation constraints}
  \[
     \sum_{e\in E^{-}(v)} f(e)-\sum_{e\in E^{+}(v)} f(e)=0 \quad\text{for all $v\in V\setminus\{s,t\}$}.
  \]
  The \emph{value $|f|$} of the flow is defined to be the net flow out of the source vertex, 
  $
    |f|:=\sum_{e\in E^{-}(s)} f(e)-\sum_{e\in E^{+}(s)} f(e).
  $
\end{definition}
It follows easily from the flow conservation constraints that the net flow out of $s$ is equal to the net flow into $t$, so $|f|$ may be  
  interpreted as the amount of flow that is sent from $s$ to $t$. 

\subsection{Maximum Flows and Minimum Cuts}\label{sec:max_flow_prelim}

We say that an $s$-$t$ flow $f$ is \emph{feasible} if $|f(e)|\leq u_e$ for each edge $e$, i.e., if the amount of flow routed through any edge does not exceed its capacity.   
The \emph{maximum $s$-$t$ flow problem} is that of finding a feasible $s$-$t$ flow in $G$ of maximum value.  
We denote a maximum flow in $G$ (with the given capacities) by $f^*$, 
  and we denote its value by $F^*:=|f^*|$.
We say that $f$ is a $(1-\epsilon)$-approximately maximum flow if it is a feasible $s$-$t$ flow
  of value at least $(1-\epsilon) F^{*}$.

To simplify the exposition, we will take $\epsilon$ to be a constant independent of $m$ throughout the paper, and $m$ will be assumed to be larger than some fixed constant.  However, our analysis will go through unchanged as long as $\epsilon>\widetilde{\Omega}{(m^{-1/3}})$.  In particular, our analysis will apply to all $\epsilon$ for which our given bounds are faster than the $O(m^{3/2})$ time required by existing exact algorithms.

One can easily reduce the problem of finding a $(1-\epsilon)$-approximation to the  maximum
  flow in an arbitrary undirected graph to that of finding a $(1-\epsilon/2)$-approximation
  in a graph in which the ratio of the largest to smallest capacities is polynomially 
  bounded.
To do this, one should first compute a crude approximation of the maximum flow in
  the original graph.
For example, one can compute the $s$-$t$ path of maximum bottleneck in
  time $O (m + n \log n)$~\cite[Section 8.6e]{SchrijverA}, 
  where we recall that the bottleneck of a path is the 
  minimum capacity of an edge on that path.
If this maximum bottleneck of an $s$-$t$ path is $B$, then the maximum flow
  lies between $B$ and $m B$.
This means that there is a maximum flow in which each edge flows
  at most $m B$, so all capacities can be decreased to be at most $m B$.
On the other hand, if one removes all the edges with capacities less $\epsilon B /2 m$,
  the maximum flow can decrease by at most $\epsilon B /2$.
So, we can assume that the minimum capacity is at least $\epsilon B / 2m$
  and the maximum is at most $B m$, for a ratio of $2 m^{2} / \epsilon$.
Thus, by a simple scaling, we can assume that all edge capacities
  are integers between $1$ and $2 m^{2} / \epsilon$.

The \emph{minimum $s$-$t$ cut problem} is that of finding the $s$-$t$ cut of minimum capacity.  
The \emph{Max Flow-Min Cut Theorem} (\cite{FordF56,EliasFS56}) states that the capacity of the minimum $s$-$t$ cut is equal to $F^*$, the value of the 
  maximum $s$-$t$ flow.  

In particular, the Max Flow-Min Cut Theorem implies that one can use the capacity of any $s$-$t$ cut as an upper bound on the value 
  of any feasible $s$-$t$ flow, 
   and that the task of finding the \emph{value} of the maximum flow is equivalent to the task of finding the  
  \emph{capacity} of a minimal $s$-$t$ cut. 
  
One should note, however, that the above equivalence applies only to the values of the flow and the capacity and that although one can easily obtain a minimum $s$-$t$ cut of a graph given its maximum flow, there is no known procedure that obtains a maximum flow from minimum $s$-$t$ cut more efficiently than by just computing the maximum flow from a scratch.

\subsection{Electrical Flows and the Nearly Linear Time Laplacian Solver}
In this section, we review some basic facts about electrical flows in networks of resistors and present
  a theorem that allows us to quickly approximate these flows.
For an in-depth treatment of the background material, we refer the reader to~\cite{Bollobas98}.

We begin by assigning a \emph{resistance $r_e>0$} to each edge $e\in E$, and we collect these resistances into a vector $\rr\in \Reals{m}$.   
For a given $s$-$t$ flow $f$, we define its \emph{energy} (with respect to $\rr$) as 
$$\energy{\rr}{f}:=\sum_{e} r_{e} f^2(e).$$ 
The \emph{electrical flow of value $F$ (with respect to $\rr$, from $s$ to $t$)}
is the flow that minimizes $\energy{\rr}{f}$ among all $s$-$t$ flows $f$ of value $F$. 
This flow is easily shown to be unique, and we note that it need not respect the capacity constraints.

From a physical point of view, the electrical flow of value one corresponds to the current that is induced in $G$ if we view it as an electrical circuit in which each edge $e$ has resistance of $r_{e}$, and we send one unit of current from $s$ to $t$, say by attaching $s$ to a current source and $t$ to ground.  

\subsubsection{Electrical Flows and Linear Systems}\label{sec:electrical_flows}

While finding the maximum $s$-$t$ flow corresponds to solving a linear program, we can compute the electrical flow by solving a system 
  of linear equations.  
To do so, we introduce the \emph{edge-vertex incidence matrix} $\BB$, which is an $n\times m$ matrix with rows indexed by vertices 
  and columns indexed by edges, such that
\[
  \BB_{v,e} = \begin{cases}
1  & \text{if $e\in E^{-}(v)$,}\\
-1  & \text{if $e\in E^{+}(v)$,}\\
0 & \text{otherwise.}
\end{cases}
\]

If we treat our flow $f$ as a vector $\ff\in\Reals{m}$, 
  where we use the orientations of the edges to determine the signs of the coordinates, 
  the $v^{\text{th}}$ entry of the vector $\BB^T \ff$ will be the difference between the flow out of and the flow into vertex $v$.  
As such, the constraints that one unit of flow is sent from $s$ to $t$ and that flow is conserved at all other vertices can be written as
  $$ B^T \ff =\bchist,$$
  where $\bchist$ is the vector with a $1$ in the coordinate corresponding to $s$, a $-1$ in the coordinate corresponding to $t$, and all other 
  coordinates equal to 0.

We define the (weighted) \emph{Laplacian $\LL$ of $G$ \mbox{\rm (with respect to the resistances $\rr$)}} to be the $n \times n$ matrix
  $$\LL:= \BB \CC \BB^T,$$
  where $\CC$ is the $m\times m$ diagonal matrix with $\CC_{e,e}=c_e=1/r_e$.
One can easily check that its entries are given by
\[
  \LL_{u,v} = \begin{cases}
\sum_{e\in E^{+}(u)\cup E^{-}(u)} c_e & \text{if $u=v$,}\\
-c_{e}  & \text{if $e=(u,v)$ is an edge of $G$, and}\\
0 & \text{otherwise.}
\end{cases}
\]

Let $\RR=\CC^{-1}$ be the diagonal matrix with $\RR_{e,e}=r_e$.  
The energy of a flow $\ff$ is given by
  $$  \energy{\rr}{f}:=\sum_{e} r_{e} \ff (e)^2 = \ff^T\RR\ff =\norm{\RR^{1/2} \ff}^2. $$
The electrical flow of value 1 thus corresponds to the vector $\ff$ that minimizes
 $\norm{\RR^{1/2} \ff }^2 $ subject to
  $\BB \ff = \bchist.$
If $f$ is an electrical flow, it is well known that it is a \textit{potential flow}, which means that there
  is a vector $\phiphi \in \Reals{V}$ such that
\[
  \ff(u,v) = \frac{\phi_{v} - \phi_{u}}{r_{u,v}}.  
\]
That is,
\[
  \ff =  \CC \BB^T \phiphi = \RR^{-1}\BB^T \phiphi.
\]
Applying $\BB \ff = \bchist$, we have 
$\BB\ff = \BB \CC \BB^T \phiphi = \bchist$, and hence
 the vertex potentials are given by
\[
  \phiphi=\pinv{\LL}\bchist,
\]
where $\pinv{\LL}$ denotes the Moore-Penrose pseudo-inverse of $\LL$.
Thus, the electrical flow $\ff$ is given by the expression  
\[
  \ff = \CC\BB^T \pinv{\LL}\bchist \label{flowequation}.
\]
This lets us rewrite the energy of the electrical flow of value 1 as
\begin{equation}\label{eq:alt_energy}
 \energy{\rr}{\ff }=
   \ff^T\RR\ff =
  \left(\bchist^T {\pinv{\LL}}^T \BB \CC^T \right) \RR \left(\CC\BB^T \pinv{\LL}\bchist \right) = 
  \bchist \pinv{\LL }\LL  \pinv{\LL } \bchist = \bchist^T \pinv{\LL } \bchist =
  \phiphi^{T} \LL  \phiphi .
 \end{equation}

\subsubsection{Effective $s$-$t$ Resistance and Effective $s$-$t$ Conductance}

Our analysis will make repeated use of two basic quantities from the theory of electrical networks, the effective $s$-$t$ resistance and effective $s$-$t$ conductance.

Let $f$ be the electrical $s$-$t$ flow of value 1, and let $\phiphi$ be its vector of vertex potentials.
 The \emph{effective $s$-$t$ resistance of $G$ with respect to the resistances $\rr$} is given by 
 $$\Reff{\rr} = \phi(s)-\phi(t).$$
    Throughout this paper, we will only look at the effective resistance between the vertices $s$ and $t$ in the graph $G$, so we
   suppress these letters in our notation and simply write $\Reff{\rr}$.

Using our linear algebraic description of the electrical flow and Equation~\eqref{eq:alt_energy}, we have
$$\Reff{\rr} = \phi(s)-\phi(t) 
   = \bchist^{T} \phiphi 
  = \bchist^{T} \pinv{\LL}\bchist 
  = \energy{\rr}{\ff }.$$
This gives us an alternative description of the effective $s$-$t$ resistance as the energy of the electrical flow of value 1.

It will sometimes be convenient to use the related notion of the \emph{effective $s$-$t$ conductance of $G$ with respect to the resistances $\rr$}, which we define by
$$\Ceff{\rr} = 1/\Reff{\rr}.$$
We note that this is equal to the value of the electrical flow in which $\phi(s)-\phi(t)=1$.

\subsubsection{Approximately Computing Electrical Flows}

From the algorithmic point of view, the crucial property of the
  Laplacian $\LL$ is that it is symmetric and \emph{diagonally
  dominant}, i.e., for any $u$, 
  $\sum_{v'\neq u} |\LL_{u,v}|\leq  \LL_{u,v}$. 
This allows us to use the result of Koutis, Miller, and Peng
  \cite{KoutisMP10}, which builds on the work of Spielman and Teng
  \cite{SpielmanTeng_solving},  to approximately
  solve our linear system in nearly-linear time. 
By rounding the approximate solution to a flow, we can
  prove the following theorem (see Appendix~\ref{sec:linsolve}
  for a proof).

\begin{theorem}[Fast Approximation of Electrical Flows]\label{thm:computing_electrical_flow}
For any $\delta >0$, any $F>0$, and any vector $\rr$ of resistances
  in which the ratio of the largest to the smallest resistance is at most $R$, 
  we can compute, in time $\Ot{m\log R/\delta}$, 
  a vector of vertex potentials $\tphiphi$
  and an $s$-$t$ flow $\tf$ of value $F$ such that
\begin{itemize}
\item [a.] $\energy{\rr}{\tf} \leq (1 + \delta) \energy{\rr}{f}$,
  where $f$ is the electrical $s$-$t$ flow of value $F$, and
\item [b.] for every edge $e$,
\[
\abs{r_{e} f_{e}^{2} - r_{e} \tf_{e}^{2}} 
  \leq \frac{\delta}{2 m R}    \energy{\rr}{f},
\]
where $f$ is the true electrical flow.  
\item [c.]
\[
  \tphi_{s} - \tphi_{t} \geq \left(1 - \frac{\delta}{12 n m R} \right) F \Reff{\rr}.
\]
\end{itemize}
We will refer to a flow meeting the above conditions as a \emph{$\delta$-approximate electrical flow}.
\end{theorem}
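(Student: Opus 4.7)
The plan is to apply the nearly-linear time Laplacian solver of Koutis--Miller--Peng \cite{KoutisMP10} (building on Spielman--Teng \cite{SpielmanTeng_solving}) to the system $\LL \phiphi = F \bchist$, and then round the resulting approximate potentials into a genuine $s$-$t$ flow of value exactly $F$. Concretely, I would fix a target accuracy $\epsilon' = \delta / \mathrm{poly}(n,m,R)$ (to be tuned below) and invoke the solver to obtain in time $\Ot{m \log(1/\epsilon')} = \Ot{m \log(R/\delta)}$ a vector $\tphiphi$ satisfying the standard guarantee $\|\tphiphi - \phiphi\|_{\LL} \le \epsilon' \|\phiphi\|_{\LL}$, where $\phiphi = F \pinv{\LL} \bchist$ are the true potentials and $\|\xx\|_\LL \defeq \sqrt{\xx^T \LL \xx}$. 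The final output $\tphiphi$ of the theorem will be exactly this vector, and the flow $\tff$ will be extracted from it by a spanning-tree rounding step described below.

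Part (c) is the easiest to establish and motivates the accuracy choice. Since $\phi_s - \phi_t = \bchist^T \phiphi = F\,\Reff{\rr}$ and $\|\phiphi\|_\LL^2 = F^2\,\Reff{\rr}$, Cauchy--Schwarz in the $\LL$-norm gives
\[
|\bchist^T(\tphiphi - \phiphi)| \le \sqrt{\bchist^T \pinv{\LL} \bchist}\cdot \|\tphiphi - \phiphi\|_\LL = \sqrt{\Reff{\rr}}\cdot \|\tphiphi - \phiphi\|_\LL \le \epsilon' F\,\Reff{\rr},
\]
so $(\tphi_s - \tphi_t) \ge (1-\epsilon') F \Reff{\rr}$, which yields (c) with the prescribed $\epsilon'$.

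For (a) and (b), I would construct $\tff$ in two stages. First, form the ``potential flow'' $\ff' \defeq \CC \BB^T \tphiphi$, whose energy equals $\tphiphi^T \LL \tphiphi$; expanding $\tphiphi = \phiphi + (\tphiphi - \phiphi)$ and using the orthogonality of the residual to $\LL \phiphi = F\bchist$ (combined with the bound on $\bchist^T(\tphiphi-\phiphi)$ from the previous paragraph) shows that this already satisfies a multiplicative $(1+O(\epsilon'))$ energy bound and a per-edge additive bound of the form required in (b). However, $\ff'$ need not be a valid $s$-$t$ flow of value $F$: its boundary $\BB \ff' - F\bchist$ may be nonzero. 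To fix this, I would compute any spanning tree $T$ of $G$ in time $O(m)$, keep $\tff_e = \ff'_e$ on every off-tree edge, and route the residual boundary through $T$ (each tree-edge flow is uniquely determined by the cut-decomposition induced by removing that edge). This produces a true flow $\tff$ of value exactly $F$.

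The main obstacle is bounding the extra energy introduced by the tree-routing step so that it does not blow up (a) or (b). The residual on each vertex can be controlled by $\|\LL(\tphiphi - \phiphi)\|$, which in turn is dominated by $\lambda_{\max}(\LL)\cdot \|\tphiphi - \phiphi\|_\LL \le \mathrm{poly}(m,R/r_{\min}) \cdot \epsilon' \|\phiphi\|_\LL$. Each tree edge then carries at most $\|\chi\|_1 \le n \cdot \|\chi\|_\infty$ units of corrective flow, and its resistance is at most $R\cdot r_{\min}$, so the total added energy is at most $\mathrm{poly}(n,m,R)\cdot (\epsilon')^2 \cdot \energy{\rr}{\ff}$. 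By absorbing these polynomial factors into the initial choice $\epsilon' = \delta/\mathrm{poly}(n,m,R)$---which only costs an additional $O(\log(nmR/\delta))$ factor in the solver's runtime and thus still fits inside $\Ot{m\log(R/\delta)}$---both the multiplicative bound (a) and the per-edge additive bound (b) follow. The per-edge bound in (b) is obtained by further noting that $|r_e \ff_e^2 - r_e \tff_e^2| \le r_e(|\ff_e| + |\tff_e|) \cdot |\ff_e - \tff_e|$, and using that $r_e \ff_e^2 \le \energy{\rr}{\ff}$ together with the uniform bound on the edgewise discrepancy $|\ff_e - \tff_e|$ derived from the residual-routing analysis.
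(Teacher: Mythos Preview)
Your proposal is correct and follows essentially the same route as the paper's proof: invoke the Koutis--Miller--Peng solver with accuracy $\epsilon' = \delta/\mathrm{poly}(n,m,R)$, form the induced potential flow $\CC\BB^{T}\tphiphi$, repair its boundary defect by routing the residual demands through a spanning tree, and absorb all the polynomial loss factors into the choice of $\epsilon'$. Your Cauchy--Schwarz argument for part~(c) is a slight (and arguably cleaner) variant of the paper's direct expansion of $\norm{\phiphi - \tphiphi}_{\LL}^{2}$, but otherwise the structure, the intermediate bounds, and the final accounting are the same.
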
    

\subsection{How the Resistance of an Edge Influences the Effective Resistance}

In this section, we will study how changing the resistance of an edge affects the effective resistance of the graph.
This will be a key component of the analysis of our $\Ot{m^{4/3}\cdot\mathrm{poly}(1/\epsilon)}$ algorithms.
We will make use of the following standard fact about effective conductance; we refer the reader to \cite[Chapter IX.2, Corollary 5]{Bollobas98} for a proof. 
\begin{fact}\label{fa:effective_cond_express}
For any $G=(V,E)$ and any vector of resistances $\rr$, 
  $$
    \Ceff{\rr}=  \min_{\phiphi\, | \, \phi_{s}=1,\atop 
   \phi_{t}=0} \sum_{(u,v)\in E} \frac{(\phi_{u}-\phi_{v})^{2}}{r_{(u,v)}}. 
  $$
Furthermore, the equality is attained for $\phiphi$ 
  being vector of vertex potentials corresponding to the electrical $s$-$t$ flow of $G$ 
  (with respect to $\rr$) of value  $1/\Reff{\rr}$. 
\end{fact}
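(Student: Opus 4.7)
The plan is to rewrite the right-hand side as a quadratic form in the Laplacian and then use a direct orthogonality argument against the electrical potential. Specifically, I observe that for any $\phiphi \in \Reals{V}$,
\[
 \sum_{(u,v)\in E} \frac{(\phi_u - \phi_v)^2}{r_{(u,v)}} \;=\; \phiphi^T \LL \phiphi,
\]
using the factorization $\LL = \BB \CC \BB^T$ already set up in Section~\ref{sec:electrical_flows}. Thus the right-hand side is a convex quadratic minimization over the affine constraint set $\{\phiphi : \phi_s = 1,\ \phi_t = 0\}$.

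Next I would exhibit the claimed minimizer explicitly. Let $\psi$ be the vector of vertex potentials of the electrical $s$-$t$ flow of value $\Ceff{\rr} = 1/\Reff{\rr}$, shifted so that $\psi_t = 0$. By the definition of effective resistance, the potential drop across a flow of value $F$ equals $F \cdot \Reff{\rr}$, so $\psi_s = \Ceff{\rr} \cdot \Reff{\rr} = 1$, and $\psi$ lies in the feasible set. From $\LL \psi = \Ceff{\rr} \bchist$ (this being the flow-conservation constraint $\BB^T \ff = \Ceff{\rr} \bchist$ combined with $\ff = \CC \BB^T \psi$), I get
\[
 \psi^T \LL \psi \;=\; \Ceff{\rr}\, \psi^T \bchist \;=\; \Ceff{\rr}\, (\psi_s - \psi_t) \;=\; \Ceff{\rr}.
\]
So the value of the objective at $\psi$ equals $\Ceff{\rr}$, which is the desired right-hand side.

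It remains to verify optimality. For any feasible $\phiphi$ write $\eta = \phiphi - \psi$; the affine constraint forces $\eta_s = \eta_t = 0$. Expanding,
\[
 \phiphi^T \LL \phiphi \;=\; \psi^T \LL \psi \;+\; 2\,\psi^T \LL \eta \;+\; \eta^T \LL \eta.
\]
The cross term vanishes because $\psi^T \LL \eta = \eta^T (\LL \psi) = \Ceff{\rr}\, \eta^T \bchist = \Ceff{\rr}(\eta_s - \eta_t) = 0$. Since $\LL$ is positive semidefinite, $\eta^T \LL \eta \geq 0$, so $\phiphi^T \LL \phiphi \geq \psi^T \LL \psi = \Ceff{\rr}$, with equality at $\psi$. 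This establishes both the value of the minimum and that it is attained by the electrical potentials of the flow of value $1/\Reff{\rr}$.

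The main (and really only) subtlety is bookkeeping signs and scaling: one has to be careful that fixing $\psi_s = 1,\ \psi_t = 0$ corresponds to the electrical flow of value $\Ceff{\rr}$ rather than of value $1$, so that $\LL \psi = \Ceff{\rr} \bchist$ rather than $\bchist$. Once that normalization is pinned down, the orthogonality argument kills the cross term cleanly and the rest is immediate from positive semidefiniteness of $\LL$.
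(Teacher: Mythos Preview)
Your proof is correct. The paper itself does not prove this fact; it is stated as standard background with a reference to Bollob\'as~\cite{Bollobas98}, so there is no ``paper's own proof'' to compare against. Your argument---rewriting the Dirichlet sum as $\phiphi^{T}\LL\phiphi$, evaluating at the correctly normalized electrical potential $\psi$, and eliminating the cross term via $\LL\psi=\Ceff{\rr}\,\bchist$ together with $\eta_{s}=\eta_{t}=0$---is the standard variational proof one finds in textbook treatments. One minor notational slip: in the paper's conventions $\BB$ is $n\times m$, so the conservation identity should read $\BB\ff=\Ceff{\rr}\,\bchist$ rather than $\BB^{T}\ff$; this does not affect the argument, and the paper itself contains the same typo at its first occurrence.
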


\begin{corollary}[Rayleigh Monotonicity]  \label{cor:rayleigh_mon}
If $r'_e \geq r_e$ for all $e \in E$, then $\Reff{\rr'} \geq \Reff{\rr}$.
\end{corollary}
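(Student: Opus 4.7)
The plan is to derive Rayleigh Monotonicity as an immediate consequence of the variational characterization of effective conductance given in Fact~\ref{fa:effective_cond_express}. Since $\Reff{\rr} = 1/\Ceff{\rr}$ by definition, the claim $\Reff{\rr'} \geq \Reff{\rr}$ is equivalent to the reverse inequality $\Ceff{\rr'} \leq \Ceff{\rr}$ for the conductances, so I will work with conductances instead.

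First I would observe that pointwise, if $r'_e \geq r_e > 0$ for every edge $e$, then $1/r'_e \leq 1/r_e$. Hence, for any fixed vertex potential assignment $\phiphi$ satisfying $\phi_s = 1$ and $\phi_t = 0$, the energy-like quadratic form is monotone in the edge conductances:
\[
\sum_{(u,v)\in E} \frac{(\phi_u - \phi_v)^2}{r'_{(u,v)}} \;\leq\; \sum_{(u,v)\in E} \frac{(\phi_u - \phi_v)^2}{r_{(u,v)}}.
\]
This is the only real content of the proof, and there is no obstacle here beyond noting that the summands are nonnegative so the inequality passes termwise to the sum.

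Next, I would take the minimum of both sides over all boundary-respecting $\phiphi$ (i.e., $\phi_s = 1$, $\phi_t = 0$). By Fact~\ref{fa:effective_cond_express}, the minimum of the left-hand side equals $\Ceff{\rr'}$ and the minimum of the right-hand side equals $\Ceff{\rr}$, yielding $\Ceff{\rr'} \leq \Ceff{\rr}$. (One subtle point worth noting briefly is that the minimum on the right is attained at some specific $\phiphi^*$, and applying the pointwise inequality at that $\phiphi^*$ already gives $\Ceff{\rr'} \leq \Ceff{\rr}$ without needing to attain the infimum on the left; this is the standard way to argue the monotone direction of a minimum.) Inverting via $\Reff{\cdot} = 1/\Ceff{\cdot}$ then gives $\Reff{\rr'} \geq \Reff{\rr}$, which is the desired conclusion.

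The proof is essentially two lines and there is no real obstacle; the whole point is that the variational formulation of Fact~\ref{fa:effective_cond_express} was set up precisely to make monotonicity statements like this transparent. The only thing to be careful about is working on the conductance side rather than the resistance side, since the analogous variational formula for effective resistance (as a maximum over unit flows) would also work but is not the one stated in the excerpt.
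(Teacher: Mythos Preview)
Your proof is correct and follows exactly the same route as the paper: use the variational characterization of $\Ceff{\cdot}$ from Fact~\ref{fa:effective_cond_express}, observe the termwise inequality for each fixed $\phiphi$, pass to the minimum, and invert. One small slip in your closing parenthetical: the variational formula for effective resistance is a \emph{minimum} over unit $s$-$t$ flows (Thomson's principle), not a maximum---but this does not affect your argument, which does not rely on it.
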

\begin{proof}
For any $\phiphi$, 
$$ \sum_{(u,v)\in E} \frac{(\phi_{u}-\phi_{v})^{2}}{r'_{(u,v)}} \leq  \sum_{(u,v)\in E} \frac{(\phi_{u}-\phi_{v})^{2}}{r_{(u,v)}}, $$
so the minima of these expressions over possible  values of $\phi$ obey the same relation, and thus $\Ceff{\rr'} \leq \Ceff{\rr}$.  Inverting both sides of this inequality yields the desired result.
\end{proof}

Our analysis of the $\Ot{m^{4/3}}$ algorithm will require the following lemma, which gives a lower bound on the effect that increasing the resistance of an edge can have on the effective resistance.
\begin{lemma}\label{lem:effect_of_res_incr}
Let $f$ be an electrical $s$-$t$ flow on a graph $G$ with resistances $\rr$.  Suppose that some edge $h=(i,j)$ accounts for a $\beta$ fraction of the total energy of $f$, i.e.,
$$f(h)^2 r_h = \beta\energy{\rr}{f}. $$
For some $\gamma>0$, define new resistances $\rr'$ such that $r_h' = \gamma r_h$, and $r_e'=r_e$ for all $e\neq h$.  Then
$$\Reff{\rr'}  \geq \frac{\gamma}{\beta+\gamma(1-\beta)}\Reff{\rr}.$$
In particular:
\begin{itemize}
\item If we ``cut'' the edge $h$ by setting $\gamma=\infty$, then
$$\Reff{\rr'}  \geq \frac{\Reff{\rr}}{1-\beta}.$$
\item If we slightly increase the effective resistance of the edge $h$ by setting
  $\gamma = (1 + \epsilon)$ with $\epsilon \leq 1$, then
\[
  \Reff{\rr'} \geq \frac{1+\epsilon}{\beta+(1+\epsilon)(1-\beta)}\Reff{\rr} 
  \geq \left(1 + \frac{\epsilon \beta}{2} \right)  \Reff{\rr}.
\]
\end{itemize}
\end{lemma}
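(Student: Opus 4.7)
The plan is to express $\Reff{\rr'}$ exactly via a rank-one perturbation of the Laplacian, and then to reduce the desired inequality to a Cauchy--Schwarz bound for the Laplacian pseudoinverse. Writing $\bb_h := \BB e_h$ for the incidence-matrix column of the edge $h=(i,j)$, changing its resistance from $r_h$ to $r_h' = \gamma r_h$ amounts to the update $\LL' = \LL + \alpha \bb_h \bb_h^T$ with $\alpha = 1/r_h' - 1/r_h = (1-\gamma)/(\gamma r_h)$. Since $\bb_h$ is orthogonal to $\mathbf{1}$ and hence lies in the range of $\LL$, the Sherman--Morrison identity applies to the pseudoinverses, and I will obtain
$$\Reff{\rr'} \;=\; \bchist^T (\LL')^\dagger \bchist \;=\; R - \frac{\alpha\, \Delta^2}{1 + \alpha R_{ij}},$$
where $R := \Reff{\rr}$, $\Delta := \bchist^T \LL^\dagger \bb_h$, and $R_{ij} := \bb_h^T \LL^\dagger \bb_h$ is the effective $i$-$j$ resistance in $G$ under $\rr$.

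Next, I would identify each of these quantities in flow-theoretic terms. The scalar $\Delta$ is precisely the potential drop $\phi^*_i - \phi^*_j$ across $h$ in the unit electrical $s$-$t$ flow, so the hypothesis $r_h f(h)^2 = \beta R$ gives $\Delta^2 = r_h^2 f(h)^2 = \beta\, r_h R$. For the other factor, Cauchy--Schwarz in the positive-semidefinite bilinear form $(x,y) \mapsto x^T \LL^\dagger y$ applied to the pair $(\bchist, \bb_h)$ produces
$$\Delta^2 \;=\; (\bchist^T \LL^\dagger \bb_h)^2 \;\le\; (\bchist^T \LL^\dagger \bchist)(\bb_h^T \LL^\dagger \bb_h) \;=\; R \cdot R_{ij}.$$
Setting $p := R_{ij}/r_h$, the two observations collapse into the single clean inequality $p \ge \beta$.

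Substituting $\Delta^2 = \beta r_h R$ and $R_{ij} = p\, r_h$ into the Sherman--Morrison expression and simplifying will yield
$$\Reff{\rr'} \;=\; R \cdot \frac{\gamma + (1-\gamma)(p-\beta)}{\gamma + (1-\gamma)p}.$$
A short check---splitting on the sign of $1-\gamma$, which flips both the monotonicity of the ratio in $p$ and the sense of the inequality---shows that the right-hand side is nondecreasing in $p$, so substituting the worst admissible value $p = \beta$ produces the main bound $\Reff{\rr'} \ge R\gamma/(\beta + \gamma(1-\beta))$. The two itemized consequences follow by specialization: letting $\gamma \to \infty$ collapses the right-hand side to $R/(1-\beta)$, and setting $\gamma = 1+\epsilon$ reduces the claimed inequality $(1+\epsilon)/(1+\epsilon(1-\beta)) \ge 1 + \epsilon\beta/2$ to the one-line bound $\epsilon(1-\beta) \le 1$, which holds because $\epsilon \le 1$.

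I expect the main obstacle to be recognizing that the lemma is, underneath, just Cauchy--Schwarz in the pseudoinverse: the Sherman--Morrison formula leaves $R_{ij}$ as a free parameter, and the claimed lower bound on $\Reff{\rr'}$ is exactly the value obtained by choosing $R_{ij}$ as small as the constraint $r_h f(h)^2 = \beta R$ permits. Once this reduction is seen, the remaining work is elementary algebra together with the monotonicity check in $p$.
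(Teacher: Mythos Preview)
Your proof is correct but takes a genuinely different route from the paper's.

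The paper's argument is a one-step application of the variational characterization of effective conductance (their Fact~2.4): normalize the electrical potentials $\phiphi$ so that $\phi_s=1$, $\phi_t=0$; then $\Ceff{\rr}=\sum_e (\phi_u-\phi_v)^2/r_e$ splits as $\beta\,\Ceff{\rr}$ on the edge $h$ plus $(1-\beta)\,\Ceff{\rr}$ on the rest. Plugging the \emph{same} potentials $\phiphi$ into the variational formula for $\Ceff{\rr'}$ immediately gives
\[
\Ceff{\rr'}\;\le\;\frac{\beta}{\gamma}\,\Ceff{\rr}+(1-\beta)\,\Ceff{\rr},
\]
and inverting yields the lemma. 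No Sherman--Morrison, no auxiliary parameter, no Cauchy--Schwarz step.

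Your approach instead computes $\Reff{\rr'}$ \emph{exactly} as a function of the free parameter $p=R_{ij}/r_h$, and then uses Cauchy--Schwarz in the $\LL^\dagger$ inner product to pin down $p\ge\beta$. This is heavier machinery, but it buys you an identity rather than an inequality, and makes transparent that the lemma is tight precisely when Cauchy--Schwarz is (i.e., when $\LL^\dagger\bchist$ and $\LL^\dagger\bb_h$ are parallel). One small remark: your monotonicity justification is more complicated than needed. Writing $g(p)=1-\dfrac{(1-\gamma)\beta}{\gamma+(1-\gamma)p}$ gives $g'(p)=\dfrac{(1-\gamma)^2\beta}{(\gamma+(1-\gamma)p)^2}\ge 0$ outright, with no case split on the sign of $1-\gamma$; the denominator stays positive because $p\le 1$ (the effective resistance across an edge never exceeds the edge's own resistance).
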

\begin{proof}
\newcommand{\phif}{\phi^{f}} 
\newcommand{\phiphif}{\boldsymbol{\mathit{\phi}}^{f}}
The assumptions of the theorem are unchanged if we multiply $f$ by a constant, so we may assume
without loss of generality that $f$ is the electrical $s$-$t$ flow of value $1/\Reff{\rr}$.  If $\phiphif$ is the vector of vertex potentials corresponding to $f$, this gives $\phif_s-\phif_t=1$.   Since adding a constant to the potentials doesn't change the flow,  we may assume that $\phi_s=1$ and $\phi_t=0$.
By Fact~\ref{fa:effective_cond_express},
\begin{align*} \Ceff{\rr}&= \sum_{(u,v)\in E} \frac{\big(\phif_{u}-\phif_{v}\big)^{2}}{r_{(u,v)}}
\quad=\quad \frac{\big(\phif_{i}-\phif_{j}\big)^{2}}{r_{h}} 
     +  \sum_{(u,v)\in E\setminus \{h\}} \frac{\big(\phif_{u}-\phif_{v}\big)^{2}}{r_{(u,v)}}.
\end{align*}
The assumption that $h$ contributes a $\beta$ fraction of the total energy implies that, in the above expression,
$$\frac{\left(\phif_{i}-\phif_{j}\right)^{2}}{r_{h}} = \beta \Ceff{\rr},$$
and thus
$$ \sum_{(u,v)\in E\setminus \{h\}} \frac{\left(\phif_{u}-\phif_{v}\right)^{2}}{r_{(u,v)}}=(1-\beta)\Ceff{\rr}.$$

We will obtain our bound on $\Ceff{\rr'}$ by plugging the original vector of potentials $\phif$ into the expression in Fact~\ref{fa:effective_cond_express}:
\begin{align*} \Ceff{\rr'}&=\min_{\phiphi\, | \, \phi_{s}=1,\atop 
   \phi_{t}=0} \sum_{(u,v)\in E} \frac{\big(\phi_{u}-\phi_{v}\big)^{2}}{r'_{(u,v)}} 
\quad\leq  \quad
    \sum_{(u,v)\in E} \frac{\big(\phif_{u}-\phif_{v}\big)^{2}}{r'_{(u,v)}}\\
&= \frac{\big(\phif_{i}-\phif_{j}\big)^{2}}{r'_{h}} 
     +  \sum_{(u,v)\in E\setminus \{h\}} \frac{\big(\phif_{u}-\phif_{v}\big)^{2}}{r'_{(u,v)}}
\quad =\quad \frac{\big(\phif_{i}-\phif_{j}\big)^{2}}{\gamma r_{h}} 
    +  \sum_{(u,v)\in E\setminus \{h\}} \frac{\big(\phif_{u}-\phif_{v}\big)^{2}}{r_{(u,v)}}\\
&= \frac{\beta}{\gamma} \, \Ceff{\rr} + (1-\beta)\Ceff{\rr} 
   \quad =\quad \Ceff{\rr}\left( \frac{\beta +\gamma(1-\beta)}{\gamma}\right).
\end{align*}
Since $\Reff{\rr}=1/\Ceff{\rr}$ and $\Reff{\rr'}=1/\Ceff{\rr'}$, the desired result follows.
\end{proof}

\newcommand{\fix}{\colorbox{yellow}{FIX}}
\section{A Simple \texorpdfstring{$\Ot{m^{3/2}\eps^{-5/2}}$}{\~O(m\textasciicircum(3/2) eps\textasciicircum(-5/2))}-Time Flow Algorithm}
\label{sec:simple}
Before describing our $\Ot{m^{4/3}\eps^{-3}}$ algorithm, we will describe a simpler 
  algorithm that finds a $(1-\epsilon)$-approximately maximum flow in time $\Ot{m^{3/2}\eps^{-5/2}}$.  
Our final algorithm will be obtained by carefully modifying the one described here.

The algorithm will employ the multiplicative weights update method \cite{AroraHK05,PlotkinShmoysTardos}.  
In our setting, one can understand the multiplicative weights method as a way of taking an algorithm that solves a flow problem very 
  crudely and, by calling it repeatedly, converts it into an algorithm that gives a good approximation for the maximum flow in $G$.   
The crude algorithm is called as a black-box, so it can be thought of as an oracle that answers a certain type of query.

In this section, we provide a self-contained description of the multiplicative weights method when it is specialized to our setting.  
In Section~\ref{sec:MW_overview}, we will describe the requirements on the oracle, give an algorithm that iteratively uses it to obtain a  
  $(1-\epsilon)$-approximately maximum flow, and state how the number of iterations required by the algorithm depends on the properties of the oracle. 
In Section~\ref{sec:sqrtn_oracle}, we will describe how to implement the oracle using electrical flows.  
Finally, in Section~\ref{sec:convergence_of_mw} we will provide a simple proof of the convergence bound set forth in Section~\ref{sec:MW_overview}.

\subsection{Multiplicative Weights Method: From Electrical Flows
  to Maximum Flows}
\label{sec:MW_overview}

For an $s$-$t$ flow $f$, we define the \emph{congestion} of an edge $e$ to be the ratio
  $$\cong{f}{e}:=\frac{|f(e)|}{u_{e}}$$ 
  between the flow on an edge and its capacity.  
In particular, an $s$-$t$ flow is feasible  if and only if
   $\cong{f}{e}\leq 1$ for all $e\in E$.

The multiplicative weights method will use a subroutine that we will 
   refer to as an  \emph{$(\eps,\rho)$-oracle}.  
This oracle will take as input a number $F$ and a vector $\ww$ of edge weights.  
For any $F \leq \maxflow$, we know that there exists a way to route $F$ units 
  of flow in $G$ so that all of the edge capacities are respected.  
Our oracle will provide a weaker guarantee:
When $F\leq \maxflow$, it will satisfy \emph{all} of the capacity constraints up to a multiplicative factor
of  $\rho$,\footnote{Up to polynomial factors in $1/\epsilon$, the value of $\rho$ will be  $\Theta(\sqrt{m})$ in this section, and $\widetilde{\Theta}(m^{1/3})$ later in the paper.} and it will satisfy the \emph{average} of these constraints, weighted by the $w_i$, up to a (much better) multiplicative factor
of $(1+\epsilon)$.   
When $F>\maxflow$, the oracle will either output an $s$-$t$ flow satisfing the conditions above,
   or it will return $\fail$.

Formally, we will use the following definition:
\begin{definition}[$(\eps,\rho)$ oracle]\label{def:maxflow_oracle}
For $\eps>0$ and $\rho>0$, an \emph{$(\eps,\rho)$ oracle}
  is an algorithm 
  that, given 
  a real number $F>0$ and a vector $\ww$ of edge
  weights with $w_{e}\geq 1$ for all $e$,
   returns an $s$-$t$ flow $f$ such that:
\begin{enumerate}
\item If $F\leq \maxflow$, then it outputs an $s$-$t$ flow $f$ satisfying:
	\begin{enumerate}[(i)]
	\item $\thr{f}=F$;\label{cond:throughput}
	\item $\sum_{e} w_{e} \cong{f}{e}\leq (1+\eps)|\ww|_{1}$, where $|\ww|_{1}:=\sum_{e}w_{e}$;\label{cond:weight}
	\item $\max_{e} \cong{f}{e}\leq \rho$.\label{cond:rho}
	\end{enumerate}
\item If $F>\maxflow$, then it either outputs a flow $f$ satisfying
  conditions (i), (ii), (iii) or outputs $\fail$.
\end{enumerate} 
\end{definition}

Our algorithm will be given a flow value $F$ as an input.  If $F\leq \maxflow$, it will return a flow of value at least $(1-O(\epsilon))F$.  If $F>\maxflow$, it will either return a flow of value at least $(1-O(\epsilon))F$ (which may occur if $F$ is only slightly greater than $\maxflow$) or it will return $\fail$.  This allows us to find a $(1-O(\epsilon))$-approximation of $\maxflow$ using binary search.  As outlined in Section~\ref{sec:max_flow_prelim}, we can obtain a crude bound $B$ in time $O(m+n\log n)$ such that $B \leq \maxflow \leq mB$, so the  binary search will only call our algorithm a logarithmic number of times.

In Figure~\ref{fig:multiplicative_alg}, we present our simple
  algorithm, which applies the multiplicative weights update routine
  to approximate the maximum flow by calling an $(\epsilon,\rho)$-flow
  oracle.
The algorithm initializes all of the weights to $1$ and 
   calls the oracle with these weights.  
The call returns a flow that satisfies conditions (i), (ii), and (iii)
  defined above. 
It then multiplies the weight of each edge $e$ 
  by $(1+\frac{\eps}{\rho}\cong{f^{i}}{e})$. 
Note that if the congestion of an edge is poor, say close to $\rho$, 
  then its weight will increase by a factor of $(1+\epsilon)$.
On the other hand, if the flow on an edge is no more than its
  capacity,
  then the new weight of the edge is essentially unchanged. 
This will put a larger fraction of the weight on the violated
  constraints, so the oracle will be forced to return a solution that comes closer to satisfying them
(possibly at the expense of other edges).
In the end, we return the average of all of the flows as our answer.

\IncMargin{0.5em}
\begin{algorithm}
\DontPrintSemicolon
\SetAlFnt{\small \sf}
\AlFnt
\SetKw{Return}{return}
\SetKwInOut{Input}{Input}\SetKwInOut{Output}{Output}
\Input{A graph $G=(V,E)$ with capacities $\{u_{e}\}_{e}$, a target flow value $F$, and an $(\eps,\rho)$-oracle $\oracle$}
\Output{Either a flow $\of$,  or $\fail$ indicating that $F>\maxflow$;}
\BlankLine
{Initialize $w_{e}^{0}\leftarrow 1$ for all edges $e$, and $N\leftarrow \frac{2\rho\ln m }{\eps^{2}}$}\;
\For{$i:=1,\ldots,N$}{
Query $\oracle$ with edge weights given by $\ww^{i-1}$ and target flow value $F$\;
\lIf{\AlFnt{$\oracle$ returns $\fail$}}{\AlFnt{\Return $\fail$}}\;
\Else{
Let $f^{i}$ be the returned flow\;
$w_{e}^{i}\leftarrow w_{e}^{i-1}(1+\frac{\eps}{\rho}\cong{f^{i}}{e})$ for each $e\in E$\;
}
}
\Return $\of\leftarrow \frac{(1-\eps)^2}{(1+\eps)N}(\sum_{i} f^{i})$\;

\caption{Multiplicative-weights-update routine}\label{fig:multiplicative_alg}
\end{algorithm}\DecMargin{0.5em}

The key point in analyzing this algorithm is that the total weight on $G$ does not grow too quickly, due to the 
average congestion constraint on the flows returned by the oracle $\oracle$.
However,
   if an edge $e$ consistently suffers large congestion
   in a sequence of flows returned by $\oracle$, then its  
    weight increases rapidly relative to the total weight, which will significantly penalize any further 
    congestion the edge in the subsequent flows.
If this were to occur too many times, its weight would exceed the total weight, which obviously cannot occur.
In Section~\ref{sec:convergence_of_mw}, we will prove the
  following theorem by showing that our algorithm converges 
  in $2\rho \ln m/\epsilon^2$ iterations.
\begin{theorem}[Approximating Maximum Flows by Multiplicative Weights]\label{thm:multiplicative}
For any $0 < \eps < 1/2$ and $\rho>0$, 
  given an $(\eps,\rho)$-flow oracle with running time 
  $T(m,1/\eps,U)$,  one can obtain an algorithm that computes
  a $(1-O(\eps))$-approximate maximum flow in a capacitated,
  undirected graph in time  $\Ot{\rho\eps^{-2}\cdot T(m,1/\eps,U)}$.
\end{theorem}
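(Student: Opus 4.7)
The plan is a standard multiplicative-weights potential argument: track $\Phi^i := |\ww^i|_1$, sandwich it between an upper bound derived from guarantee (ii) and a lower bound derived from the per-edge congestion totals, and use the squeeze to conclude that the averaged flow $\of$ is both feasible and near-optimal.

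For the upper bound, guarantee (ii) says $\sum_e w_e^{i-1}\cong{f^i}{e} \le (1+\eps)\Phi^{i-1}$, so plugging into the update rule gives
\[
   \Phi^i \;=\; \Phi^{i-1} + \frac{\eps}{\rho}\sum_e w^{i-1}_e\cong{f^i}{e} \;\le\; \Phi^{i-1}\bigl(1+\tfrac{\eps(1+\eps)}{\rho}\bigr),
\]
and therefore $\Phi^N \le m\exp(\eps(1+\eps)N/\rho)$. For the lower bound, let $C_e := \sum_{i=1}^N\cong{f^i}{e}$. Guarantee (iii) ensures each term $\tfrac{\eps}{\rho}\cong{f^i}{e}$ lies in $[0,\eps]\subseteq[0,\tfrac12]$, so the Taylor estimate $\ln(1+x)\ge x-x^2/2 \ge x(1-\eps/2)$ applied termwise yields $\ln w^N_e \ge \tfrac{\eps(1-\eps/2)}{\rho}C_e$. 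Combining with $w^N_e \le \Phi^N$ and substituting $N = 2\rho\ln m/\eps^2$ makes the $\rho\ln m/\eps$ term and the $N$ term of the same order, and gives $C_e \le (1+O(\eps))N$.

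Both conclusions of the theorem now follow directly. By guarantee (i) each $f^i$ has value $F$ in a successful iteration, so $|\of| = \tfrac{(1-\eps)^2}{1+\eps}F = (1-O(\eps))F$. By the triangle inequality $\cong{\of}{e} \le \tfrac{(1-\eps)^2}{(1+\eps)N}C_e \le \tfrac{(1-\eps)^2(1+O(\eps))}{1+\eps}$, and a short calculation shows this is at most $1$ for every $\eps < 1/2$, so $\of$ is feasible. When $F \le \maxflow$ the oracle succeeds on every query by Definition~\ref{def:maxflow_oracle}, so a valid $\of$ is always produced; when $F > \maxflow$ the algorithm either returns such a flow or \fail. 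Binary searching over $F$ on the polynomial range established in Section~\ref{sec:max_flow_prelim} loses only a logarithmic factor, which folds into the $\Ot{\cdot}$ and gives the claimed running time $\Ot{\rho\eps^{-2}\cdot T(m,1/\eps,U)}$.

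The only real obstacle is bookkeeping the $\eps$-dependent slack. The $(1-\eps/2)$ loss in the lower bound on $\ln w^N_e$ has to survive comparison with the $(1+\eps)$ factor in the upper bound on $\Phi^N$, and the explicit scaling $(1-\eps)^2/(1+\eps)$ chosen in Figure~\ref{fig:multiplicative_alg} must be tuned so that $\cong{\of}{e}\le 1$ is provably achieved while still leaving $|\of| \ge (1-O(\eps))F$. Conceptually there is no new idea beyond a direct adaptation of the Plotkin--Shmoys--Tardos / Arora--Hazan--Kale framework to the oracle guarantees in Definition~\ref{def:maxflow_oracle}.
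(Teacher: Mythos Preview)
Your proposal is correct and follows essentially the same argument as the paper's proof in Section~\ref{sec:convergence_of_mw}: both track $|\ww^i|_1$, bound it above via oracle guarantee~(ii) (the paper's Lemma~\ref{lem:upper_sum_potential}), bound each $w_e^N$ below via guarantee~(iii) and a Taylor estimate on $\ln(1+x)$ (the paper's Lemma~\ref{lem:single_edge_potential}), and combine to show $\cong{\of}{e}\le 1$. The only cosmetic difference is that you use the slightly sharper estimate $\ln(1+x)\ge x(1-\eps/2)$ where the paper uses $\ln(1+x)\ge x(1-\eps)$, but the structure, the choice $N=2\rho\ln m/\eps^2$, and the final squeeze are identical.
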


Note that the number of iterations of the algorithm above
   grows linearly with the value of $\rho$, 
   which we call the $\emph{width}$ of the oracle.
Intuitively, this should be necessary because the final flow across an edge $e$
  is equal to the average of the flows sent over it by 
  all $f^{i}$.
If we send $\rho\cdot u_e$ units of flow across the edge $e$ in some step,
  then we will need at least $\Omega(\rho)$ iterations 
  to drop the average to $1$.%
\footnote{ Strictly speaking, it is possible that we could do better than this by sending a
  large amount of flow across the edge in the opposite direction.  However, nothing in our algorithm 
  aims to obtain this kind of cancellation, so we shouldn't expect to be able to systematically exploit it.}

\subsection{Constructing an Oracle of Width \texorpdfstring{$3\sqrt{m/\eps}$}{3 sqrt(m/epsilon)} Using Electrical Flows}\label{sec:sqrtn_oracle}
Given Theorem \ref{thm:multiplicative},
 our problem is thus reduced to designing 
 an efficient oracle that has a small width.
In this subsection, we will give simple $\Ot{m \log \epsilon^{-1} }$
  time implementation of an $(\eps, 3\sqrt{m/\eps})$-flow oracle for any $0<\epsilon<1/2$.
By Theorem \ref{thm:multiplicative}, this will immediately yield an 
  $\Ot{m^{3/2}\eps^{-5/2}}$ time algorithm for finding an 
  approximately maximum flow. 

To build such an oracle, we set 
\begin{equation}\label{eq:resistances_oracle}
r_{e}:=\frac{1}{u_e^2}\left(w_{e}+\frac{\eps |\ww|_{1}}{3m} \right)
\end{equation}
 for each edge $e$, and we use the procedure from 
 Theorem~\ref{thm:computing_electrical_flow} to  approximate the electrical flow 
 that sends $F$ units of flow from $s$ to $t$ in a network 
 whose resistances are given by the $r_e$.   
The pseudocode for this oracle is shown in Figure~\ref{fig:oracle_simple}.

\IncMargin{0.5em}
\begin{algorithm}
\DontPrintSemicolon
\SetAlFnt{\small \sf}
\AlFnt
\SetKw{Return}{return}
\SetKwInOut{Input}{Input}\SetKwInOut{Output}{Output}
\Input{A graph $G=(V,E)$ with capacities $\{u_{e}\}_{e}$, a target
  flow value $F$, and edge weights  $\{w_e\}_e$} 
\Output{Either a flow $\tf$, or 
  $\fail$ indicating that $F>\maxflow$}
\BlankLine
$r_{e}\leftarrow \frac{1}{u_e^2}\left(w_{e}+\frac{\eps |\ww|_{1}}{3m} \right)$ for each $e\in E$\;

Find an $(\epsilon/3)$-approximate electrical flow $\tf$ using Theorem \ref{thm:computing_electrical_flow} on $G$ with resistances $\rr$ and target \\  \quad flow value $F$ \;
\lIf{\AlFnt{$\energy{\rr}{\tf}>(1+\eps)|\ww|_{1}$}}{\AlFnt{\Return{$\fail$}}}\;
\lElse{\Return $\tf$}\;
\caption{A simple implementation of an $\left(\epsilon, 3\sqrt{m/\epsilon}\right)$ oracle}\label{fig:oracle_simple}
\end{algorithm}\DecMargin{0.5em}

We now show that the resulting flow $\tf$ has the properties required by
Definition~\ref{def:maxflow_oracle}.  
Since $|\tf|=F$ by construction, we only need to demonstrate 
  the bounds on the average congestion (weighted by the $w_e$) and 
  the maximum congestion. 
We will use the basic fact that electrical flows minimize
  the energy of the flow.
Our analysis will then compare the energy of $\tf$ with that of an optimal
  max flow.
Intuitively, the $w_e$ term in  
  Equation~\eqref{eq:resistances_oracle} guarantees the bound on the
  average congestion, while the $\eps |\ww|_{1}/(3m)$ term guarantees
  the bound on the maximum congestion.

Suppose $f^{*}$ is a maximum flow.
By its feasibility, $\cong{f^{*}}{e}\leq 1$ for all $e$, so
\begin{align*}
\energy{\rr}{f^{*}}
&=\sum_{e} \left(w_{e}+\frac{\eps |\ww|_{1}}{3m}\right)\left(\frac{f^{*}(e)}{u_{e}}\right)^{2}\\
&=\sum_{e} \left(w_{e}+\frac{\eps |\ww|_{1}}{3m}\right)\left(\cong{f^{*}}{e}\right)^{2}\\
&\leq \sum_{e} \left( w_{e}+\frac{\eps |\ww|_{1}}{3m} \right)\\
&=\left(1+\frac{\eps}{3}\right)|\ww|_{1}.\\ 
\end{align*}

Since the electrical flow minimizes the energy, $\energy{\rr}{f^{*}}$ is an upper bound on the energy of the electrical flow of value $F$ whenever $F\leq \maxflow$.
In this case, Theorem~\ref{thm:computing_electrical_flow} implies that our $(\epsilon/3)$-approximate electrical flow satisfies
\begin{equation}\label{eq:approx_flow_energy_bound}
 \energy{\rr}{\tf}\leq \left(1+\frac{\eps}{3}\right)\energy{\rr}{f^{*}}\leq \left(1+\frac{\eps}{3}\right)^{2}|\ww|_{1}
\leq \left(1+\eps\right)|\ww|_{1}.
\end{equation}
This shows that our oracle will never output $\fail$ when $F\leq \maxflow$. 
It thus suffices to show that the energy bound $\energy{\rr}{\tf}>(1+\eps)|\ww|_{1}$, which holds whenever the algorithm does not return $\fail$, implies the required bounds on the average and worst-case congestion.  To see this, we note that the energy bound implies that 
\begin{equation}\label{eq:avg_bound} 
\sum_{e} w_{e}\left(\cong{\tf}{e}\right)^{2}\leq\left(1+{\eps}\right)|\ww|_{1},
\end{equation}
and, for all $e\in E$,
\begin{equation}\label{eq:max_bound}
\frac{\eps |\ww|_{1}}{3m}\left(\cong{\tf}{e}\right)^{2}\leq \left(1+\eps\right)|\ww|_{1}.
\end{equation}
By the Cauchy-Schwarz inequality,
\begin{equation}\label{eq:avg_bound2}
\left(\sum_e w_e \cong{\tf}{e}\right)^2\leq 
  |\ww|_{1} \left(\sum_e w_e\left(\cong{\tf}{e}\right)^{2}\right),
\end{equation}
  so Equation~\eqref{eq:avg_bound} gives us that
 \begin{equation}\label{eq:avg_bound3}
 \sum_e w_e \cong{\tf}{e} \leq \sqrt{1+\eps}\, |\ww|_{1}<  \left(1+\eps\right)|\ww|_{1},
 \end{equation}
  which is the required bound on the average congestion.  
Furthermore, Equation~\eqref{eq:max_bound} and the fact that $\epsilon < 1/2$ implies that

\[
\cong{\hf}{e}
 \leq \sqrt{\frac{3m(1+\epsilon)}{\epsilon}} 
 \leq  3 \sqrt{m/\eps}
\]
  for all $e$, which establishes the required bound on the maximum congestion.  
So our algorithm implements an $(\eps, 3 \sqrt{m/\eps})$-oracle, as desired.

To bound the running time of this oracle, 
  recall that we can assume all edge capacities lie between
  $1$ and $U = m^{2}/ \epsilon$ and compute
\begin{equation}\label{eq:R_bound}
R=\max_{e,e'} \frac{r_{e}}{r_{e'}}
\leq 
U^{2} \frac{\max_{e} r_{e} + \epsilon \abs{\ww}_{1}}{ \epsilon \abs{\ww}_{1}} 
\leq 
U^{2} \frac{m + \epsilon}{\epsilon }=O\left((m/\eps)^{O(1)}\right).
\end{equation}

This establishes an upper bound on the ratio
  of the largest resistance to the smallest resistance.
Thus, by Theorem \ref{thm:computing_electrical_flow}, the running time of
this implementation is $\Ot{m\log R/\eps}=\Ot{m \log 1/\eps}$.  
Combining this with Theorem \ref{thm:multiplicative}, we have shown
\begin{theorem}\label{thm:result_basic}
  For any $0<\eps<1/2$, the maximum flow problem can be $(1-\eps)$-approximated in $\Ot{m^{3/2}\eps^{-5/2}}$ time.
\end{theorem}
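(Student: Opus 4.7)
The plan is to simply combine the oracle construction from Section~\ref{sec:sqrtn_oracle} with the multiplicative-weights meta-theorem (Theorem~\ref{thm:multiplicative}), since essentially all the nontrivial work has already been done in the preceding discussion. The proof will therefore be a short accounting argument that verifies the three hypotheses of Theorem~\ref{thm:multiplicative} for the algorithm in Figure~\ref{fig:oracle_simple}, and then multiplies the resulting factors.

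First I would formally confirm that the procedure of Figure~\ref{fig:oracle_simple} is an $(\eps, 3\sqrt{m/\eps})$-oracle in the sense of Definition~\ref{def:maxflow_oracle}. The throughput condition (i) is immediate because Theorem~\ref{thm:computing_electrical_flow} returns an $s$-$t$ flow of exactly value $F$. For condition (ii), I would invoke the energy upper bound $\energy{\rr}{\tf}\leq(1+\eps)|\ww|_1$, which holds whenever the oracle does not return $\fail$ (by its guard), and combine it with the Cauchy--Schwarz chain \eqref{eq:avg_bound}--\eqref{eq:avg_bound3} already written out. For condition (iii), I would use the fact that every edge's resistance contains the $\eps|\ww|_1/(3m)$ term, which via \eqref{eq:max_bound} forces $\cong{\tf}{e}\leq 3\sqrt{m/\eps}$. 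Finally, I would check that when $F\leq\maxflow$, the feasible maximum flow $f^*$ certifies $\energy{\rr}{f^*}\leq(1+\eps/3)|\ww|_1$, so by energy minimality of the true electrical flow and the $(\eps/3)$-approximation guarantee of Theorem~\ref{thm:computing_electrical_flow}, the guard is never triggered and the oracle does not output $\fail$ --- this is exactly the computation culminating in \eqref{eq:approx_flow_energy_bound}.

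Next I would bound the per-call running time of the oracle. Under the standing reduction to capacities in $[1, 2m^2/\eps]$ described in Section~\ref{sec:max_flow_prelim}, the resistance ratio satisfies $R = O((m/\eps)^{O(1)})$ by \eqref{eq:R_bound}, so Theorem~\ref{thm:computing_electrical_flow} runs in time $\Ot{m\log(R/\eps)} = \Ot{m\log(1/\eps)}$. Thus we have an $(\eps, 3\sqrt{m/\eps})$-oracle with running time $T(m,1/\eps,U) = \Ot{m\log(1/\eps)}$.

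To finish, I would plug $\rho = 3\sqrt{m/\eps}$ and this $T$ into Theorem~\ref{thm:multiplicative} to obtain total running time
\[
\Ot{\rho\,\eps^{-2}\cdot T(m,1/\eps,U)} \;=\; \Ot{\sqrt{m/\eps}\cdot \eps^{-2}\cdot m} \;=\; \Ot{m^{3/2}\eps^{-5/2}},
\]
and note that binary search over $F$ (seeded by the bottleneck-path crude bound from Section~\ref{sec:max_flow_prelim}) only contributes an additional $O(\log(mU))$ factor, which is absorbed into the $\Ot{\cdot}$ notation. Re-parameterizing $\eps$ by a constant to convert the $(1-O(\eps))$-approximation of Theorem~\ref{thm:multiplicative} into a $(1-\eps)$-approximation is harmless. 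There is no real obstacle here --- the only mildly delicate point is making sure that all three conditions of Definition~\ref{def:maxflow_oracle} follow from the single energy bound $\energy{\rr}{\tf}\leq(1+\eps)|\ww|_1$ (and the matching upper bound from the feasible maximum flow when $F\leq \maxflow$), but this is exactly what the computations \eqref{eq:approx_flow_energy_bound}--\eqref{eq:avg_bound3} already establish.
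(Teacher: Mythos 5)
Your proposal is correct and follows essentially the same route as the paper: verify the three oracle conditions via the energy bound and Cauchy--Schwarz (Equations~\eqref{eq:approx_flow_energy_bound}--\eqref{eq:avg_bound3} and \eqref{eq:max_bound}), bound the per-call cost through the resistance-ratio estimate~\eqref{eq:R_bound}, and then multiply through in Theorem~\ref{thm:multiplicative} with $\rho=3\sqrt{m/\eps}$. The binary-search remark and the constant re-parameterization of $\eps$ are also exactly how the paper closes the argument, so there is nothing to add.
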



\subsection{The Convergence of Multiplicative Weights}
\label{sec:convergence_of_mw}

In this section, we prove Theorem \ref{thm:multiplicative} 
  by analyzing the multiplicative weights update
  algorithm shown in Figure~\ref{fig:multiplicative_alg}.

Our analysis will be based on the potential function
$\mu_{i}:=|\ww^{i}|_{1}$.  
Clearly, $\mu_{0}=m$ and this potential
 only increases during the course of the algorithm. 
It follows from  condition (1.ii)
  of Definition \ref{def:maxflow_oracle}
  that if we run the $(\eps,\rho)$-oracle $\oracle$ with $F =
  \maxflow$,
  then 

\begin{equation}\label{eq:oracle_cond_weight}
\sum_{e} w_{e}^{i} \cong{f^{i+1}}{e}\leq (1+\eps)|\ww^{i}|_{1}, \ \ \text{for all $i\geq 1$}.
\end{equation}
By condition (1.iii) of Definition \ref{def:maxflow_oracle},
\begin{equation}\label{eq:oracle_cond_rho}
\cong{f^{i}}{e}\leq \rho, \ \ \text{for all $i\geq 1$ and any edge $e$}.
\end{equation}

We start by upper bounding the total growth of $\mu_{i}$ thoughout the algorithm.

\begin{lemma}\label{lem:upper_sum_potential}
For any $i\geq 0$, $$\mu_{i+1}\leq \mu_{i}\exp\left(\frac{(1+\eps)\eps}{\rho}\right).$$ 
In particular, $\|\ww^{N}\|_{1}=\mu_{N}\leq m\exp\left(\frac{(1+\eps)\eps}{\rho}N\right)=n^{O(1/\eps)}$.
\end{lemma}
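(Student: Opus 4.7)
The plan is to track $\mu_i = |\ww^i|_1$ directly through the update rule in line 7 of the algorithm in Figure~\ref{fig:multiplicative_alg}. Writing $\mu_{i+1} = \sum_e w_e^{i+1}$ and substituting $w_e^{i+1} = w_e^i\bigl(1 + \tfrac{\eps}{\rho}\cong{f^{i+1}}{e}\bigr)$, the sum splits into $\mu_i$ plus $\tfrac{\eps}{\rho}\sum_e w_e^i\cong{f^{i+1}}{e}$. The first key step is to bound this cross term using the average-congestion guarantee \eqref{eq:oracle_cond_weight} of the oracle, namely $\sum_e w_e^i \cong{f^{i+1}}{e} \leq (1+\eps)\mu_i$. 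This gives $\mu_{i+1} \leq \mu_i\bigl(1 + \tfrac{(1+\eps)\eps}{\rho}\bigr)$.

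The second step is to convert the multiplicative factor into the exponential form stated in the lemma. I will use the elementary inequality $1+x \leq e^{x}$ with $x = (1+\eps)\eps/\rho$ to get $\mu_{i+1}\leq \mu_i \exp\bigl(\tfrac{(1+\eps)\eps}{\rho}\bigr)$, which is exactly the single-step bound claimed.

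Iterating this bound $N$ times and recalling that $\mu_0 = \sum_e 1 = m$, I obtain $\mu_N \leq m\exp\!\bigl(\tfrac{(1+\eps)\eps N}{\rho}\bigr)$. Plugging in the value $N = 2\rho\ln m/\eps^2$ set by the algorithm, the exponent becomes $\tfrac{2(1+\eps)\ln m}{\eps}$, so $\mu_N \leq m^{1 + 2(1+\eps)/\eps} = m^{O(1/\eps)}$, matching the second assertion.

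There is no real obstacle here: the argument is a one-line application of the oracle's average-congestion guarantee followed by the standard $1+x\leq e^x$ trick and telescoping. The only thing to be careful about is that \eqref{eq:oracle_cond_weight} as stated requires $F \leq F^*$ so that the oracle does not fail; but this is exactly the regime in which the algorithm continues past the \texttt{if} check on line 4, and for all other iterations the lemma is vacuous since we would have returned \fail\ and never formed $\mu_{i+1}$. I would note this compatibility explicitly at the start of the proof.
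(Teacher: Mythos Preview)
Your proposal is correct and follows essentially the same argument as the paper: expand $\mu_{i+1}$ via the update rule, bound the cross term using the oracle's average-congestion guarantee~\eqref{eq:oracle_cond_weight}, apply $1+x\le e^x$, and telescope from $\mu_0=m$. One minor remark: the oracle guarantee (ii) holds whenever the oracle returns a flow (even if $F>\maxflow$), so your caveat about $F\le\maxflow$ is unnecessary---the bound applies in every iteration that does not terminate with \fail.
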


\begin{proof}
For any $i\geq 0$, we have

\[
\mu_{i+1}=\sum_{e} w_{e}^{i+1}=\sum_{e} w_{e}^{i}\left(1+\frac{\eps}{\rho} \cong{f^{i+1}}{e}\right)=\sum_{e} w_{e}^{i}+ \frac{\eps}{\rho}\sum_{e} w_{e}^{i} \cong{f^{i+1}}{e}\leq \mu_{i}+\frac{(1+\eps)\eps}{\rho}|\ww^{i}|_{1}, 
\] 
where the last inequality follows from \eqref{eq:oracle_cond_weight}. 
Thus, we can conclude that

\[
\mu_{i+1}\leq \mu_{i}+\frac{(1+\eps)\eps}{\rho}|\ww^{i}|_{1}=\mu_{i}(1+\frac{(1+\eps)\eps}{\rho})\leq \mu_{i}\exp\left(\frac{(1+\eps)\eps}{\rho}\right),
\]
as desired. The lemma follows.
\end{proof}

One of the consequences of the above lemma is that whenever we make a call to the oracle, the total weight $\|\ww^{i}\|_{1}$ is at most $n^{O(1/\eps)}$. 
Thus, the running time of our algorithm is $\Ot{\rho\eps^{-2}\cdot T(m,1/\eps,U,m^{O(1/\eps)})}$ as claimed.

Next, we bound the final weight $w_{e}^{N}$ of a particular edge $e$ with the congestion $\cong{\of}{e}$ that this edge suffers in our final flow $\of$.

\begin{lemma}\label{lem:single_edge_potential}
For any edge $e$ and $i\geq 0$, $$w_{e}^{i}\geq \exp\left(\frac{(1-\eps)\eps}{\rho} \sum_{j\geq 1}^{i} \cong{f^{j}}{e}\right).$$ 
In particular,
$w_{e}^{N}\geq \exp\left(\frac{(1+\eps)\eps N}{(1-\eps)\rho} \cong{\of}{e}\right)$.
\end{lemma}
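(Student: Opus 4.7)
My plan is to prove the bound $w_e^i \geq \exp\bigl(\tfrac{(1-\epsilon)\epsilon}{\rho}\sum_{j=1}^{i}\cong{f^j}{e}\bigr)$ by taking logarithms of the multiplicative update and applying a standard concavity bound for $\ln(1+x)$ on a small interval. Specifically, since the oracle guarantees $\cong{f^j}{e} \leq \rho$ (Condition (1.iii) of Definition \ref{def:maxflow_oracle}), the quantity $x_j := \frac{\epsilon}{\rho}\cong{f^j}{e}$ always lies in $[0, \epsilon]$, and for such $x$ one has the elementary inequality $\ln(1+x) \geq (1-\epsilon)x$ (valid for $\epsilon \leq 1/2$, which follows from $\ln(1+x) \geq x - x^2/2$ together with $x \leq \epsilon$). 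I would verify this inequality in one line and then unroll the recurrence.

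The main calculation is a telescoping sum: starting from $w_e^0 = 1$ and $w_e^j = w_e^{j-1}(1 + x_j)$, taking logs gives $\ln w_e^i = \sum_{j=1}^{i} \ln(1 + x_j) \geq (1-\epsilon)\sum_{j=1}^{i} x_j = \tfrac{(1-\epsilon)\epsilon}{\rho}\sum_{j=1}^{i}\cong{f^j}{e}$, and exponentiating yields the claimed bound.

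For the ``in particular'' consequence about $w_e^N$, I would translate the sum of per-step congestions into the congestion of the averaged flow $\of = \tfrac{(1-\epsilon)^2}{(1+\epsilon)N}\sum_{i=1}^{N} f^i$. By the triangle inequality applied on the edge $e$, $|\of(e)| \leq \tfrac{(1-\epsilon)^2}{(1+\epsilon)N}\sum_{i}|f^i(e)|$, and dividing by $u_e$ gives $\cong{\of}{e} \leq \tfrac{(1-\epsilon)^2}{(1+\epsilon)N}\sum_{i}\cong{f^i}{e}$. Rearranging yields $\sum_{i=1}^{N}\cong{f^i}{e} \geq \tfrac{(1+\epsilon)N}{(1-\epsilon)^2}\cong{\of}{e}$, and plugging into the first part of the lemma gives
\[
 w_e^N \geq \exp\!\left(\frac{(1-\epsilon)\epsilon}{\rho}\cdot\frac{(1+\epsilon)N}{(1-\epsilon)^2}\cong{\of}{e}\right) = \exp\!\left(\frac{(1+\epsilon)\epsilon N}{(1-\epsilon)\rho}\cong{\of}{e}\right),
\]
which is exactly the stated bound.

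There is no real obstacle here; the only point requiring care is verifying the logarithmic inequality on the correct interval (this is why the assumption $\epsilon < 1/2$ matters), and remembering that the triangle inequality is all that is available to relate $\cong{\of}{e}$ to the individual $\cong{f^i}{e}$, since the per-step flows on $e$ could a priori cancel (but our lower bound only needs that they do not cancel too much, which the triangle inequality guarantees in the correct direction). Combined with Lemma \ref{lem:upper_sum_potential}, these two bounds on $w_e^N$ from above and below will then be squeezed together to prove Theorem \ref{thm:multiplicative}.
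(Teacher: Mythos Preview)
Your proposal is correct and essentially identical to the paper's proof: the paper uses the equivalent inequality $1+\epsilon x \geq \exp((1-\epsilon)\epsilon x)$ for $x\in[0,1]$ (with $x=\cong{f^j}{e}/\rho$) and multiplies, whereas you take logs and sum, but this is the same argument. You are in fact slightly more explicit than the paper in invoking the triangle inequality for the ``in particular'' step, where the paper simply writes an equality that is really the inequality you derive.
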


\begin{proof}
For any $i\geq 0$, we have
\[
w_{e}^{i}=\prod_{j\geq 1}^{i} \left(1+\frac{\eps}{\rho}\cong{f^{j}}{e}\right)\geq \prod_{j\geq 1}^{i} \exp\left(\frac{(1-\eps)\eps}{\rho}\cong{f^{j}}{e}\right),
\] 
where we used \eqref{eq:oracle_cond_rho} and that for any $1/2>\eps>0$ and $x\in [0,1]$:
\[
\exp((1-\eps)\eps x)\leq (1-\eps x).
\]

Now, the lemma follows since
\[
w_{e}^{i}\geq \prod_{j\geq 1}^{i} \exp\left(\frac{(1-\eps)\eps}{\rho}\cong{f^{j}}{e}\right) = \exp\left(\frac{(1-\eps)\eps}{\rho} \sum_{j\geq 1}^{i} \cong{f^{j}}{e}\right),
\]
and for $i=N$
\[
w_{e}^{N}\geq \exp\left(\frac{(1-\eps)\eps}{\rho} \sum_{j\geq 1}^{N} \cong{f^{j}}{e}\right)=\exp\left(\frac{(1+\eps)\eps N}{(1-\eps)\rho} \cong{\of}{e}\right).
\]
\end{proof}
Finally, by Lemmas \ref{lem:upper_sum_potential} and \ref{lem:single_edge_potential}, we conclude that  for any edge $e$,
\[
m\exp\left(\frac{(1+\eps)\eps N}{\rho}\right)\geq \mu_{N}=|\ww^{N}|_{1}\geq w_{e}^{N}\geq \exp\left(\frac{(1+\eps)\eps N}{(1-\eps)\rho} \cong{\of}{e}\right).
\]
This implies that
\[
\cong{\of}{e}\leq 1-\eps + \frac{(1-\eps) \rho \ln m}{(1+\eps)\eps N}=1-\eps + \frac{\eps(1-\eps)}{2(1+\eps)}\leq 1
\]
  for every edge $e$. 
Thus, we see that $\of$ is a feasible $s$-$t$ flow and, since each $f^{i}$ has throughput $\maxflow$, the throughput $\thr{\of}$ of $\of$   
  is $\frac{(1-\eps)^{2}}{(1+\eps)} \maxflow\geq (1-O(\eps))\maxflow$ for $1/2>\eps>0$, as desired.


\section{An \texorpdfstring{$\Ot{mn^{1/3}\eps^{-11/3}}$}{\~O(mn\textasciicircum(1/3) eps\textasciicircum(-11/3))} Algorithm for Approximate Maximum Flow}
\label{sec:improved_flow}

In this section, we modify our algorithm to run in time $\Ot{m^{4/3}\eps^{-3}}$. We then combine this with the smoothing and sampling techniques of Karger \cite{Karger98} to obtain an $\Ot{mn^{1/3}\eps^{-11/3}}$-time algorithm.  

For fixed $\epsilon$, the algorithm in the previous section required us to compute $\Ot{m^{1/2}}$ 
  electrical flows, each of which took time $\Ot{m}$, which led to a running time of $\Ot{m^{3/2}}$.  
To reduce this to $\Ot{m^{4/3}}$, we'll show how to find an approximate flow while computing only $\Ot{m^{1/3}}$ electrical flows.

Our analysis of the oracle from Section~\ref{sec:sqrtn_oracle} was fairly simplistic, and one might hope to improve the running time 
  of the algorithm by proving a tighter bound on the width.
Unfortunately, the graph in Figure~\ref{badgraph} shows that our analysis was essentially tight.  
The graph consists of $k$ parallel paths of length $k$ connecting $s$ to $t$, along  with a single
  edge $e$ that directly connects $s$ to $t$.  
The max flow in this graph is $k+1$.  
In the first call made to the oracle by the multiplicative weights routine, all of the edges will
  have the same resistance.  
In this case, the electrical flow of value $k+1$ will send $(k+1)/2k$ units of flow along each of the $k$ paths 
  and $(k+1)/2$ units of flow across $e$.  
Since the graph has $m=\Theta(k^2)$, the width of the oracle in this case is $\Theta(m^{1/2})$.

\begin{figure}[htbp]
\begin{center}
\includegraphics[height=3in]{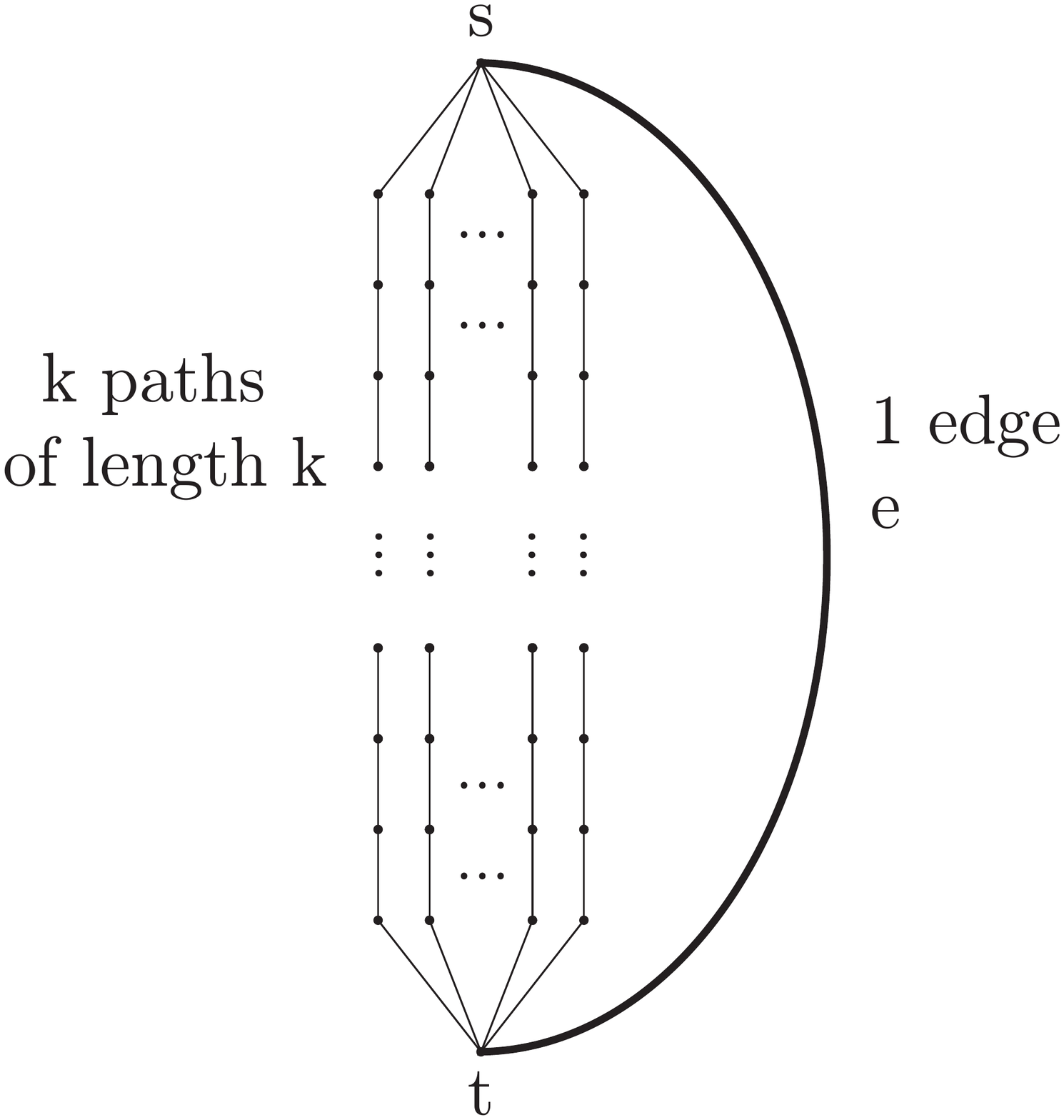}
\caption{A graph on which the electrical flow sends approximately $\sqrt{m}$ units of flow across an edge when sending the maximum flow $\maxflow$ from $s$ to $t$.}
\label{badgraph}
\end{center}
\end{figure}

\subsection{The Improved Algorithm}

The above example shows that it is possible for the electrical flow returned by the oracle to exceed the 
  edge capacities by  $\Theta(m^{1/2})$.  
However, we note that if one removes the edge $e$ from the graph in Figure~\ref{badgraph}, the electrical flow on the 
  resulting graph is much better behaved, but the value of the maximum flow is only very slightly reduced.  This demonstrates a phenomenon that will be central to our improved algorithm:   while instances in which the electrical flow sends a huge amount of flow over some edges exist, they are somewhat fragile, and they are often drastically improved by removing the bad edges.

This motivates us to modify our algorithm as follows.  We'll set $\rhop$ to be some value smaller than the actual worst-case bound of $\Ot{m^{1/2}}$.  (It will end up being $\Ot{m^{1/3}}$.)
The oracle will begin by computing an electrical flow as before.  
However, when this electrical flow exceeds the capacity of some edge $e$ by a factor greater than $\rhop$, we'll 
  remove $e$ from the graph and try again, keeping all of the other weights the same.  
We'll repeat this process until we obtain a flow in which all edges flow at most a factor of $\rhop$ times their capacity
  (or some failure condition is reached), and we'll use this flow in our multiplicative weights routine.  
When the oracle removes an edge, it is added to a set $H$ of \emph{forbidden edges}.  
These edges will be permanently removed from the graph, i.e., they will not be included in the graphs supplied to 
  future invocations of the oracle.

In Figures~\ref{fig:improved_oracle} and~\ref{fig:improved_alg}, we present the modified versions of the oracle and overall algorithm, where we have highlighted the 
  parts that have changed from the simpler version shown in Figures~\ref{fig:multiplicative_alg} and~\ref{fig:oracle_simple}.

\IncMargin{0.5em}
\begin{algorithm}
\DontPrintSemicolon
\SetAlFnt{\small \sf}
\SetKwComment{tcc}{$(*$ }{ $*)$}
\AlFnt
\SetKw{Return}{return}
\SetKwInOut{Input}{Input}\SetKwInOut{Output}{Output}
\Input{A graph $G=(V,E)$ with capacities $\{u_{e}\}_{e}$, a target flow value $F$, edge weights  $\{w_e\}_e$, \newcode{and a set $H$ of forbidden edges}}
\Output{Either a flow $\of$ and a set $H$ of forbidden edges, or $\fail$ indicating that $F>\maxflow$}
\BlankLine 
 \newcode{$\rhop\leftarrow \frac{8m^{1/3}\ln^{1/3} m}{\eps}$}\;
$r_{e}\leftarrow \frac{1}{u_e^2}\left(w_{e}+\frac{\eps |\ww|_{1}}{3m} \right)$ for each $e\in E\newcode{$\setminus H$}$\;

Find an approximate electrical flow $\tf$ using 
  Theorem \ref{thm:computing_electrical_flow} on 
  \newcode{$G_{H}:=(V,E\setminus H)$} with resistances $\rr$,\\
  \quad target flow value $F$,
  and parameter $\delta = \epsilon /3$. \;

{\lIf{\AlFnt{$\energy{\rr}{\tf}>(1+\eps)|\ww|_{1}$ or \newcode{$s$ and $t$ are disconnected in $G_{H}$}}}{\AlFnt{\Return{$\fail$}}}}\;
\newcode{\lIf{\AlFnt{there exists $e$ with $\cong{\tf}{e}>\rhop$}}{\AlFnt{add $e$ to $H$ and start over }}}\;
\Return $\tf$
\caption{The modified oracle $\oracle'$ used by our improved algorithm}\label{fig:improved_oracle}
\end{algorithm}\DecMargin{0.5em}

\IncMargin{0.5em}
\begin{algorithm}
\DontPrintSemicolon
\SetAlFnt{\small \sf}
\SetKwComment{tcc}{$(*$ }{ $*)$}
\AlFnt
\SetKw{Return}{return}
\SetKwInOut{Input}{Input}\SetKwInOut{Output}{Output}
\Input{A graph $G=(V,E)$ with capacities $\{u_{e}\}_{e}$, and a target flow value $F$;}
\Output{Either a flow $\of$, or $\fail$ indicating that $F>\maxflow$;}
\BlankLine
{Initialize $w_{e}^{0}\leftarrow 1$ for all edges $e$, \newcode{$H\leftarrow \emptyset$}, \newcode{$\rhop\leftarrow \frac{8m^{1/3}\ln^{1/3} m}{\eps}$}, and \newcode{$N\leftarrow \frac{2\rhop\ln m }{\eps^{2}}$}}\;
\For{$i:=1,\ldots,N$}{

Query $\oracle'$ with edge weights given by $\ww^{i-1}$, target flow value $F$, \newcode{and forbidden edge set $H$}\;
\lIf{\AlFnt{$\oracle$ returns $\fail$}}{\AlFnt{\Return $\fail$}}\;
\Else{
Let $f^{i}$ be the returned answer\:
\newcode{Replace $H$ with the returned (augmented) set of forbidden edges}\;
$w_{e}^{i}\leftarrow w_{e}^{i-1}(1+\frac{\eps}{\rho}\cong{f^{i}}{e})$ for each $e\in E$\;
}
}
\Return $\of\leftarrow \frac{(1-\eps)^2}{(1+\eps)N}(\sum_{i} f^{i})$\;
\caption{An improved $(1-O(\eps))$-approximation algorithm for the maximum flow problem}\label{fig:improved_alg}
\end{algorithm}\DecMargin{0.5em}

\subsection{Analysis of the New Algorithm}
Before proceeding to a formal analysis of the new algorithm, it will be helpful to examine what is already guaranteed by 
  the analysis from Section~\ref{sec:simple}, and what we'll need to show to demonstrate the algorithm's correctness and bound its 
  running time.  

We first note that, by construction, the congestion of any edge in the flow $\tf$ returned by the modified oracle from 
  Figure~\ref{fig:improved_oracle}  will be bounded by $\rhop$.
Furthermore, it enforces the bound $\energy{\rr}{\tf}\leq (1+\eps)|\ww|_{1}$; by 
  Equations~\eqref{eq:avg_bound}, \eqref{eq:avg_bound2}, and~\eqref{eq:avg_bound3} in Section~\ref{sec:sqrtn_oracle}, this guarantees that $\tf$ will meet the weighted average congestion bound required 
  for a $(\epsilon,\rhop)$-oracle.  
So, as long as the modified oracle always successfully returns a flow, it will function as an $(\epsilon,\rhop)$-oracle,
  and our analysis from Section~\ref{sec:simple} will show that the multiplicative update scheme employed by our algorithm will yield
  an approximate maximum flow after $\Ot{\rhop}$ iterations.

Our problem is thus reduced to understanding the behavior of the modified oracle.  
To prove correctness, we will need to show that whenever the modified oracle is called with $F\leq \maxflow$, it will 
  return some flow $\tf$ (as opposed to returning $\fail$).  
To bound the running time, we will need to provide an upper bound on the total number of electrical flows computed 
  by the modified oracle throughout the execution of the algorithm.

To this end, we will show the following upper bound on the cardinality  $|H|$ and the capacity $u(H)$ of the set of 
  forbidden edges, whose proof we postpone until the next section:
\begin{lemma}\label{lem:cardinality_capacity_bound}
Throughout the execution of the algorithm, $$|H|\leq\frac{30 m\ln m}{\eps^{2}\rhop^{2}}$$ and $$u(H)\leq \frac{30 m F \ln m}{\eps^{2}\rhop^{3}}.$$
\end{lemma}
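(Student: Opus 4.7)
The plan is to prove both bounds by tracking the effective resistance $\Reff{\rr}$ (evaluated on the current forbidden graph $G_{H}$) as a globally nondecreasing potential throughout the algorithm, using Lemma~\ref{lem:effect_of_res_incr} to control the gain per forbidding.

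For the cardinality bound, I would first show that each time the inner loop of $\oracle'$ forbids an edge $e$, this edge accounts for at least a $\beta = \Omega(\eps\rhop^2/m)$ fraction of $\energy{\rr}{\tf}$, and hence (by Theorem~\ref{thm:computing_electrical_flow}) the same order of fraction of the exact electrical-flow energy. Indeed, having passed the energy check we have $\energy{\rr}{\tf} \leq (1+\eps)\abs{\ww}_{1}$, while the congestion condition $\cong{\tf}{e} > \rhop$ combined with the uniform floor $r_e \geq \eps\abs{\ww}_{1}/(3m u_e^{2})$ built into the oracle's resistance definition forces $r_e \tf(e)^{2} > \eps\rhop^{2}\abs{\ww}_{1}/(3m)$. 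Lemma~\ref{lem:effect_of_res_incr} (in the $\gamma=\infty$ special case) then implies that removing $e$ multiplies $\Reff{\rr}$ by at least $1/(1-\beta) \geq 1 + \Omega(\eps\rhop^{2}/m)$. Between consecutive calls to $\oracle'$ the multiplicative-weights update only increases each $w_e$, and hence each $r_e$, so Rayleigh monotonicity (Corollary~\ref{cor:rayleigh_mon}) guarantees $\Reff{\rr}$ does not drop across calls either. Thus $\Reff{\rr}$ is globally nondecreasing and gains a factor $1 + \Omega(\eps\rhop^{2}/m)$ every time $H$ is augmented.

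To close Part~1 I would bound the total multiplicative range of $\Reff{\rr}$. A lower bound comes from plugging $\phi = \mathbf{1}_{s}$ into Fact~\ref{fa:effective_cond_express}, yielding $\Reff{\rr} \geq 1/\sum_{e\ni s}c_e \geq 1/\mathrm{poly}(m,1/\eps)$ at initialization and hence at all later times. An upper bound holds throughout: using the feasible max flow of $G_{H}$ as a test flow gives $\Reff{\rr} \leq O(\abs{\ww}_{1})$ as long as $s$ and $t$ are still connected in $G_{H}$ (otherwise the oracle would already have returned \fail), and $\abs{\ww}_{1} \leq m^{O(1/\eps)}$ by Lemma~\ref{lem:upper_sum_potential}. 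The total $\log$-range is therefore $O(\log m/\eps)$, and dividing by the per-edge gain $\Omega(\eps\rhop^{2}/m)$ gives $|H| \leq O(m\log m/(\eps^{2}\rhop^{2}))$, matching the claim up to constants. For the capacity bound I would then observe that at the instant $e$ is forbidden, $|\tf(e)| \leq F$ up to a negligible error: the exact electrical flow of value $F$ is a potential flow, hence has no directed cycle when each edge is oriented by the sign of its flow, and so decomposes as a nonnegative combination of $s$-$t$ paths of total mass $F$, giving $|f(e)| \leq F$; Theorem~\ref{thm:computing_electrical_flow}(b) propagates this bound to $\tf$. Combined with $\cong{\tf}{e} > \rhop$ this yields $u_e < |\tf(e)|/\rhop = O(F/\rhop)$, and summing over $e\in H$ with the cardinality bound gives $u(H) \leq O(mF\log m/(\eps^{2}\rhop^{3}))$.

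The main obstacle is the global bookkeeping for $\Reff{\rr}$ in Part~1. Within a single oracle call $\rr$ is fixed and $H$ grows, while between calls $\rr$ grows and $H$ is fixed, and a naive per-call bound of roughly $O(m\log m/(\eps^{2}\rhop^{2}))$ forbiddings summed over the $N \approx \rhop\log m/\eps^{2}$ calls would lose a factor of about $\rhop$ and fall short of the target. The observation that saves us is that both kinds of transition push $\Reff{\rr}$ monotonically upward, so per-forbidding gains telescope across the entire execution rather than resetting at the start of each call. Equally important is that the uniform floor $\eps\abs{\ww}_{1}/(3m u_e^{2})$ in the resistance definition makes the per-edge energy fraction $\beta$ bounded below independently of the current $w_e$, which is precisely what Lemma~\ref{lem:effect_of_res_incr} needs to deliver a multiplicative (rather than merely additive) bound on $\Reff{\rr}$.
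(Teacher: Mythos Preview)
Your proposal is correct and follows essentially the same strategy as the paper's proof (packaged there as Lemma~\ref{lem:phi_props}): both use $\Reff{\rr}$ as a globally nondecreasing potential, extract an $\Omega(\eps\rhop^{2}/m)$ energy fraction from the $\eps|\ww|_{1}/(3m)$ resistance floor to get a multiplicative gain per forbidding via Lemma~\ref{lem:effect_of_res_incr}, bound the total log-range by $O(\eps^{-1}\log m)$ using Lemma~\ref{lem:upper_sum_potential}, and then derive $u_{e}\leq F/\rhop$ from acyclicity of the potential flow. The only tactical differences are in the endpoint bounds on $\Reff{\rr}$---the paper gets its upper bound directly from the oracle's energy check $\energy{\rr}{\tf}\leq(1+\eps)|\ww|_{1}$ and its lower bound from a min-cut argument, whereas you use a feasible-flow test and the test potential $\mathbf{1}_{s}$ respectively---but both routes yield the same final estimate.
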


If we plug in the value $\rhop=(8m^{1/3}\ln^{1/3} m)/\epsilon$ used by the algorithm, Lemma~\ref{lem:cardinality_capacity_bound} gives the bounds
  $|H|\leq \frac{15}{32} (m \ln m)^{1/3}$ and $u(H)\leq \ \frac{15}{256}\epsilon F< \epsilon F/12$.

Given the above lemma, it is now straightforward to show the following theorem, which establishes the correctness 
  and bounds the running time of our algorithm.

\begin{theorem}
For any $0<\epsilon<1/2$,
  if $F\leq \maxflow$ the algorithm in Figure \ref{fig:improved_alg} will return  a feasible $s$-$t$ flow $\of$ of value $\thr{\of}=(1-O(\eps))F$ 
  in time $\Ot{m^{4/3}\eps^{-3}}$.
\end{theorem}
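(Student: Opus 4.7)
The plan is to establish that the modified subroutine in Figure \ref{fig:improved_oracle} is a genuine $(\epsilon,\rhop)$-oracle with $\rhop = 8m^{1/3}\ln^{1/3}m/\epsilon$ whenever $F\le \maxflow$, and then to invoke Theorem \ref{thm:multiplicative} on the outer procedure of Figure \ref{fig:improved_alg}. The maximum-congestion condition of Definition \ref{def:maxflow_oracle} is automatic, because the oracle's restart loop does not terminate until every edge has congestion at most $\rhop$. What remains is to verify that (a) the oracle never reaches the $\fail$ branches when $F\le\maxflow$, so it always returns a flow, and (b) the returned flow obeys the average-congestion bound. This will give an $(\epsilon,\rhop)$-oracle, after which Theorem \ref{thm:multiplicative} delivers a feasible $(1-O(\epsilon))$-approximate flow $\of$; its value equals $\frac{(1-\epsilon)^2}{1+\epsilon}F = (1-O(\epsilon))F$ since each $f^i$ has value $F$.

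The key step, and the main obstacle, is to show that the energy test inside the oracle does not trigger. Here I would use Lemma \ref{lem:cardinality_capacity_bound}, which, after substituting the chosen $\rhop$, gives $u(H) \le \epsilon F/12$ throughout the algorithm. A standard min-cut argument then yields $\maxflow(G_H) \ge \maxflow(G) - u(H) \ge F(1-\epsilon/12)$, so in particular $s$ and $t$ remain connected in $G_H$. Taking a feasible flow in $G_H$ of value $F(1-\epsilon/12)$ and rescaling it by $1/(1-\epsilon/12)$ produces a reference flow $f'$ in $G_H$ of value $F$ with $\cong{f'}{e}\le 1/(1-\epsilon/12)$ on every edge. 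Exactly as in Section \ref{sec:sqrtn_oracle}, the choice of resistances gives
\[
\energy{\rr}{f'} \;\le\; \frac{1}{(1-\epsilon/12)^2}\Bigl(1+\tfrac{\epsilon}{3}\Bigr)|\ww|_1.
\]
Combining this with the $(1+\epsilon/3)$ slack of the approximate electrical flow solver (Theorem \ref{thm:computing_electrical_flow} with $\delta=\epsilon/3$) and a short calculation for $\epsilon<1/2$ shows $\energy{\rr}{\tf}\le(1+\epsilon)|\ww|_1$, so the energy-based $\fail$ is never triggered. The average-congestion bound (ii) then follows from the same Cauchy--Schwarz argument as in Equations \eqref{eq:avg_bound}--\eqref{eq:avg_bound3}, noting that edges of $H$ carry zero flow in every $f^i$ and therefore contribute nothing to $\sum_e w_e\cong{f^i}{e}$, while the multiplicative weights analysis of Section \ref{sec:convergence_of_mw} itself is insensitive to any particular edge being forbidden in some iterations, so the resulting $\of$ is feasible in the original graph $G$.

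For the running time, I would count total electrical-flow computations. The outer loop runs for $N = 2\rhop\ln m/\epsilon^2 = \widetilde O(m^{1/3}/\epsilon^3)$ iterations. Each oracle invocation may perform several electrical-flow solves, but every solve that leads to a restart can be charged to the single edge it causes to be added to $H$; since $H$ grows monotonically across iterations and $|H|\le\widetilde O(m^{1/3})$ by Lemma \ref{lem:cardinality_capacity_bound}, the total number of such ``wasted'' solves across the entire execution is $\widetilde O(m^{1/3})$. Hence the grand total of electrical-flow computations is $N+|H| = \widetilde O(m^{1/3}/\epsilon^3)$. Each solve runs on a subgraph of $G$ with polynomially bounded resistance ratio (as in Equation \eqref{eq:R_bound}) and parameter $\delta=\epsilon/3$, so by Theorem \ref{thm:computing_electrical_flow} it takes $\widetilde O(m)$ time. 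The overall running time is therefore $\widetilde O(m^{4/3}\epsilon^{-3})$, completing the proof.
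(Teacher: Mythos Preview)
Your proposal is correct and follows essentially the same approach as the paper's own proof: use Lemma~\ref{lem:cardinality_capacity_bound} to bound $u(H)\le \epsilon F/12$, deduce that $G_H$ still supports a flow of value $F$ with per-edge congestion at most $1/(1-\epsilon/12)$, and conclude that the energy test never triggers; then count electrical-flow computations as $N+|H|=\widetilde O(m^{1/3}\epsilon^{-3})$, each costing $\widetilde O(m)$. Your energy bound $(1+\epsilon/3)^2/(1-\epsilon/12)^2\le 1+\epsilon$ is in fact slightly more careful than the paper's stated $(1+\epsilon/12)^2(1+\epsilon/3)^2$, but the argument is otherwise identical.
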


\begin{proof}
To bound the running time, we note that, whenever we invoke the algorithm from 
  Theorem \ref{thm:computing_electrical_flow} , we either advance the number of iterations or we increase the cardinality 
  of $H$, so the number of linear systems we solve is at most $N+|H|\leq N +  \frac{15}{32} (m \ln m)^{1/3}$.

Equation \eqref{eq:R_bound} implies that the value of $R$ from Theorem \ref{thm:computing_electrical_flow} is $O((m/\eps)^{O(1)})$, so
  solving each linear system takes time at most $\Ot{m\log 1/\eps}$.  
This gives an overall running time of  
$$\Ot{\left(N+\textstyle{\frac{15}{32}}(m \ln m)^{1/3}\right)m}=\Ot{m^{4/3}\eps^{-3}},$$
  as claimed.

It thus remains to prove correctness.  
For this, we need to show that if $F\leq \maxflow$, then the oracle does not return $\fail$, which would occur if we 
  disconnect $s$ from $t$ or if $\energy{\rr}{\tf}>(1+\epsilon)|\ww|_{1}$.  
By Lemma~\ref{lem:cardinality_capacity_bound} and the comment following it, we know that throughout the whole algorithm $G_{H}$ has maximum 
  flow value of at least 
  $\maxflow - \epsilon F/12 \geq \left(1-\epsilon/12\right)F $
  and thus, in particular, we will never disconnect $s$ from $t$. 

Furthermore, this implies that there exists a feasible flow in our graph of value $\left(1-\epsilon/12\right)F$, even after we have removed the edges in $H$.  There is thus a flow of value $F$ in which every edge has congestion at most $1/\left(1-\epsilon/12\right)$.  
We can therefore use the argument from Section~\ref{sec:sqrtn_oracle} (Equation~\eqref{eq:approx_flow_energy_bound} and the lines directly preceding it) to show that we always have 
$$\energy{\rr}{\tf}\leq (1+\eps/12)^2(1+\eps/3)^2|\ww|_{1} \leq (1+\eps)|\ww|_{1}, $$
  as required. 
\end{proof}

The above theorem allows us to apply the binary search strategy that we used in Section~\ref{sec:MW_overview}.
This yields our main theorem:

\begin{theorem}\label{thm:result_improved}
For any $0 <\epsilon <1/2$,
   the maximum flow problem can be $(1-\eps)$-approximated in $\Ot{m^{4/3}\eps^{-3}}$ time.
\end{theorem}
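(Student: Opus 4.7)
The plan is to reduce this to a binary search over the target flow value $F$, invoking the preceding theorem (for fixed $F \leq \maxflow$) as a black box. First, I would compute a crude lower bound $B$ on $\maxflow$ in time $O(m + n\log n)$ by finding the maximum-bottleneck $s$-$t$ path as described in Section \ref{sec:max_flow_prelim}; this gives $B \leq \maxflow \leq mB$, confining the search to an interval of geometric width $m$.

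The binary search maintains $L \leq R$ along with the invariants: (a) the algorithm of Figure \ref{fig:improved_alg} has already returned a feasible flow of value at least $(1-O(\eps))L$ when called with target $L$, and (b) $R \geq \maxflow$. Initialize $L = B$ (which is trivially attainable, since $B \leq \maxflow$ forces success by the preceding theorem) and $R = mB$. At each step, invoke the preceding theorem with $F := \sqrt{LR}$. If it returns a flow, set $L \leftarrow F$ and cache this flow as the current candidate; if it returns $\fail$, then by the contrapositive of that theorem (which guarantees success whenever $F \leq \maxflow$) we must have $F > \maxflow$, so setting $R \leftarrow F$ preserves invariant (b). Terminate once $R \leq (1+\eps) L$.

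At termination, invariant (b) gives $L \geq R/(1+\eps) \geq (1-\eps)\maxflow$, and the cached flow has value at least $(1-O(\eps))L \geq (1-O(\eps))\maxflow$, yielding the desired $(1-\eps)$-approximation after absorbing the constant into $\eps$. For the running time, each step of the binary search halves $\log(R/L)$, starting at $\log m$ and ending near $\log(1+\eps) = \Theta(\eps)$, so the number of calls is $O(\log\log m + \log(1/\eps))$. Under the paper's standing assumption $\eps = \widetilde{\Omega}(m^{-1/3})$, we have $\log(1/\eps) = O(\log m)$, so this overhead is swallowed by the $\Ot{\cdot}$ notation. Each call costs $\Ot{m^{4/3}\eps^{-3}}$ by the preceding theorem, and the preprocessing for $B$ is $O(m+n\log n)$, for a total of $\Ot{m^{4/3}\eps^{-3}}$.

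The only conceptual subtlety, and the main obstacle to a naive binary search, is that the guarantee from the preceding theorem is one-sided: the algorithm is promised to succeed only when $F \leq \maxflow$, and may succeed or fail otherwise, so the success predicate is not monotone in $F$. What makes the binary search nevertheless work is that \emph{failure} is monotone in the useful direction: any $\fail$ response is a certificate that $F > \maxflow$, which is exactly what we need to shrink $R$ safely. Once this observation is in hand, the analysis above is purely bookkeeping.
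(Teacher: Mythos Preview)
Your proposal is correct and takes essentially the same approach as the paper: both wrap the fixed-$F$ guarantee of the preceding theorem in a binary search over $F$, seeded by the crude bound $B \leq \maxflow \leq mB$ from Section~\ref{sec:max_flow_prelim}. The paper's own ``proof'' is a single sentence pointing back to the binary-search strategy already described in Section~\ref{sec:MW_overview}; you have simply written out the details (and chosen a geometric rather than arithmetic midpoint, which slightly sharpens the iteration count but is immaterial under $\Ot{\cdot}$).
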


\subsection{The Proof of Lemma~\ref{lem:cardinality_capacity_bound}}

All that remains is to prove the bounds given by Lemma~\ref{lem:cardinality_capacity_bound} on the cardinality and capacity of $H$.  
To do so, we will use the effective resistance of the circuits on which we compute electrical flows as a potential function.  
The key insight is that we only cut an edge when its flow accounts for a nontrivial fraction of the energy of the electrical flow, 
  and that cutting such an edge will cause a substantial change in the effective resistance.   
Combining this with a bound on how much the effective resistance can change during the execution of the algorithm will 
  guarantee that we won't cut too many edges.

Let $\rrt{j}$ be the resistances used in the $j^\text{th}$ electrical flow computed during the execution of the algorithm%
\footnote{Note that $\rrt{j}$ is not just the set of  resistances arising from $\ww^{j}$, since a single call to the oracle may 
  compute multiple electrical flows as edges  are added to $H$.}.   
If an edge $e$ is not in $E$ or if $e$ has been added to $H$ by step $j$, set $\rt{j}{e}=\infty$.  
We define the potential function
$$\Phi(j)=\Reff{\rrt{j}}= \energy{\rrt{j}}{\flowt{j}},$$
where $\flowt{j}$ is the (exact) electrical flow of value 1 arising from $\rrt{j}$.
Lemma~\ref{lem:cardinality_capacity_bound} will follow easily from:
\begin{lemma}\label{lem:phi_props}
Suppose that $F \leq \maxflow \leq mF$.  Then:
\begin{enumerate}
\item $\Phi(j)$ never decreases during the execution of the algorithm. \label{lem:phi_nondec}
\item $\Phi(1)\geq m^{-4}F^{-2}$. \label{lem:phi_lb}
\item If we add an edge to $H$ between steps $j-1$ and $j$, then $(1-\frac{\eps\rhop^{2}}{5m})\Phi(j)> \Phi(j-1)$. \label{lem:phi_change}
\end{enumerate}
\end{lemma}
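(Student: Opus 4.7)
The plan is to prove the three parts in turn, with part~(3) carrying almost all of the technical content. The key structural fact is that between consecutive electrical-flow computations the resistances $\rrt{j}$ change in exactly one of two ways: either (a) the outer multiplicative-weights loop replaces each $w_e$ by $w_e(1+\frac{\eps}{\rhop}\cong{f^i}{e})$ and no edge is added to $H$, or (b) a single new edge is added to $H$ and the weights are unchanged.

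For part~(1), in case (a) each factor $(1+\frac{\eps}{\rhop}\cong{f^i}{e})\geq 1$, so both $w_e$ and $|\ww|_1$ only grow, and since $\rt{j}{e} = (w_e + \eps|\ww|_1/(3m))/u_e^2$ each $\rt{j}{e}$ only grows; in case (b) the resistance of one edge jumps to $\infty$. Either way $\rt{j}{e} \geq \rt{j-1}{e}$ for every $e$, and Rayleigh monotonicity (Corollary~\ref{cor:rayleigh_mon}) gives $\Phi(j) \geq \Phi(j-1)$.

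For part~(2), at step $1$ every $w_e=1$, so $r_e = (1+\eps/3)/u_e^2 \geq 1/u_e^2$. Fix a minimum $s$-$t$ cut $S^*\subseteq E$: it has $|S^*|\leq m$ edges and total capacity $\maxflow \leq mF$, so $u_e \leq mF$ and thus $r_e \geq 1/(mF)^2$ for every $e \in S^*$. Letting $f$ denote the exact electrical flow of value $F$, the net flow across $S^*$ equals $F$, and Cauchy--Schwarz yields $\sum_{e \in S^*} f_e^2 \geq F^2/|S^*| \geq F^2/m$. Hence
\[
F^2 \Phi(1) \;=\; \energy{\rrt{1}}{f} \;\geq\; \min_{e \in S^*} r_e \cdot \sum_{e \in S^*} f_e^2 \;\geq\; \frac{1}{(mF)^2}\cdot\frac{F^2}{m} \;=\; \frac{1}{m^3},
\]
so $\Phi(1) \geq 1/(m^3F^2) \geq m^{-4}F^{-2}$.

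For part~(3), suppose edge $e$ is added to $H$ between steps $j-1$ and $j$; this means the approximate electrical flow $\tf$ computed with resistances $\rrt{j-1}$ satisfies $\cong{\tf}{e} > \rhop$, i.e.\ $\tf_e^2 > \rhop^2 u_e^2$. By the definition of the resistances, $\rt{j-1}{e} \geq \eps|\ww|_1/(3mu_e^2)$, so the energy that $\tf$ places on $e$ exceeds $\eps\rhop^2|\ww|_1/(3m)$. Because the oracle did not return \fail, $\energy{\rrt{j-1}}{\tf} \leq (1+\eps)|\ww|_1$, so the fraction of $\tf$'s energy carried by $e$ satisfies $\beta_{\mathrm{approx}} > \eps\rhop^2/(3m(1+\eps))$. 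Invoking Theorem~\ref{thm:computing_electrical_flow}(b) together with the fact that the exact electrical flow $F\flowt{j-1}$ minimizes energy among flows of value $F$ (so $\energy{}{F\flowt{j-1}}\leq \energy{}{\tf}$), the corresponding fraction $\beta$ for $F\flowt{j-1}$ satisfies $\beta \geq \beta_{\mathrm{approx}} - \delta/(2mR)$; since $\delta = \eps/3$ and $R$ is polynomial in $m/\eps$, this correction is vastly smaller than the $\Theta(\eps\rhop^2/m)$ main term. For $\eps < 1/2$ we have $3(1+\eps) < 5$, which yields $\beta > \eps\rhop^2/(5m)$ with room to spare. Applying Lemma~\ref{lem:effect_of_res_incr} with $\gamma = \infty$ to the cut edge $e$ gives $\Phi(j) \geq \Phi(j-1)/(1-\beta)$, and rearranging produces $(1-\eps\rhop^2/(5m))\Phi(j) > (1-\beta)\Phi(j) \geq \Phi(j-1)$.

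The main obstacle is the bookkeeping in part~(3): one must convert the congestion bound on the \emph{approximate} flow produced by Theorem~\ref{thm:computing_electrical_flow} into an energy-fraction bound for the \emph{exact} electrical flow demanded by Lemma~\ref{lem:effect_of_res_incr}, while simultaneously absorbing both the $(1+\eps)$ factor from the oracle's energy check and the $\delta/(2mR)$ per-edge approximation error. The loose constants $1/3$ versus $1/5$ in the statement are precisely what leave enough slack for this translation to go through for all $\eps < 1/2$.
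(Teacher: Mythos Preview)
Your proof is correct and follows essentially the same approach as the paper's. Part~(1) is identical; part~(3) matches the paper step for step (fraction-of-energy bound via the $\eps|\ww|_1/(3m)$ term, transfer from approximate to exact flow via Theorem~\ref{thm:computing_electrical_flow}(b), then Lemma~\ref{lem:effect_of_res_incr} with $\gamma=\infty$). The only cosmetic difference is in part~(2): the paper picks a single cut edge carrying flow at least $1/m$, whereas you use Cauchy--Schwarz over all cut edges, which yields the slightly stronger $\Phi(1)\geq m^{-3}F^{-2}$ before relaxing to the stated bound.
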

\begin{proof}
\subsubsection*{Proof of (1)} 
The only way that the resistance $\rt{j}{e}$ of an edge $e$ can change is if the weight $w_e$ is increased by the 
  multiplicative weights routine, or if $e$ is added to $H$ so that $\rt{j}{e}$ is set to $\infty$.  
As such, the resistances are nondecreasing during the execution of the algorithm.  
By Rayleigh Monotonicity (Corollary~\ref{cor:rayleigh_mon}), this implies that the effective resistance is nondecreasing as well.

\subsubsection*{Proof of (2)}
In the first linear system, $H=\emptyset$ and $\rt{1}{e}=\frac{1+\eps/3}{u_{e}^{2}}$ for all $e\in E$.
Let $(S,V\setminus S)$ be the minimum $s$-$t$ cut of $G$. 
By the Max Flow-Min Cut Theorem (\cite{FordF56,EliasFS56}), we know that the capacity $u(S)=\sum_{e\in E(S)} u_{e}$ of this cut is equal to $\maxflow$.   
In particular,
$$\rt{1}{e}=\frac{1+\eps/3}{u_{e}^{2}}\geq \frac{1+\eps/3}{{\maxflow}^2}> \frac{1}{{\maxflow}^2} $$
for all $e\in E(S)$.
As $\flowt{1}$ is an electrical $s$-$t$ flow of value 1,
 it sends 1 unit of net flow across $(S,V\setminus S)$; so,
  some edge $e'\in E(S)$ must have $\flowt{1} (e')\geq 1/m$.   
This gives
\begin{equation}\label{eq:eff_res_lb}
\Phi(1)=\energy{\rrt{1}}{\flowt{1}}
=\sum_{e\in E}\flowt{1}(e)^2\rt{1}{e} 
\geq \flowt{1}(e')^2\rt{1}{e'} 
> \frac{1}{m^2{\maxflow}^2}.
\end{equation}

Since $\maxflow \leq mF$ by assumption, the desired inequality follows.

\subsubsection*{Proof of (3)}
Suppose we add the edge $h$ to $H$ between steps $j-1$ and $j$.  We will show that $h$ accounts for a substantial fraction of the total energy of the electrical flow with respect to the resistances $\rrt{j-1}$, and our result will then follow from Lemma~\ref{lem:effect_of_res_incr}.

Let $\ww$ be the weights used at step $j-1$, and let $\tf$ be the flow we computed in step $j-1$.
Because we added $h$ to $H$, we know that $ \cong{\tf}{h}> \rhop$.  
Since our algorithm did not return $\fail$ after computing this $\tf$, we must have that 
\begin{equation}\label{eq:energy_bound_lem_phi}
\energy{\rrt{j-1}}{\tf}\leq (1+\eps)|\ww|_{1}.
\end{equation}
Using the definition of $\rt{j-1}{h}$, the fact that $ \cong{\tf}{h}>\rhop$, and  Equation~\eqref{eq:energy_bound_lem_phi}, we obtain:
\begin{align*}
{\tf(h)^2\rt{j-1}{h}}&= 
\tf(h)^2\frac{w_e+\epsilon \frac{|\ww|_1}{3m}}{u_e^2}\\
&\geq \tf(h)^2\frac{\epsilon {|\ww|_1}}{3mu_e^2}\\
&=\frac{\epsilon }{3m} \left(\frac{\tf(h)}{u_e}\right)^2 {|\ww|_1}\\
&=\frac{\epsilon }{3(1+\epsilon)m} \cong{\tf}{h}^2 \left((1+\epsilon){|\ww|_1}\right)\\
&> \frac{\epsilon \rhop^2}{3(1+\epsilon)m} \energy{\rrt{j-1}}{\tf}.
\end{align*}

The above inequalities establish that  edge $h$ accounts for more than a 
  $\frac{\eps\rhop^{2}}{3(1+\eps)m}$ fraction of the total energy  $\energy{\rr^{i}}{\tf}$ of the flow $\tf$.

The flow $\tf$ is the approximate electrical flow computed by our algorithm, but our argument will require that an inequality
  like this holds in the exact electrical flow $\flowt{j-1}$.  
This is guaranteed by  part $b$ of Theorem~\ref{thm:computing_electrical_flow}, which,
along with the facts that 
 $\energy{\rr}{\tf} \geq  \energy{\rr}{\flowt{j-1}}$, $\rho \leq 1$,
   and $\epsilon < 1/2$,
gives us that 
$$\flowt{j-1}(h)^2\rt{j-1}{h}
>{\tf(h)^2\rt{j-1}{h}}- \frac{\epsilon/3}{2mR}\energy{\rr}{\flowt{j-1}}
>\left(\frac{\epsilon \rhop^2}{3(1+\epsilon)m}- \frac{\epsilon/3}{2mR}\right)\energy{\rr}{\flowt{j-1}}>\frac{\epsilon \rhop^2}{5m}\energy{\rr}{\flowt{j-1}}.
$$
The result now follows from Lemma~\ref{lem:effect_of_res_incr}.
\end{proof}

We are now ready to prove Lemma~\ref{lem:cardinality_capacity_bound}.

\begin{proof}[Proof of Lemma~\ref{lem:cardinality_capacity_bound}]
Let $k$ be the cardinality of the set $H$ at the end of the algorithm. 
Let $\tf$ be the flow that was produced by our algorithm just before $k$-th edge was added to $H$, let $j$ be the time 
  when this flow was output, and let $\ww$ be the corresponding weights. 

As the energy required by an $s$-$t$ flow scales with the square of the value
  of the flow,
\begin{equation}\label{eq:phi_by_energy}
\Phi(j) 
  \leq \frac{\energy{\rr^{j}}{f}}{F^{2}}
  \leq \frac{\energy{\rr^{j}}{\tf}}{F^{2}}.
\end{equation} 
By the construction of our algorithm, it must have been the case that $ \energy{\rrt{j}}{\tf} \leq (1+\eps)|\ww|_{1}$. 
This inequality together with equation \eqref{eq:phi_by_energy} and part~\ref{lem:phi_lb} of Lemma~\ref{lem:phi_props} implies that
\[
\Phi(j)=\energy{\rrt{j}}{\flowt{j}}\leq \frac{\energy{\rrt{j}}{\tf}}{F^{2}}\leq (1+\eps)|\ww|_{1}m^{4}\Phi(1).
\] 

Now, since up to time $j$ we had $k-1$ additions of edges to $H$, parts~\ref{lem:phi_nondec} and~\ref{lem:phi_change} of Lemma \ref{lem:phi_props}, and  Lemma \ref{lem:upper_sum_potential}  imply that

\[
\left(1-\frac{\eps\rhop^{2}}{5m}\right)^{-(k-1)}\leq \frac{\Phi(j)}{\Phi(1)}
\leq(1+\eps)|\ww|_{1}m^{4}
\leq(1+\eps)m^{4} \left( m \exp\left(\textstyle{\frac{(1+\epsilon)\epsilon}{\rho}}N\right) \right)
\leq 2m^{5}\exp(3\eps^{-1}\ln m),
\]
where the last inequality used the fact that $\eps<1/2$.
Rearranging the terms in the above inequality gives us that

$$k\leq - \frac{\ln 2+5\ln m +3\epsilon^{-1}\ln m}{ \ln \left(1-\frac{\eps\rhop^{2}}{5m}\right)} +1
< - \frac{6\epsilon^{-1}\ln m}{ \ln \left(1-\frac{\eps\rhop^{2}}{5m}\right)}  
<  \frac{30 m\ln m}{ \epsilon^2\rho^2},$$
where we used the inequalities $\epsilon<1/2$ and $\log(1-c)<-c$ for all $c\in (0,1)$.
This establishes our bound on cardinality of the set $H$. 

To bound the value of $u(H)$, let us note that we add an edge $e$ to $H$ only  when we send at least $\rhop u_{e}$ units of flow across it. 
But since we never flow more than $F$ units across any single edge, we have that $u_{e}\leq F/\rhop$. 
Therefore, we may conclude that
\[
u(H)\leq |H|\frac{F}{\rhop}\leq \frac{30m F \ln m}{\eps^{2}\rhop^{3}},
\] 
as desired.
\end{proof}



\subsection{Improving the Running Time to \texorpdfstring{$\Ot{mn^{1/3}\epsilon^{-11/3}}$}{\~O(mn\textasciicircum(1/3) eps\textasciicircum(-11/3))}}

We can now combine our algorithm with existing methods to further improve its running time.
In \cite{Karger98} (see also \cite{BenczurK02}), Karger presented a technique, which he called  ``graph smoothing'', 
that allows one to 
use random sampling to speed up an exact or $(1-\epsilon)$-approximate flow algorithm.
More precisely, his techniques yield the following theorem, which is implicit in \cite{Karger98} and stated in a more similar form in~\cite{BenczurK02}:
\begin{theorem}[\cite{Karger98,BenczurK02}]\label{thm:smoothing}
Let $T(m,n,\eps)$ be the time needed to find a $(1-\eps)$-approximately maximum flow in an undirected, capacitated graph with $m$ edges and $n$ vertices. Then one can obtain a $(1-\eps)$-approximately maximal flow in such a graph in time $\Ot{\eps^2m/n\cdot T(\Ot{n\eps^{-2}},n,\Omega(\eps))}$.
\end{theorem}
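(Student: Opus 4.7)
The plan is to invoke Karger's random sampling technique. I would partition the edges of $G$ uniformly at random into $k := \Theta(\epsilon^2 m / (n \log n))$ groups, obtaining subgraphs $G_1,\ldots,G_k$, each with $\Ot{n \epsilon^{-2}}$ edges. In each $G_i$, I would scale every edge capacity by a factor of $k$, so that the expected capacity of each $s$-$t$ cut in $G_i$ equals its capacity in $G$. After using the assumed algorithm to compute a $(1 - \Omega(\epsilon))$-approximately maximum $s$-$t$ flow $f_i$ in each $G_i$, I would output the average $\bar f := \frac{1}{k} \sum_i f_i$ as the final flow.

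For the analysis, I would invoke Karger's sampling theorem, which combines Chernoff concentration with the cut-counting bound that $G$ has at most $n^{O(\alpha)}$ $s$-$t$ cuts of value at most $\alpha$ times the minimum cut. Together these imply that, with high probability, every $s$-$t$ cut in each $G_i$ has capacity within a $(1 \pm O(\epsilon))$ factor of its value in $G$, provided each group has $\widetilde{\Omega}(n \epsilon^{-2})$ edges and no single edge has capacity too large relative to the graph's edge strong connectivity. The latter condition is ensured by a standard preprocessing step (due to Karger and Bencz\'ur--Karger) that handles high-capacity edges separately, e.g., via decomposition into edge-strong components. Given this concentration, the Max Flow--Min Cut Theorem implies $F_i^* \geq (1 - O(\epsilon)) F^*$ for each $i$, so each $f_i$ has value at least $(1 - O(\epsilon)) F^*$, giving $|\bar f| \geq (1 - O(\epsilon)) F^*$. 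Feasibility of $\bar f$ is immediate: since the $G_i$ partition the edges, each edge $e$ lies in a single $G_{i(e)}$ with scaled capacity $k \, u_e$, so $f_{i(e)}(e) \leq k \, u_e$ and $\bar f(e) \leq u_e$. The total running time is $k$ calls to the assumed algorithm on graphs of size $\Ot{n \epsilon^{-2}}$, which yields $\Ot{\epsilon^2 m / n \cdot T(\Ot{n \epsilon^{-2}}, n, \Omega(\epsilon))}$.

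The main obstacle is the concentration step. Naively, there are exponentially many $s$-$t$ cuts, so a union bound over Chernoff inequalities cannot be applied directly; the crucial tool is Karger's cut-counting lemma, which limits the number of near-minimum cuts to a polynomial. A union bound against this polynomial family of cuts, together with the observation that cuts far from the minimum remain comfortably large under sampling, gives the required uniform concentration. Fortunately, this machinery is worked out in detail in the prior work of Karger and Bencz\'ur--Karger, so the theorem follows essentially by invoking their results with parameters chosen to match the claimed bounds.
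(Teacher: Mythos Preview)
The paper does not prove this theorem; it is quoted as a black box from Karger's and Bencz\'ur--Karger's prior work, so there is no in-paper proof to compare against. Your high-level plan---randomly partition the edges into $k=\widetilde{\Theta}(\epsilon^2 m/n)$ groups, scale capacities by $k$, solve each subproblem, and average the resulting flows---is indeed the structure of the smoothing argument in those references, and your feasibility check for $\bar f$ is correct.

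There is, however, a genuine technical error in your concentration step. You assert that ``$G$ has at most $n^{O(\alpha)}$ $s$-$t$ cuts of value at most $\alpha$ times the minimum cut,'' but Karger's cut-counting lemma applies to \emph{global} minimum cuts, not $s$-$t$ cuts. The number of near-minimum $s$-$t$ cuts can be exponential: with $k$ vertex-disjoint length-two $s$-$t$ paths there are $2^k$ minimum $s$-$t$ cuts. So a direct union bound over $s$-$t$ cuts via cut-counting fails. In the actual Bencz\'ur--Karger argument the edge-strength decomposition is not merely a preprocessing step for ``high-capacity edges'' as you suggest; it is the heart of the proof. One decomposes the graph into maximal $k$-strong components, applies \emph{global} cut-counting within each component (where it is valid), and then shows that every cut of $G$---in particular every $s$-$t$ cut---splits into pieces each of which is controlled by the within-component analysis. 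With that correction your outline matches what the cited references do.
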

By applying the above theorem to our $\Ot{m^{4/3}\eps^{-3}}$ algorithm, we obtain our desired running time bound:
\begin{theorem}
For any $0<\epsilon<1/2$, the maximum flow problem can be $(1-\eps)$-approximated in $\Ot{mn^{1/3}\eps^{-11/3}}$ time.
\end{theorem}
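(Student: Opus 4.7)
The plan is to derive this bound as an immediate consequence of the $\Ot{m^{4/3}\eps^{-3}}$ algorithm established in Theorem~\ref{thm:result_improved}, combined with Karger's graph smoothing reduction stated in Theorem~\ref{thm:smoothing}. There is no new combinatorial or spectral argument required: the heavy lifting has been done in the previous subsections. The only task is to verify that substituting one bound into the other produces the claimed exponents.

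Concretely, I would first set $T(m,n,\eps) = \Ot{m^{4/3}\eps^{-3}}$, the running time guaranteed by Theorem~\ref{thm:result_improved}. Theorem~\ref{thm:smoothing} then allows us to find a $(1-\eps)$-approximate maximum flow in time
\[
\Ot{\frac{\eps^{2} m}{n}\cdot T\!\left(\Ot{n\eps^{-2}},\,n,\,\Omega(\eps)\right)}.
\]
Substituting and simplifying, we get
\[
T\!\left(\Ot{n\eps^{-2}},n,\Omega(\eps)\right)
\;=\;\Ot{\left(n\eps^{-2}\right)^{4/3}\eps^{-3}}
\;=\;\Ot{n^{4/3}\eps^{-8/3}\eps^{-3}}
\;=\;\Ot{n^{4/3}\eps^{-17/3}},
\]
so the outer bound becomes
\[
\Ot{\frac{\eps^{2}m}{n}\cdot n^{4/3}\eps^{-17/3}}
\;=\;\Ot{m\,n^{1/3}\,\eps^{2-17/3}}
\;=\;\Ot{m\,n^{1/3}\,\eps^{-11/3}},
\]
which matches the claim.

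I expect no real obstacle here, since both ingredients are already proved. The one thing I would double-check is that Theorem~\ref{thm:result_improved} is applicable inside the smoothing reduction: the sampled graph produced by Karger's reduction still has polynomially bounded capacity ratios (after the standard preprocessing described in Section~\ref{sec:max_flow_prelim}), so the $\Ot{m^{4/3}\eps^{-3}}$ bound applies as a black box with approximation parameter $\Omega(\eps)$, and the poly-logarithmic factors absorbed by the $\Ot{\cdot}$ notation account for the logarithmic overhead in Theorem~\ref{thm:smoothing}. With that verified, the theorem follows by a direct substitution.
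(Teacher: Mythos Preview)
Your proposal is correct and follows exactly the paper's own approach: the paper simply states that the bound is obtained ``by applying the above theorem [Theorem~\ref{thm:smoothing}] to our $\Ot{m^{4/3}\eps^{-3}}$ algorithm,'' and your write-up merely spells out the arithmetic of that substitution in full.
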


\subsection{Approximating the Value of the Maximum $s$-$t$ Flow in Time \texorpdfstring{$\Ot{m+n^{4/3}\epsilon^{-8/3}}$}{\~O(m+n\textasciicircum(4/3) eps\textasciicircum(-8/3))}}\label{sec:approx_value}

 Given any weighted undirected graph $G=(V,E,w)$ with $n$ vertices and $m$ edges, Bencz{\'u}r and Karger~\cite{BenczurK96}
  showed that one can construct a graph $G'=(V,E',w')$ (called a \emph{sparsifier of $G$}) on the same vertex set in 
  time $\Ot{m}$ such that $|E'|=O(n \log n/\epsilon^2)$ and the capacity of any cut in $G'$ is between $1$ and $(1+\epsilon)$ 
  times its capacity in $G$. 
Applying our algorithm from Section~\ref{sec:improved_flow} to a sparsifier of $G$ gives us an algorithm for $(1-\epsilon)$-approximating
  the value of the maximum $s$-$t$ flow on $G$ in time $\Ot{m+n^{4/3}\epsilon^{-3}}$.

We note that this only allows us to approximate the \emph{value} of the maximum $s$-$t$ flow on $G$.  
It gives us a flow on $G'$, not one on $G$.  
We do not know how to use an approximately maximum $s$-$t$ flow on $G'$ to obtain one on $G$ in less time than would be required to 
  compute a maximum flow in $G$ from scratch using the algorithm from Section~\ref{sec:improved_flow}.

For this reason, there is a gap between the time we require to find a maximum flow and the time we require to compute its value.  We note, however,
 that this gap will not exist for the minimum $s$-$t$ cut problem, since an approximately minimum $s$-$t$ cut on $G'$ will also be an
  approximately minimum $s$-$t$ cut $G$.
We will present an algorithm for finding such a cut in the next section.
By the Max Flow-Min Cut Theorem, this will provide us with an alternate algorithm for approximating the value of the maximum $s$-$t$ flow.
It will have a slightly better dependence on $\epsilon$, which will allow us to approximate the value of the maximum $s$-$t$ flow
  in time $\Ot{m+n^{4/3}\epsilon^{-8/3}}$.


\section{A Dual Algorithm for Finding an Approximately Minimum $s$-$t$ Cut in Time \texorpdfstring{$\Ot{m+n^{4/3}\epsilon^{-8/3}} $}{\~O(m+n\textasciicircum(4/3) eps\textasciicircum(-8/3))}}\label{sec:dual_alg}
\newcommand{\rhopc}{\rho}

In this section, we'll describe a dual perspective that	 yields to an even simpler algorithm for computing an approximately
  minimum $s$-$t$ cut.  
Rather than using electrical flows to obtain a flow, it will use the electrical potentials to obtain a cut.  

The algorithm will eschew the oracle abstraction and multiplicative weights machinery.  
Instead, it will just repeatedly compute an electrical flow, increase the resistances of edges
  according to the amount flowing over them, and repeat.  
It will then use the electrical potentials of the last flow computed
to  find a cut by picking a  
 cutoff and splitting the vertices according to whether their potentials are above
  or below the cutoff. 

The algorithm is shown in Figure~\ref{fig:cut_alg}.  
It finds a
  $(1+\epsilon )$-approximately minimum $s$-$t$ cut in time
  $\Ot{m^{4/3} \epsilon^{-8/3}}$; 
  applying it to a sparsifier 
  will give us:
\begin{theorem}\label{thm:sparsified_cut}
  For any $0<\eps<1/7$, we can find a $(1+\eps)$-approximate minimum $s$-$t$ cut in $\Ot{m+n^{4/3}\eps^{-8/3}}$ time.
\end{theorem}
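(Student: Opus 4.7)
The plan is to reduce the problem on $G$ to the same problem on a sparse graph using the cut sparsifier of Bencz\'ur and Karger \cite{BenczurK96}, and then invoke the dual cut algorithm of Figure~\ref{fig:cut_alg} on the sparser graph. This mirrors the way the Goldberg--Rao min-cut bound is derived from their flow algorithm, so the whole argument is a black-box composition.

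First, pick $\delta = \Theta(\epsilon)$ small enough that $(1+\delta)^2/(1-\delta) \leq 1+\epsilon$; this is feasible since $\epsilon < 1/7$. In time $\Ot{m}$, invoke the Bencz\'ur--Karger construction to obtain a weighted graph $G' = (V, E', w')$ on the same vertex set with $|E'| = \Ot{n\epsilon^{-2}}$ edges such that, for every partition $(S, V\setminus S)$,
\[
  (1 - \delta)\, u_G(S) \;\leq\; u_{G'}(S) \;\leq\; (1 + \delta)\, u_G(S).
\]
Then apply the dual cut algorithm to $G'$ with approximation parameter $\delta$. By the running time stated just above the theorem, applied to a graph with $\Ot{n\epsilon^{-2}}$ edges, this returns a $(1+\delta)$-approximate minimum $s$-$t$ cut of $G'$ in time $\Ot{n^{4/3}\epsilon^{-8/3}}$.

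Finally, chain the two multiplicative errors. If $S^*$ is the true minimum $s$-$t$ cut of $G$ and $S$ is the partition returned by the algorithm, then
\[
  u_G(S) \;\leq\; \frac{u_{G'}(S)}{1-\delta} \;\leq\; \frac{(1+\delta)\, u_{G'}(S^*)}{1-\delta} \;\leq\; \frac{(1+\delta)^2}{1-\delta}\, u_G(S^*) \;\leq\; (1+\epsilon)\, u_G(S^*),
\]
so $S$ is a $(1+\epsilon)$-approximate minimum $s$-$t$ cut of $G$. The total running time is $\Ot{m}$ for sparsification plus $\Ot{n^{4/3}\epsilon^{-8/3}}$ on the sparsifier, matching the claim. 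I do not expect a serious obstacle: the only verification needed is that composing the $(1\pm\delta)$ sparsification error with the $(1+\delta)$ algorithmic error stays within a factor $1+\epsilon$, which is exactly what the hypothesis $\epsilon < 1/7$ buys.
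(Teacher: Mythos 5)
Your proposal is correct and is exactly the paper's argument: the paper proves Theorem~\ref{thm:sparsified_cut} by applying the dual algorithm of Figure~\ref{fig:cut_alg} to a Bencz\'ur--Karger sparsifier, and you simply make explicit the $(1\pm\delta)$ error-composition step that the paper leaves implicit. The only caveat (shared with the paper, which treats $\epsilon$ as a constant) is that the stated $\eps^{-8/3}$ dependence glosses over the extra $\eps^{-2}$ factor in the sparsifier's edge count.
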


We note that, in this algorithm, there is no need to deal explicitly with edges flowing more than $\rhopc$, maintain 
  a set of forbidden edges, or average the flows from different steps.  
We will separately study edges with very large flow in our analysis, but the algorithm itself 
  avoids the complexities that appeared in the improved flow algorithm described in 
  Section~\ref{sec:improved_flow}.

We further note that the update rule is slightly modified from the one that appeared 
  earlier in the paper.  
This is done to guarantee that the effective resistance increases substantially when 
  some edge flows more than $\rhopc$, without having to explicitly cut it.  
Our previous rule allowed the weight (but not resistance) of an edge to constitute a 
  very small fraction of the total weight; 
  in this case, a significant multiplicative increase in the weight of an edge may not
  produce a substantial change in the effective resistance of the graph.

\begin{algorithm}
\DontPrintSemicolon
\SetAlFnt{\small \sf}
\AlFnt
\SetKw{Return}{return}
 \SetKwInOut{Input}{Input}\SetKwInOut{Output}{Output}
\Input{A graph $G=(V,E)$ with capacities $\{u_{e}\}_{e}$, and a target flow value $F$}
\Output{A cut $(S,V\setminus S)$}
\BlankLine
{Initialize $w_{e}^{0}\leftarrow 1$ for all edges $e$, $\rhopc \leftarrow 3m^{1/3} \epsilon^{-2/3}$, $N\leftarrow 
 5\epsilon^{-8/3} m^{1/3} \ln m $}, and $\delta \leftarrow \epsilon^{2}$.\;
\For{$i:=1,\ldots,N$}{
Find an approximate electrical flow $\tf^{i-1}$ 
  and potentials $\tphiphi$ 
  using Theorem \ref{thm:computing_electrical_flow} on $G$ with \;
 \quad resistances $r_e^{i-1}=\frac{w_e^{i-1}}{u_e^2}$, target flow value $F$, and parameter $\delta $. \;

$\mu^{i-1}\leftarrow \sum_e{w_e^{i-1}} $\;
$w_e^{i}\leftarrow w_e^{i-1} + \frac {\epsilon}{\rhopc} \cong{\tf^{i-1}}{e} w^{i-1}_e + \frac {\epsilon^2}{m \rhopc} \mu^{i-1}$ for each $e\in E$\;
Scale and translate $\tphiphi$ so that $\tphi_s=1$ and $\tphi_t=0$\;
Let $S_{x} = \{v \in V \,|\, \phi_v>x\}$\;
Set $S$ to be the set $S_{x}$ that minimizes $(S_{x}, V \setminus S_{x})$\;
If the capacity of $(S_{x}, V \setminus S_{x})$ is less than $F / (1-7 \epsilon)$, \Return $(S_{x}, V \setminus S_{x})$.
}
\Return \fail``\;

\caption{A dual algorithm for finding an $s$-$t$ cut}\label{fig:cut_alg}
\end{algorithm}

\subsection{An Overview of the Analysis}

To analyze this algorithm, we will track the total weight placed on the edges 
  crossing some minimum cut.  
The basic observation for our analysis is that the same amount of net flow 
  must be sent across every cut, so edges in small cuts will tend to have 
  higher congestion than edges in large cuts.
Since our algorithm increases the weight of an edge according to its congestion, 
  this will cause our algorithm to concentrate a larger and larger fraction of the 
  total weight on the small cuts of the graph.   
This will continue until almost all of the weight is concentrated on 
  approximately minimum cuts.

Of course, the edges  crossing a minimum cut will also cross many other 
  (likely much larger) cuts, so we certainly can't hope to obtain a graph in 
  which no edge crossing a large cut has non-negligible weight.  
In order to formalize the above argument, we will thus need some good 
  way to measure the extent to which the weight is ``concentrated on 
  approximately minimum cuts''.  

In Section~\ref{sec:cuts_and_pots}, we will show how to use effective resistance
  to formulate such a notion.   
In particular,
  we will show that if we can make the effective resistance large
  enough then we can find a cut of small capacity.
In Section~\ref{sec:cut_alg_analysis}, we will use an argument like the one 
  described above to show that the resistances produced by the algorithm in 
  Figure~\ref{fig:cut_alg}
  converge after $N=\Ot{m^{1/3} \epsilon^{-8/3}}$ steps to one that meets such a bound.

\subsection{Cuts, Electrical Potentials, and Effective Resistance }
\label{sec:cuts_and_pots} 
During the algorithm, we scale and translate the
  potentials of the approximate  electrical flow so that $\tphi_s=1$
  and $\tphi_t=0$.  
We then produce a cut by choosing $x\in [0,1]$
  and dividing the graph into the sets 
  $S=\{ v \in V\, | \, \phi_v > x\}$ 
  and $V\setminus S=\{ v \in V \, |\, \tphi_v \leq x\}$.  
The following lemma upper bounds the capacity of the
  resulting cut in terms of the electrical potentials and edge
  capacities.

\begin{lemma}\label{lem:edges_cut}
Let $\tphiphi$ be as above.  
Then there is a cut $S_{x}$ of capacity
  at most
\begin{equation}\label{eqn:edges_cut}
\sum_{(u,v)\in E} |\tphi_u - \tphi_v|u_{(u,v)}.
\end{equation}
\end{lemma}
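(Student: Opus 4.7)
The plan is a standard threshold rounding argument. Draw $x$ uniformly at random from $[0,1]$ and bound the expected capacity of the cut $S_x = \{v : \tphi_v > x\}$. Since the minimum value over $x$ is no larger than the expectation, this will yield the desired bound.

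First I would verify that $S_x$ is indeed an $s$-$t$ cut for almost every $x$: the scaling gives $\tphi_s = 1$ and $\tphi_t = 0$, so for any $x \in [0,1)$ we have $s \in S_x$ and $t \notin S_x$, and the case $x = 1$ occurs with probability zero.

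Next, for each edge $(u,v) \in E$ I would bound the probability that it is cut. Edge $(u,v)$ lies in $E(S_x)$ iff exactly one of $\tphi_u, \tphi_v$ exceeds $x$. If both potentials lie in $[0,1]$, then this probability is exactly $|\tphi_u - \tphi_v|$. If both lie above $1$, or both lie below $0$, the edge is never cut for $x \in [0,1]$, so the probability is $0 \leq |\tphi_u - \tphi_v|$. In the remaining cases, where one endpoint's potential lies in $[0,1]$ and the other lies outside, or where one is above $1$ and the other below $0$, a direct case check shows the cut probability is still at most $|\tphi_u - \tphi_v|$ (the gap outside $[0,1]$ only inflates $|\tphi_u - \tphi_v|$ without increasing the probability). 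Thus in all cases
\[
\Pr\bigl[(u,v) \in E(S_x)\bigr] \leq |\tphi_u - \tphi_v|.
\]

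Finally, by linearity of expectation,
\[
\mathbb{E}\bigl[u(S_x)\bigr] = \sum_{(u,v)\in E} \Pr\bigl[(u,v) \in E(S_x)\bigr]\, u_{(u,v)} \leq \sum_{(u,v)\in E} |\tphi_u - \tphi_v|\, u_{(u,v)},
\]
so some particular threshold $x$ achieves a cut of capacity at most the right-hand side. I do not anticipate a genuine obstacle; the only mildly delicate point is handling potentials that fall outside $[0,1]$ (which can occur because $\tphiphi$ is an approximate, not exact, electrical potential vector), but this is dispatched by the case analysis above. Note also that the argument is easy to derandomize: only the $O(n)$ distinct thresholds separating consecutive values of $\tphi_v$ need be tried, which matches the algorithm's strategy of minimizing over all $S_x$.
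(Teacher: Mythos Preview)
Your argument is correct and is essentially the same threshold-rounding argument the paper gives: pick $x\in[0,1]$ uniformly, observe that the probability an edge is cut is (at most) $|\tphi_u-\tphi_v|$, and conclude by linearity of expectation. You are actually more careful than the paper, which simply asserts the cut probability is ``precisely $|\tphi_u-\tphi_v|$'' without addressing the possibility that approximate potentials lie outside $[0,1]$; your case analysis handles that cleanly.
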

\begin{proof}
Consider choosing $x \in [0,1]$ uniformly at random.
The probability that an edge $(u,v)$ is cut is precisely $|\tphi(u) - \tphi(v)|$.
So, the expected capacity of the edges in a random cut is given by \eqref{eqn:edges_cut},
  and so there is a cut of capacity at most \eqref{eqn:edges_cut}.
\end{proof}

Now, suppose that one has a fixed total amount of resistance $\mu$ to distribute 
  over the edges of a cut of size $F$.  
It is not difficult to see that the maximum possible effective resistance between $s$
  and $t$ in such a 
  case is $\frac{\mu}{F^2}$, and that this is achieved when  one puts a resistance 
  of  $\frac{\mu}{F}$ on each of the edges.  
This suggests the following lemma, which bounds the quantity in 
  Lemma~\ref{lem:edges_cut} in terms of the effective resistance and the total 
  resistance (appropriately weighted when the edges have non-unit capacities):  

\begin{lemma}\label{lem:potdrop_and_reff}
Let $\mu = \sum_e {u_e^2 r_e}$, and let the effective $s$-$t$ resistance 
  of $G$ with edge resistances given by $\rr$ be $\Reff{\rr}$.
Let $\phiphi$ be the potentials of the electrical $s$-$t$ flow, scaled
  to have potential drop $1$ between $s$ and $t$.
Then
\[
 \sum_{e \in E}{\phiphi(e) u_e} \leq \sqrt{\frac{\mu }{\Reff{\rr}}} .
\]
If, $\tphiphi$ is an approximate electrical potential returned by 
  the algorithm of Theorem~\ref{thm:computing_electrical_flow}
  when run with parameter $\delta \leq 1/3 $, re-scaled to have
  potential difference $1$ between $s$ and $t$, then
\[
 \sum_{e \in E}{\tphiphi(e) u_e} \leq (1 + 2 \delta ) \sqrt{\frac{\mu }{\Reff{\rr}}} .
\]
\end{lemma}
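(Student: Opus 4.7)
The first inequality is a direct Cauchy--Schwarz calculation. Write
\[
 \sum_{e=(u,v)} \abs{\phi_u - \phi_v}\, u_e
 = \sum_e \frac{\abs{\phi_u - \phi_v}}{\sqrt{r_e}}\cdot u_e\sqrt{r_e},
\]
apply the Cauchy--Schwarz inequality, and note that the first factor equals
$\sum_e (\phi_u-\phi_v)^2/r_e$, which by Fact~\ref{fa:effective_cond_express} is exactly $\Ceff{\rr} = 1/\Reff{\rr}$ when $\phi$ is the (scaled) electrical potential with $\phi_s-\phi_t=1$, while the second factor is $\mu$. Taking square roots yields the bound $\sqrt{\mu/\Reff{\rr}}$.

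\textbf{The approximate case.} The same Cauchy--Schwarz inequality applied to $\tphiphi$ gives
\[
 \sum_e \abs{\tphi_u - \tphi_v}\, u_e
 \leq \sqrt{\mu}\;\Bigl(\sum_e (\tphi_u-\tphi_v)^2/r_e\Bigr)^{1/2},
\]
so the task reduces to showing that after rescaling so that $\tphi_s-\tphi_t=1$, the Dirichlet energy $\sum_e (\tphi_u-\tphi_v)^2/r_e$ is at most $(1+2\delta)^2\Ceff{\rr}$. I would carry this out in two steps. First, use Theorem~\ref{thm:computing_electrical_flow}(c) to control the scaling: the \emph{unscaled} potentials satisfy $\tphi_s-\tphi_t \geq (1-\delta/(12nmR))\,F\,\Reff{\rr}$, so dividing by $\tphi_s-\tphi_t$ contracts every coordinate by at most a $(1+O(\delta/(nmR)))$ factor relative to the ``geometric'' normalization $F\Reff{\rr}$. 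Second, bound the Dirichlet energy of $\tphiphi$ before rescaling. The Laplacian solver guarantees that $\tphiphi$ is a near-minimizer of the quadratic form $\phiphi^T\LL\phiphi - 2\bchist^T\phiphi$, which combined with part~(a) of Theorem~\ref{thm:computing_electrical_flow} yields $\sum_e(\tphi_u-\tphi_v)^2/r_e \leq (1+O(\delta))(\tphi_s-\tphi_t)^2\,\Ceff{\rr}$. Combining both steps and collecting constants produces the factor $(1+2\delta)$ (for $\delta\leq 1/3$).

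\textbf{Main obstacle.} The routine calculation is the Cauchy--Schwarz bound and the identification via Fact~\ref{fa:effective_cond_express}; the nontrivial piece is controlling the Dirichlet energy of the \emph{approximate} potentials against the electrical conductance. Theorem~\ref{thm:computing_electrical_flow} as stated in the excerpt gives us (a) an energy bound for the flow $\tf$ and (b) per-edge flow accuracy and (c) a lower bound on $\tphi_s-\tphi_t$, but it does not directly furnish an upper bound on $\sum_e(\tphi_u-\tphi_v)^2/r_e$. The most delicate step is therefore to extract from the guarantees of the fast Laplacian solver an upper bound of the form $(1+O(\delta))(\tphi_s-\tphi_t)^2/\Reff{\rr}$ on this Dirichlet energy; I would expect this to follow from the standard relative-$\LL$-norm error guarantee $(\tphiphi-\phiphi)^T\LL(\tphiphi-\phiphi) \leq \delta\,\phiphi^T\LL\phiphi$, via the triangle inequality in the $\LL$-seminorm, together with the scaling control from part~(c). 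Once that energy bound is in hand, the final inequality is obtained by plugging into Cauchy--Schwarz and absorbing lower-order terms into the $(1+2\delta)$ prefactor.
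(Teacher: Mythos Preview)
Your proposal is correct and follows essentially the same route as the paper: Cauchy--Schwarz together with Fact~\ref{fa:effective_cond_express} for the exact case, and for the approximate case the same Cauchy--Schwarz reduction to a Dirichlet-energy bound on the rescaled $\tphiphi$, obtained by combining the energy guarantee (part~(a)) with the potential-drop lower bound (part~(c)) of Theorem~\ref{thm:computing_electrical_flow}. The paper's write-up is terser---it simply asserts $\sum_e \tphiphi(e)^2/r_e \leq \frac{1+\delta}{1-\delta}\cdot\frac{1}{\Reff{\rr}} \leq (1+3\delta)/\Reff{\rr}$ citing parts~(a) and~(c)---whereas you correctly flag that part~(a) bounds the energy of the flow $\tf$ rather than the Dirichlet energy of $\tphiphi$, and propose to bridge this via the $\LL$-seminorm triangle inequality on the solver output; that is indeed how the bound is justified in the appendix proof of Theorem~\ref{thm:computing_electrical_flow}.
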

\begin{proof}
By Fact~\ref{fa:effective_cond_express},
  the rescaled true electrical potentials correspond to a flow of value
  $1/\Reff{\rr}$ and
\[
  \sum_{e} \frac{\phiphi (e)^{2}}{r_{e}} = \frac{1}{ \Reff{\rr}}.
\]
So, we can apply the Cauchy-Schwarz inequality to prove
\begin{align*}
\sum_e{\phi(e) u_e} 
&\leq \sqrt{\sum_e{\frac {\phi(e)^2}{r_e}} \sum_e{u_e^2 r_e}} \\
& =  \sqrt{\frac{\mu }{\Reff{\rr}}}.
\end{align*}
By parts $a$ and $c$ of Theorem~\ref{thm:computing_electrical_flow}, after we rescale
  $\tphiphi$ to have potential drop $1$ between $s$ and $t$, it will have energy
\[
  \sum_{e} \frac{\tphiphi (e)^{2}}{r_{e}} 
\leq \frac{1+\delta}{1 - \delta }\frac{1}{ \Reff{\rr}}
\leq  (1 + 3 \delta )\frac{1}{ \Reff{\rr}},
\]
for $\delta \leq 1/3$.
The rest of the analysis follows from another application of Cauchy-Schwarz.
\end{proof}

\subsection{The Proof that the Dual Algorithm Finds an Approximately Minimum Cut} \label{sec:cut_alg_analysis}

We'll show that
  if $\maxflowguess \geq \maxflow$ then
 within $N=5 \epsilon^{-8/3} m^{1/3} \ln m$ iterations, the algorithm in 
  Figure~\ref{fig:cut_alg} will produce a set of resistances $\rr^i$ such that 
\begin{equation}\label{eqn:ReffTarget}
\Reff{\rr^i} \geq (1 - 7 \epsilon) \frac {\mu^i}{\left(\maxflowguess\right)^2}.
\end{equation}
Once such a set of resistances has been obtained, Lemmas~\ref{lem:edges_cut} and 
  \ref{lem:potdrop_and_reff} tell us that the best potential cut of $\tphiphi$
  will have capacity at most
\[
  \frac{1 + 2 \delta }{\sqrt{1 - 7 \epsilon}} F
 \leq 
  \frac{1}{1 - 7 \epsilon} F.
\]
The algorithm will then return this cut.

Let $C$ be the set of edges crossing some minimum cut in our graph.
Let $u_{C}= \maxflow$ denote the capacity of the edges in $C$.
We will keep track of two quantities: the weighted geometric mean
  of the weights of the edges in $C$,
\[
  \nu^i = \left(
              \prod_{e \in C} \left(w_e^i \right)^{u_{e}}
         \right)^{1/u_{C}},
\]
 and the total weight
\[
  \mu^{i}= \sum_e{w_e^{i}} = \sum_e r_e^i u_e^2 
\]
 of the edges of the entire graph.
Clearly $\nu^i \leq \max_{e \in C}{w_e^{i}}$.
In particular, 
\[ \nu^i \leq \mu^i \]
for all $i$.

Our proof that the effective resistance cannot remain large for too many iterations 
  will be similar to our analysis of the flow algorithm in Section~\ref{sec:improved_flow}.  
We suppose to the contrary that 
  $\reff^i \leq (1 - 7 \epsilon) \frac {\mu^i}{\left(\maxflowguess\right)^2}$ for each $1 \leq i \leq N$.  
We will show that, under this assumption:
\begin{enumerate}
\item The total weight $\mu^i$ doesn't get too large over the course of the algorithm  {\bf [Lemma~\ref{lem:total_weight}]}.
\item \label{type1} The quantity $\nu^i$ increases significantly in any iteration in which no edge 
  has congestion more than $\rhopc$ {\bf [Lemma~\ref{lem:nu_increase}]}.  Since $\mu^i$ doesn't get too large, 
  and $\nu^i \leq \mu^i$, this will not happen too many times.
\item \label{type2} The effective resistance increases significantly in any iteration 
  in which some edge has congestion more than $\rhopc$ {\bf [Lemma~\ref{lem:eff_res_increase}]}.  
Since $\mu^i$ does not get too large, and the effective resistance is assumed 
  to be bounded in terms of the total weight $\mu^i$, this cannot happen 
  too many times.
\end{enumerate}
The combined bounds from~\ref{type1}  and~\ref{type2} will be less than $N$, 
  which will yield a contradiction.

\begin{lemma}\label{lem:total_weight}
For each $i \leq N$ such that 
  $\Reff{\rr^{i}} \leq (1 - 7 \epsilon) \frac{\mu^{i}}{\maxflowguess^{2}}$,
\[ \mu^{i+1} \leq \mu^i \exp\left(\frac {\epsilon(1 - 2\epsilon)}{\rhopc} \right). \]
So, if for all $i \leq N$ we have 
  $\Reff{\rr^{i}} \leq (1 - 7 \epsilon) \frac{\mu^{i}}{\maxflowguess^{2}}$, 
then
\begin{equation}
\label{eq:mu_bound}
\mu^N \leq \mu^0 \exp\left(\frac {\epsilon(1 - 2\epsilon)}{\rhopc} N\right).
\end{equation}
\end{lemma}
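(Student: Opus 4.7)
The plan is to unroll the weight--update recursion, apply Cauchy--Schwarz to bound the congestion-weighted term by a geometric mean of the electrical energy and the total weight, and then use the hypothesis on $\Reff{\rr^{i}}$ to turn this into a purely multiplicative bound on $\mu^{i+1}/\mu^i$.

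First I would sum the update rule $w_e^{i+1} = w_e^{i} + \frac{\epsilon}{\rhopc}\cong{\tf^{i}}{e}\,w_e^{i} + \frac{\epsilon^2}{m\rhopc}\mu^{i}$ over all $m$ edges (the third summand contributes $m \cdot \frac{\epsilon^2}{m\rhopc}\mu^i = \frac{\epsilon^2}{\rhopc}\mu^i$), yielding
\[
\mu^{i+1} \;=\; \mu^i \;+\; \frac{\epsilon}{\rhopc}\sum_{e}\cong{\tf^{i}}{e}\,w_e^{i} \;+\; \frac{\epsilon^2}{\rhopc}\,\mu^i.
\]
The key step is to bound the middle sum. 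Since $r_e^{i} = w_e^{i}/u_e^{2}$ and $\cong{\tf^{i}}{e} = |\tf^{i}(e)|/u_e$, we have $\cong{\tf^{i}}{e}\,w_e^{i} = |\tf^{i}(e)|\,u_e\,r_e^{i}$. Writing each term as $\bigl(|\tf^{i}(e)|\sqrt{r_e^{i}}\bigr)\bigl(u_e \sqrt{r_e^{i}}\bigr)$ and applying Cauchy--Schwarz gives
\[
\sum_{e}\cong{\tf^{i}}{e}\,w_e^{i} \;\leq\; \sqrt{\sum_e r_e^i (\tf^i(e))^2}\cdot\sqrt{\sum_e r_e^{i} u_e^2} \;=\; \sqrt{\energy{\rr^{i}}{\tf^{i}}\cdot \mu^i},
\]
using that $\mu^i = \sum_e w_e^i = \sum_e r_e^i u_e^2$.

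Next I would control $\energy{\rr^{i}}{\tf^{i}}$ via the hypothesis. By part (a) of Theorem~\ref{thm:computing_electrical_flow} with $\delta = \epsilon^2$, $\energy{\rr^{i}}{\tf^{i}} \leq (1+\epsilon^2)\,\energy{\rr^{i}}{f^{i}}$, where $f^{i}$ is the true electrical flow of value $F$, and this true flow has energy $F^{2}\Reff{\rr^{i}}$ by the scaling identity and \eqref{eq:alt_energy}. The assumption $\Reff{\rr^{i}} \leq (1-7\epsilon)\,\mu^i/F^2$ therefore yields $\energy{\rr^{i}}{\tf^{i}} \leq (1+\epsilon^2)(1-7\epsilon)\,\mu^i$. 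A direct expansion verifies $(1+\epsilon^2)(1-7\epsilon) \leq (1-3\epsilon)^2$ (equivalently $\epsilon + 7\epsilon^3 \leq 8\epsilon^2$ fails, actually the reverse: $-\epsilon \leq 8\epsilon^2 + 7\epsilon^3$, trivially true), so $\sqrt{\energy{\rr^{i}}{\tf^{i}}/\mu^i} \leq 1-3\epsilon$. Plugging back in,
\[
\mu^{i+1} \;\leq\; \mu^i + \frac{\epsilon(1-3\epsilon)}{\rhopc}\,\mu^i + \frac{\epsilon^2}{\rhopc}\,\mu^i \;=\; \mu^i\!\left(1 + \frac{\epsilon(1-2\epsilon)}{\rhopc}\right) \;\leq\; \mu^i\,\exp\!\left(\frac{\epsilon(1-2\epsilon)}{\rhopc}\right),
\]
using $1+x\leq e^x$. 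The iterated bound~\eqref{eq:mu_bound} then follows immediately by telescoping this one-step inequality across the $N$ iterations.

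The main obstacle is bookkeeping rather than anything conceptual: the $1-7\epsilon$ slack in the hypothesis must, after being square-rooted by Cauchy--Schwarz, leave enough margin to absorb both the $\epsilon^2$ loss from the approximate-flow guarantee and the additive $\epsilon^2/\rhopc$ contribution from the third summand in the update, while still leaving a clean $1-2\epsilon$ factor in the final exponent. The constant $7$ in the hypothesis appears to be chosen precisely so that $(1+\epsilon^2)(1-7\epsilon) \leq (1-3\epsilon)^2$ holds, which is exactly what is needed to drive the entire calculation through.
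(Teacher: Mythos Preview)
Your proof is correct and follows essentially the same route as the paper: sum the update rule, apply Cauchy--Schwarz to bound $\sum_e \cong{\tf^i}{e}\,w_e^i$ by $\sqrt{\energy{\rr^i}{\tf^i}\,\mu^i}$, use the hypothesis together with the $(1+\delta)$ approximation guarantee to get $\energy{\rr^i}{\tf^i}\leq (1-3\epsilon)^2\mu^i$, and combine the $(1-3\epsilon)$ and $\epsilon$ terms into $(1-2\epsilon)$. The only blemish is the garbled parenthetical aside where you verify $(1+\epsilon^2)(1-7\epsilon)\leq(1-3\epsilon)^2$; the inequality is indeed equivalent to $-\epsilon\leq 8\epsilon^2+7\epsilon^3$ (trivially true), so just state that cleanly rather than correcting yourself mid-sentence.
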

\begin{proof}
If $\Reff{\rr^i} \leq (1 - 7 \epsilon) \frac {\mu^i}{F^2} $
  then the electrical flow $f$ of value $F$ has energy
\[
  F^{2} \Reff{\rr^{i}}
 = 
  \sum{\cong{f^i}{e}^2 w_e^i}
 \leq (1 - 7 \epsilon) 
   \mu^i.
\]
By Theorem~\ref{thm:computing_electrical_flow}, the approximate electrical flow $\tf$
  has energy at most $(1+\delta)$ times the 
  energy of $f$,
\[
 \sum{\cong{\tf^i}{e}^2 w_e^i}
 \leq (1+\delta ) (1 - 7 \epsilon) 
   \mu^i
 \leq (1 - 6 \epsilon) 
   \mu^i.
\]
Applying the Cauchy-Schwarz inequality,
  we find
\[
 \sum{\cong{\tf^i}{e} w_e^i}
\leq 
  \sqrt{\sum w_{e}^{i}} 
  \sqrt{\sum \cong{\tf^i}{e}^2 w_{e}^{i}}
\leq 
  \sqrt{1 - 6 \epsilon} \mu^{i}
\leq 
  (1 - 3 \epsilon) \mu^{i}.
\]
Now we can compute
\begin{align*}
\mu^{i+1}
&= \sum_e{w_e^{i+1}} \\
&= \sum_e{w_e^i} + \frac {\epsilon}{\rhopc} \sum_e{w_e^i \cong{f^i}{e}} + \frac {\epsilon^2}{\rhopc} \mu^i \\
&\leq \left( 1  + \frac {\epsilon(1 - 2 \epsilon)}{\rhopc}\right) \mu^i \\
& \leq \mu^i \exp\left(\frac {\epsilon(1 - 2\epsilon)}{\rhopc} \right)
\end{align*}
as desired.
\end{proof}

\begin{lemma}\label{lem:nu_increase}
If $\cong{f^i}{e} \leq \rhopc$ for all $e$, then
\[
  \nu^{i+1} \geq 
  \exp\left( \frac {\epsilon(1 - \epsilon)}{\rhopc}\right) \nu^i  .
\]
\end{lemma}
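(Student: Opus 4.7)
The plan is to take logarithms of the weighted geometric mean $\nu^i$, turning the product over $e \in C$ into a weighted sum, and then lower bound the per-iteration increment using the multiplicative structure of the update rule together with flow conservation across the minimum cut.

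First, I would drop the additive $\frac{\epsilon^2}{m\rhopc}\mu^{i-1}$ term in the update rule and use the weaker bound $w_e^{i+1} \geq w_e^i\bigl(1 + \frac{\epsilon}{\rhopc}\cong{\tf^i}{e}\bigr)$. Taking logarithms in the definition of $\nu$,
\[
\log \nu^{i+1} - \log \nu^i \;\geq\; \frac{1}{u_C}\sum_{e \in C} u_e \log\!\left(1 + \tfrac{\epsilon}{\rhopc}\cong{\tf^i}{e}\right).
\]
Under the hypothesis $\cong{\tf^i}{e} \leq \rhopc$, the argument $y := \frac{\epsilon}{\rhopc}\cong{\tf^i}{e}$ lies in $[0,\epsilon]$, and I would apply the standard inequality $\ln(1+y) \geq (1-\epsilon)y$ for $y \in [0,\epsilon]$ (the same bound the paper uses in the proof of Lemma~\ref{lem:single_edge_potential}, via $\ln(1+y) \geq y - y^2/2 \geq y(1-\epsilon/2) \geq y(1-\epsilon)$).

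Substituting and recalling that $\cong{\tf^i}{e} = |\tf^i(e)|/u_e$ collapses the $u_e$ factors, yielding
\[
\log \nu^{i+1} - \log \nu^i \;\geq\; \frac{(1-\epsilon)\epsilon}{u_C\,\rhopc}\sum_{e \in C} |\tf^i(e)|.
\]
Now I would invoke flow conservation: since $\tf^i$ is an $s$-$t$ flow of value $F$ and $C$ is an $s$-$t$ cut, the net flow across $C$ equals $F$, so $\sum_{e \in C} |\tf^i(e)| \geq F$. Combined with $F \geq \maxflow = u_C$ (the regime in which the analysis operates; the algorithm only returns a cut when it finds one of capacity below $F/(1-7\epsilon)$), this gives
\[
\log \nu^{i+1} - \log \nu^i \;\geq\; \frac{(1-\epsilon)\epsilon}{\rhopc},
\]
which is exactly the claimed bound after exponentiating.

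The main obstacle is essentially bookkeeping rather than conceptual: one must verify that the two inequalities $\ln(1+y)\geq(1-\epsilon)y$ and $\sum_{e\in C}|\tf^i(e)|\geq u_C$ both hold in the stated regime, and that dropping the additive term in the update rule is legitimate (it only increases $w_e^{i+1}$, hence $\nu^{i+1}$). The choice to track the \emph{geometric} mean weighted by capacities $u_e$ is what makes the $u_e$ factors cancel against $\cong{\tf^i}{e} = |\tf^i(e)|/u_e$, so that flow conservation can be applied directly; this is the only slightly delicate point in the calculation.
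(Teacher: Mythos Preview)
Your proposal is correct and follows essentially the same argument as the paper: drop the additive term in the update, apply the inequality $(1+\epsilon x)\geq \exp(\epsilon(1-\epsilon)x)$ for $x=\cong{\tf^i}{e}/\rho\in[0,1]$ (equivalently your $\ln(1+y)\geq (1-\epsilon)y$), cancel $u_e$ against the denominator in $\cong{\tf^i}{e}$, and then use that $\sum_{e\in C}|\tf^i(e)|\geq F\geq u_C$. The only difference is cosmetic---you take logarithms first and work additively, whereas the paper carries the product through and exponentiates at the end.
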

\begin{proof}
To bound the increase of $\nu^{i+1}$ over $\nu^{i}$, we use the inequality
\[
  (1 + \epsilon x) \geq \exp \left(\epsilon (1-\epsilon) x \right),
\]
which holds for $\epsilon$ and $x$ between $0$ and $1$.
We apply this inequality with $x = \cong{\tf^i}{e} / \rho$.
As $\tf^{i}$ is a flow of value $F$ and $C$ is a cut,
 $\sum_{e \in C}{\abs{\tf_e^i}} \geq  \maxflowguess $.
We now compute
\begin{align*}
\nu^{i+1}
&= 
  \left(\prod_{e \in C}\left( w_e^{t+1} \right)^{u_{e}} \right)^{1/u_{C}}
\\
& \geq  
  \left(
     \prod_{e \in C} \left(w_e^{i}
     \left(1 + \frac{\epsilon}{\rhopc} \cong{\tf^{i}}{e}  \right)\right)^{u_{e}}
   \right)^{1/u_{C}}
\\
& =
  \nu^{i}
  \left(
     \prod_{e \in C} 
     \left(1 + \frac{\epsilon}{\rhopc} \cong{\tf^{i}}{e} \right)^{u_{e}}
   \right)^{1/u_{C}}
\\
& \geq 
  \nu^{i}
  \exp
   \left( \frac{1}{u_{C}}
     \sum_{e \in C} u_{e}
      \frac{\epsilon (1-\epsilon )}{\rhopc} \cong{\tf^{i}}{e}
 \right)
\\
& =
  \nu^{i}
  \exp
   \left( \frac{1}{u_{C}}
     \sum_{e \in C} 
      \frac{\epsilon (1-\epsilon )}{\rhopc} \abs{\tf^{i}_{e}}
 \right)
\\
& \geq 
  \nu^{i}
  \exp
   \left( \frac{1}{u_{C}}
      \frac{\epsilon (1-\epsilon )}{\rhopc} F
 \right)
\\
& \geq 
  \nu^{i}
  \exp
   \left( 
      \frac{\epsilon (1-\epsilon )}{\rhopc} 
 \right)
.
\end{align*}
\end{proof}

\begin{lemma}\label{lem:eff_res_increase}
If
  $\Reff{\rr^{i}} \leq (1 - 7 \epsilon) \frac{\mu^{i}}{\maxflowguess^{2}}$
  and
   there exists some edge $e$ such that 
$\cong{\tf^i}{e}>\rhopc$, 
then 
\[
\Reff{\rr^{i+1}}
  \geq 
\exp\left( \frac {\epsilon^2 \rhopc^2}{4 m} \right) \Reff{\rr^i}.
\]
\end{lemma}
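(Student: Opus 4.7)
The plan is to apply Lemma~\ref{lem:effect_of_res_incr} to the violating edge $e$, using Rayleigh monotonicity (Corollary~\ref{cor:rayleigh_mon}) to absorb the resistance increases at every other edge. The two ingredients needed are $\gamma = r_e^{i+1}/r_e^i$ and $\beta$, the fraction of the electrical flow's energy carried by $e$. From the multiplicative part of the update, $\gamma = w_e^{i+1}/w_e^i \geq 1 + \epsilon\,\cong{\tf^i}{e}/\rhopc > 1+\epsilon$, which is immediate. The real work is in producing a matching lower bound on $\beta$.

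The key technical step would be to prove, by induction on $i$, the uniform lower bound $w_e^i \geq \epsilon\mu^i/(2m)$ for every edge $e$. The base case $i=0$ is trivial since $w_e^0 = 1 = \mu^0/m$. For the inductive step, the additive term of the update yields $w_e^i \geq w_e^{i-1} + \epsilon^2 \mu^{i-1}/(m\rhopc)$, while Lemma~\ref{lem:total_weight} gives $\mu^i \leq \mu^{i-1}(1 + \epsilon/\rhopc)$; combining these with the inductive hypothesis $w_e^{i-1}\geq \epsilon\mu^{i-1}/(2m)$ preserves the invariant after a short calculation. Applying Lemma~\ref{lem:total_weight} at every prior iteration is legitimate because otherwise the algorithm would already have returned a small cut (via Lemmas~\ref{lem:edges_cut} and~\ref{lem:potdrop_and_reff}) and the current iteration would not have been reached.

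Armed with $w_e^i \geq \epsilon\mu^i/(2m)$, the energy placed by $\tf^i$ on edge $e$ is $w_e^i(\cong{\tf^i}{e})^2 > w_e^i\rhopc^2 \geq \epsilon\rhopc^2\mu^i/(2m)$. Theorem~\ref{thm:computing_electrical_flow}(b) transfers this essentially unchanged to the exact electrical flow $f^i$ (with $\delta=\epsilon^2$ and $R$ polynomially bounded, the error is negligible), and the hypothesis $\Reff{\rr^i}\leq(1-7\epsilon)\mu^i/F^2$ bounds the total energy by $(1-7\epsilon)\mu^i$, yielding $\beta \geq \epsilon\rhopc^2/(2m(1-7\epsilon))$. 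Applying the second bullet of Lemma~\ref{lem:effect_of_res_incr} with $\gamma = 1+\epsilon$ then gives $\Reff{\rr^{i+1}}/\Reff{\rr^i} \geq 1 + \epsilon\beta/2 \geq 1 + \epsilon^2\rhopc^2/(4m(1-7\epsilon))$, which for $\epsilon<1/7$ is comfortably at least $\exp(\epsilon^2\rhopc^2/(4m))$ by the inequality $\log(1+y) \geq y/(1+y)$.

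The main obstacle is the induction on $w_e^i$: using only the single-iteration additive bound $w_e^i \geq \epsilon^2\mu^{i-1}/(m\rhopc)$ would give a $\beta$ that is a factor $\rhopc/\epsilon$ too small, yielding only $\Reff{\rr^{i+1}}/\Reff{\rr^i} \geq \exp(\Omega(\epsilon^3\rhopc/m))$, which is insufficient for the subsequent counting argument that bounds the number of high-congestion iterations. Amortizing the additive contributions across all prior iterations, with Lemma~\ref{lem:total_weight} controlling the growth of $\mu$, is what recovers the extra factor of $\rhopc/\epsilon$ needed.
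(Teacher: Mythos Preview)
Your proposal is correct and follows essentially the same route as the paper. The paper proves the slightly sharper invariant $w_e^i \geq \frac{\epsilon}{m}\mu^i$ (rather than your $\frac{\epsilon}{2m}\mu^i$) by the identical induction, then bounds $\beta \geq \epsilon\rhopc^2/m$ and applies Lemma~\ref{lem:effect_of_res_incr} with $\gamma=1+\epsilon$ to get $1+\epsilon^2\rhopc^2/(2m)\geq\exp(\epsilon^2\rhopc^2/(4m))$; your looser constant still clears the target because of the extra $(1-7\epsilon)^{-1}$ factor you retain. Two points you make explicit that the paper leaves implicit: the appeal to Rayleigh monotonicity to absorb the resistance increases on the other edges, and the observation that invoking Lemma~\ref{lem:total_weight} at every prior iteration is justified only under the standing contradiction hypothesis in the proof of Lemma~\ref{thm:res_conv_bound}---both are correct and worth stating.
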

\begin{proof}
We first show by induction that 
$$w_{e}^{i} \geq \frac {\epsilon}{m} \mu^i. $$
If $i = 0$ we have $1 \geq \epsilon$.   For $i > 0$, we have
\begin{align*}
w_e^{i+1} 
&\geq w_e^i + \frac {\epsilon^2}{m \rhopc} \mu^i \\
&\geq \left( \frac {\epsilon}{m} + \frac {\epsilon^2}{m \rhopc}\right) \mu^i \\
& = \frac {\epsilon}{m} \left( 1 + \frac {\epsilon}{\rhopc}\right) \mu^i \\
& \geq  \frac {\epsilon}{m} \exp \left(\frac{\epsilon (1-2 \epsilon)}{\rhopc } \right)
   \mu^i \\
&\geq \frac {\epsilon}{m} \mu^{i+1},
\end{align*}
by Lemma~\ref{lem:total_weight}.

We now show that an edge $e$ for which
  $\cong{\tf^i}{e} \geq \rhopc$ contributes a large
  fraction of the energy to the true electrical flow of value $F$.
By the assumptions of the lemma, the energy of the true electrical
  flow of value $F$ is
\[
  F^{2} \Reff{\rr} \leq (1-7\epsilon) \mu^{i}.
\]
On the other hand, the energy of the edge $e$ in the approximate electrical
  flow is
\[
 \tf^i(e)^2 r_e^i = \cong{\tf^i}{e}^2 w_e \geq \rhopc^2 \frac{\epsilon}{m} \mu^i.
\]
As a fraction of the energy of the true electrical flow, this is at least
\[
  \frac{1}{1-7\epsilon } \frac{\rhopc^{2} \epsilon}{m} 
 =
  \frac{1}{(1-7\epsilon) \epsilon^{1/3} m^{1/3} }.
\]
By part $b$ of Theorem~\ref{thm:computing_electrical_flow}, the fraction of the
  energy that $e$ accounts for in the true flow is at least
\[
  \frac{1}{(1-7\epsilon) \epsilon^{1/3} m^{1/3} } - \frac{\epsilon^{2}}{2 m R}
 \geq 
    \frac{1}{\epsilon^{1/3} m^{1/3} } = 
  \frac{\rhopc^{2} \epsilon}{m}.
\]
As
\[
  w_e^{i+1} 
  \geq w_e^i \left( 1 + \frac {\epsilon}{\rhopc} \cong{\tf^i}{e} \right) 
  \geq (1 + \epsilon) w_e^i,
\]
we have increased the weight of edge $e$ by a factor of at least $(1+\epsilon)$.
So, by Lemma~\ref{lem:effect_of_res_incr},
\[
  \frac{\Reff{\rr^{i+1}}}{\Reff{\rr^{i}}}
\geq
  \left(1 + \frac{\rhopc^{2} \epsilon^{2}}{2 m} \right)
\geq
 \exp  \left(\frac{\rhopc^{2} \epsilon^{2}}{4 m} \right).
\]
\end{proof}

We now combine these lemmas to obtain our main bound:
\begin{lemma}\label{thm:res_conv_bound}
For $\epsilon \leq 1/7$, after $N$ iterations, the algorithm in Figure~\ref{fig:cut_alg} will produce 
  a set of resistances such that
  $\Reff{\rr^i} \leq (1 - 7\epsilon) \frac {\mu^i}{\maxflowguess^2}$.
\end{lemma}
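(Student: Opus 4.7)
The plan is a proof by contradiction driving the three preparatory lemmas (Lemmas~\ref{lem:total_weight}, \ref{lem:nu_increase}, and \ref{lem:eff_res_increase}). Each of these is stated under the hypothesis $\Reff{\rr^i}\leq(1-7\epsilon)\mu^i/F^2$, and I will show this hypothesis cannot persist for all $i=0,\ldots,N$. I note that the inequality in the lemma as displayed points the same way as this hypothesis; in context, what the contradiction produces is some iteration $i\leq N$ at which the \emph{reverse} inequality $\Reff{\rr^i}\geq(1-7\epsilon)\mu^i/F^2$ holds, which is exactly what Lemmas~\ref{lem:edges_cut} and \ref{lem:potdrop_and_reff} need in order to round a cut of capacity at most $F/(1-7\epsilon)$. (I read the stated direction as a typo.)

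Suppose for contradiction that $\Reff{\rr^i}\leq(1-7\epsilon)\mu^i/F^2$ for every $0\leq i\leq N$. Applying Lemma~\ref{lem:total_weight} under this hypothesis gives $\mu^N\leq m\exp(\epsilon(1-2\epsilon)N/\rho)$. Classify each iteration $i<N$ as \emph{uncongested} (every edge satisfies $\cong{\tf^i}{e}\leq\rho$) or \emph{congested} (some edge exceeds $\rho$), and let $A$ and $B$ denote the respective counts, so that $A+B=N$.

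For the uncongested count, Lemma~\ref{lem:nu_increase} multiplies $\nu^i$ by at least $\exp(\epsilon(1-\epsilon)/\rho)$ at each such step, while $\nu^i$ is non-decreasing since the weights only grow; with $\nu^0=1$ and $\nu^N\leq\mu^N$ this yields $A\leq(\rho/(\epsilon(1-\epsilon)))\ln\mu^N$. For the congested count, Lemma~\ref{lem:eff_res_increase} multiplies $\Reff{\rr^i}$ by at least $\exp(\epsilon^2\rho^2/(4m))$ at each such step, and Rayleigh monotonicity (Corollary~\ref{cor:rayleigh_mon}) keeps $\Reff$ non-decreasing at other steps. A min-cut energy witness exactly analogous to Lemma~\ref{lem:phi_props} part~\ref{lem:phi_lb}---since every edge $e$ of a minimum cut satisfies $r_e^0=1/u_e^2\geq 1/\maxflow^2$, and some such edge carries at least $1/m$ units of the unit-value electrical flow---gives $\Reff{\rr^0}\geq 1/(m^2\maxflow^2)\geq 1/(m^2 F^2)$; combined with the contradiction hypothesis $\Reff{\rr^N}\leq\mu^N/F^2$, this gives $B\leq(4m/(\epsilon^2\rho^2))\ln(m^2\mu^N)$.

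Substituting the bound on $\mu^N$ and the parameter choices $\rho=3m^{1/3}\epsilon^{-2/3}$ and $N=5\epsilon^{-8/3}m^{1/3}\ln m$, the uncongested bound becomes $A\leq \tfrac{1-2\epsilon}{1-\epsilon}N+O(\rho\epsilon^{-1}\ln m)$ and the congested bound becomes $B=O(\epsilon^{-5/3}m^{1/3}\ln m)$. The main obstacle is the final arithmetic: I need to verify, not just asymptotically but at the boundary case $\epsilon=1/7$, that the $\Theta(\epsilon N)$ slack in $A$ strictly dominates the combined overhead $O(\rho\epsilon^{-1}\ln m)+B$, so that $A+B<N$. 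The constants $3$ and $5$ in the definitions of $\rho$ and $N$ are tuned precisely to produce this slack; once it is confirmed, $A+B=N$ is contradicted and the lemma follows.
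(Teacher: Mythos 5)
Your proposal is correct and follows the paper's proof essentially verbatim: the same contradiction hypothesis, the same bound on $\mu^N$ from Lemma~\ref{lem:total_weight}, the same split of iterations into uncongested and congested steps bounded via Lemma~\ref{lem:nu_increase} with $\nu^i \le \mu^i$ and via Lemma~\ref{lem:eff_res_increase} with Rayleigh monotonicity and the initial bound $\Reff{\rr^{0}} \ge 1/(m^{2}\,{\maxflow}^{2})$, and your reading of the stated inequality as a typo for the $\ge$ direction is exactly what the paper proves (see Equation~\eqref{eqn:ReffTarget}). The one step you leave pending does close: with $\rho = 3m^{1/3}\epsilon^{-2/3}$ and $N = 5\epsilon^{-8/3}m^{1/3}\ln m$ the paper's computation gives $a+b < N - \left(\frac{41}{54}-\frac{4\epsilon}{3}\right)\frac{m^{1/3}}{\epsilon^{5/3}}\ln m < N$ for all $\epsilon \le 1/7$, so the slack does strictly dominate the overhead terms at the boundary.
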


Before proving Lemma~\ref{thm:res_conv_bound}, we note that combining 
  it with Lemmas~\ref{lem:edges_cut} and~\ref{lem:potdrop_and_reff} immediately implies our main result:
\begin{theorem}
On input $\epsilon < 1/7$, the algorithm in Figure~\ref{fig:cut_alg} runs in time
  $\Ot{m^{4/3} \epsilon^{-8/3}}$.
If $F \geq F^{*}$, then it returns a cut of capacity at most $F / (1-7\epsilon)$, where
  $F^{*}$ is the minimum capacity of an $s$-$t$ cut.
\end{theorem}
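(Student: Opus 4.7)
My plan is to establish the two assertions of the Theorem separately. For the running time, I would simply count: the outer loop makes $N = 5\epsilon^{-8/3}m^{1/3}\ln m$ passes; each pass invokes Theorem~\ref{thm:computing_electrical_flow} with error parameter $\delta = \epsilon^{2}$ on resistances whose ratio is bounded by a polynomial in $m$ and $1/\epsilon$ (exactly as in Equation~\eqref{eq:R_bound}), so each electrical solve costs $\widetilde{O}(m)$. Sorting vertices by the returned potential $\tphi$ and sweeping for the best threshold cut $S_{x}$ adds only $\widetilde{O}(m)$ more per iteration, giving the claimed $\Ot{m^{4/3}\epsilon^{-8/3}}$ overall.

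For correctness under the hypothesis $F \geq F^{*}$, I would apply Lemma~\ref{thm:res_conv_bound}, which furnishes some iteration $i^{\star} \leq N$ with $\Reff{\rr^{i^{\star}}} \geq (1-7\epsilon)\mu^{i^{\star}}/F^{2}$; equivalently, $\mu^{i^{\star}}/\Reff{\rr^{i^{\star}}} \leq F^{2}/(1-7\epsilon)$. Since the update rule sets $r_{e}^{i} = w_{e}^{i}/u_{e}^{2}$, the quantity $\mu^{i^{\star}} = \sum_{e} w_{e}^{i^{\star}}$ coincides with the $\sum_{e} u_{e}^{2} r_{e}^{i^{\star}}$ required by Lemma~\ref{lem:potdrop_and_reff}. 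Applying that lemma with $\delta = \epsilon^{2} \leq 1/3$ gives
\[
  \sum_{e \in E} \tphiphi(e)\, u_{e} \;\leq\; (1 + 2\epsilon^{2}) \sqrt{\frac{\mu^{i^{\star}}}{\Reff{\rr^{i^{\star}}}}} \;\leq\; \frac{1+2\epsilon^{2}}{\sqrt{1-7\epsilon}}\, F.
\]
A quick algebraic check (squaring and expanding) shows $(1+2\epsilon^{2})^{2}(1-7\epsilon) \leq 1$ whenever $\epsilon \leq 1/7$, so this upper bound is at most $F/(1-7\epsilon)$.

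To finish, I would invoke Lemma~\ref{lem:edges_cut}, which says some threshold cut $S_{x}$ on the rescaled potentials $\tphi$ has capacity at most the displayed quantity, hence at most $F/(1-7\epsilon)$. Because the algorithm examines the best such cut at every iteration and returns it as soon as its capacity falls below $F/(1-7\epsilon)$, the algorithm must halt with a valid answer no later than step $i^{\star}$. The argument is essentially a packaging step---the heavy lifting sits inside Lemma~\ref{thm:res_conv_bound}---and the only genuinely delicate point is confirming that the multiplicative loss $(1+2\epsilon^{2})/\sqrt{1-7\epsilon}$ from the approximate solver fits inside the target tolerance $1/(1-7\epsilon)$. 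This inequality is exactly what pins down the constant in the hypothesis $\epsilon \leq 1/7$, and verifying it cleanly (without losing the $1-7\epsilon$ factor) is the main obstacle, modest though it is.
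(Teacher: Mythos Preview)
Your proposal is correct and follows essentially the same route as the paper: invoke Lemma~\ref{thm:res_conv_bound} to obtain an iteration with $\Reff{\rr^{i^\star}} \geq (1-7\epsilon)\mu^{i^\star}/F^{2}$, feed this into Lemma~\ref{lem:potdrop_and_reff} (with $\delta=\epsilon^{2}$) and then Lemma~\ref{lem:edges_cut}, and check the numerical inequality $(1+2\epsilon^{2})/\sqrt{1-7\epsilon}\leq 1/(1-7\epsilon)$; the running time is just $N$ nearly-linear solves. One small inaccuracy: Equation~\eqref{eq:R_bound} is stated for the flow oracle's resistances $r_{e}=(w_{e}+\frac{\epsilon|\ww|_{1}}{3m})/u_{e}^{2}$, not the cut algorithm's $r_{e}=w_{e}/u_{e}^{2}$, so you should instead cite the bound $w_{e}^{i}\geq(\epsilon/m)\mu^{i}$ established in the proof of Lemma~\ref{lem:eff_res_increase} to get the polynomial resistance ratio.
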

To use this algorithm to find a cut of approximately minimum capacity, one should begin
  as described in Section~\ref{sec:max_flow_prelim} by crudely approximating the minimum
  cut, and then applying the above algorithm in a binary search.
Crudely analyzed, this incurs a multiplicative overhead of $O (\log m/\epsilon)$.

\begin{proof}[Proof of Lemma~\ref{thm:res_conv_bound}]
Suppose as before that $\Reff{\rr^i} \leq (1 - 7\epsilon) \frac {\mu^i}{\maxflowguess^2}$ 
  for all $1 \leq i \leq N$. 
By Lemma~\ref{lem:total_weight} and the fact that $\mu^0=m$, the total weight at the end of the 
  algorithm is bounded by
\begin{equation}\label{eq:mu_bound}
\mu^N \leq \mu^0 \exp\left(\frac {\epsilon(1 - 2\epsilon)}{\rhopc} N\right) = m \exp\left(\frac {\epsilon(1 - 2\epsilon)}{\rhopc} N\right) . 
\end{equation}

Let $A\subseteq \{1,\dots,N \}$ be the set of $i$ for which $\cong{\tf^i}{e} \leq \rhopc$ for all $e$, and let  $B\subseteq \{1,\dots,N \}$ be the set of $i$ for which there exists an edge $e$ with $\cong{\tf^i}{e} > \rhopc$.  Let $a=|A|$ and $b=|B|$, and note that $a+b = N$.  We will obtain a contradiction by proving bounds on $a$ and $b$ that add up to something less than $N$.

We begin by bounding $a$.  
By applying Lemma~\ref{lem:nu_increase} to all of the steps in $A$ and noting that $\nu^i$ 
  never decreases during the other steps, we  obtain
\begin{equation}\label{eq:nu_bound}
\nu^{N} \geq \exp\left( \frac {\epsilon(1 - \epsilon)}{\rhopc}a\right) \nu^0 
=\exp\left( \frac {\epsilon(1 - \epsilon)}{\rhopc}a\right)
\end{equation}
Since $\nu^N \leq \mu^N$, 
we can combine Equations~\eqref{eq:mu_bound} and~\eqref{eq:nu_bound} to obtain
$$
\exp\left( \frac {\epsilon(1 - \epsilon)}{\rhopc}a\right)
\leq 
 m \exp\left(\frac {\epsilon(1 - 2\epsilon)}{\rhopc} N\right). 
$$
Taking logs of both sides and solving for $a$ yields
$$ \frac {\epsilon(1 - \epsilon)}{\rhopc}a
 \leq
\frac {\epsilon(1 - 2\epsilon)}{\rhopc} N+\ln m ,
$$
from which we obtain
\begin{equation}\label{eq:a_bound}
 a
 \leq
\frac {1 - 2\epsilon}{1 - \epsilon} N+\frac {\rhopc}{\epsilon(1 - \epsilon)}\ln m 
\leq 
(1 - \epsilon) N+\frac {\rhopc}{\epsilon(1 - \epsilon)}\ln m
< (1 - \epsilon) N+\frac {7 \rhopc}{6 \epsilon}\ln m.
\end{equation}

We now use a similar argument to bound $b$.  
By applying Lemma~\ref{lem:eff_res_increase} to all of the steps in $B$ and noting that
  $\Reff{\rr^i}$ never decreases during the other steps, we  obtain
\[
\Reff{\rr^N} 
 \geq {\exp\left( \frac {\epsilon^2 \rhopc^2}{4 m}b \right) \Reff{\rr^0}}
 \geq  {\exp\left( \frac {\epsilon^2 \rhopc^2}{4 m}b \right) \frac{1+\epsilon}{m^2 {\maxflow}^2}}
 \geq  {\exp\left( \frac {\epsilon^2 \rhopc^2}{4 m}b \right) \frac{1+\epsilon}{m^2 {\maxflowguess}^2}},
\]
where the second inequality applies the bound $\Reff{\rr^0} \geq 1/(m^2 {\maxflow}^2)$, which follows from an argument like the one used in Equation~\eqref{eq:eff_res_lb}.
Using our assumption that $\Reff{\rr^N} \leq (1 - 7\epsilon) \frac {\mu^N}{\left(\maxflowguess\right)^2}$ 
  and the bound on $\mu^N$ from  Equation~\eqref{eq:mu_bound}, this gives us
$$\exp\left( \frac {\epsilon^2 \rhopc^2}{4 m}b \right) \frac{1+\epsilon}{m^2 {\maxflowguess}^2} \leq \frac{1-7\epsilon}{{\maxflowguess}^2} m \exp\left(\frac {\epsilon(1 - 2\epsilon)}{\rhopc} N\right). $$
Taking logs and rearranging terms gives us
\begin{equation}\label{eq:b_bound}
b
\leq \frac{4m(1-2\epsilon)}{\epsilon \rhopc^3}N+\frac{4m}{\epsilon^2 \rhopc^2} \ln\left(\frac{1-7\epsilon}{1+\epsilon} m^3\right)
<  \frac{4m}{\epsilon \rhopc^3}N+\frac{12m}{\epsilon^2 \rhopc^2} \ln m.
\end{equation}
Adding the inequalities in Equations~\eqref{eq:a_bound} and~\eqref{eq:b_bound}, grouping terms, and plugging in the values of $\rhopc$ and $N$,  yields 
\begin{align*}
N = a+b & <          
 \left((1 - \epsilon)+ \frac{4m}{\epsilon \rhopc^3}\right)N+
 \left(\frac {7 \rhopc}{6 \epsilon}+\frac{12m}{\epsilon^2 \rhopc^2}\right) \ln m\\
&=
\left((1 - \epsilon)+ \frac{4m}{\epsilon (27m \epsilon^{-2})}\right)(5\epsilon^{-8/3} m^{1/3} \ln m)+
  \left(\frac {7 m^{1/3} \epsilon^{-2/3}}{2 \epsilon}+
      \frac{12m}{\epsilon^2 (  9m^{2/3} \epsilon^{-4/3})  }\right) \ln m\\
&=\left((1 - \epsilon)+ \frac{4\epsilon}{27}\right)(5\epsilon^{-8/3} m^{1/3} \ln m)+
  \left(\frac {7 m^{1/3} }{2 \epsilon^{5/3}}+\frac{12m^{1/3}}{9\epsilon^{2/3}  }\right)\ln m\\
&=  \frac{5m^{1/3}}{\epsilon^{8/3}}\ln m -\left( \frac{41}{54}-\frac{12\epsilon}{9  }\right)  \frac{m^{1/3}}{\epsilon^{5/3}}\ln m \\
&<  \frac{5m^{1/3}}{\epsilon^{8/3}} = N. 
\end{align*} 
This is our desired contradiction, which completes the proof.
\end{proof}


\bibliographystyle{abbrv}
\bibliography{maxFlowPaper}
\appendix

\section{Computing Electrical Flows}\label{sec:linsolve}
\begin{proof}[Proof of Theorem~\ref{thm:computing_electrical_flow}]
Without loss of generality, we may scale the resistances so that
  they lie between $1$ and $R$.
This gives the following relation between $F$ and
  the energy of the electrical $s$-$t$ flow of value $F$:
\begin{equation}\label{eqn:energyF}
\frac{F^{2}}{m} \leq   \energy{\rr}{\ff} \leq F^{2} R m.
\end{equation}

Let $\LL$ be the Laplacian matrix for this electrical flow problem.
Note that all off-diagonal entries of $\LL$ lie between 
 $-1$ and  $-1/R$.
Recall that the vector of optimal vertex potentials $\phiphi$ of the 
  electrical flow is the solution to the following linear system:
\[
  \LL \phiphi = F \chist 
\]

For any $\epsilon > 0$, the algorithm of 
  Koutis, Miller and Peng~\cite{KoutisMP10}
  provides in time 
  $O\left(m \log^{2} m \log^{2} \log m \log \epsilon^{-1} \right)$ 
  a set of potentials $\hphiphi$ such that
\[
  \norm{\hphiphi  - \phiphi}_{\LL} \leq \epsilon \norm{\phiphi}_{\LL},
\]
where 
\[
  \norm{\phiphi}_{\LL} \defeq \sqrt{\phiphi^{T} \LL \phiphi}.
\]

Note also that
\[
  \norm{\phiphi}_{\LL}^{2} = \energy{\rr}{\ff},
\]
where
  $\ff$ is the potential flow corresponding
  to $\phiphi$.
Now, let $\hff$ be the potential flow corresponding to
  $\hphiphi$,
\[
  \hff = \CC \BB^{T} \hphiphi .
\]
The flow $\hff$ is not necessarily an $s$-$t$ flow because 
  $\hphiphi$ only approximately satisfies the linear system.
A small amount of flow can enter or leave every other vertex.
In other words, $\hff$ may not satisfy $\BB \hff  = F \bchist$.
Below, we will compute an approximation $\tff$ of $\hff$ that 
  satisfies $\BB \tff  = F \bchist$.

Before fixing this problem, we first observe that
  the energy of $\hff$ is not much more than the energy of $\ff$.
We have
\[
  \norm{\hphiphi}_{\LL}^{2} = \energy{\rr}{\hff},
\]
and by the triangle inequality
\[
  \norm{\hphiphi}_{\LL} \leq \norm{\phiphi}_{\LL} 
  +  \norm{\hphiphi  - \phiphi}_{\LL}
 \leq (1 + \epsilon )  \norm{\phiphi}_{\LL} .
\]
So
\[
  \energy{\rr}{\hff} \leq (1+ \epsilon)^{2} \energy{\rr}{\ff}.
\]

To see that the flow $\hff$ does not flow too much into or out of any
  vertex other than $s$ and $t$, note that the flows into and out of
  vertices are given by the vector
\[
  \ii_{ext} \defeq \BB \hff .
\]
Note that the sum of the entries in $\ii_{ext}$ is zero.
We will produce an $s$-$t$ flow of value $F$ by adding a flow
  to $\hff$ to obtain a flow $\tff$ for which
\[
   \BB \tff  = F \bchist .
\]

Let $\eta$ be the maximum discrepancy between $\ii_{ext}$ and $F \bchist$:
\[
  \eta \defeq \norm{\ii_{ext} - F \bchist}_{\infty}.
\]
We have
\[
  \eta \leq 
  \norm{\ii_{ext} - F \bchist}_2 = 
  \norm{\LL \hphiphi - \LL \phiphi}_2
 \leq 
  \norm{\LL}_2   \norm{\hphiphi - \phiphi}_{\LL}
  \leq 
   2 n \norm{\hphiphi - \phiphi}_{\LL}
  \leq 
   2 n \epsilon \sqrt{\energy{\rr}{\ff }}.
\]
It is an easy matter to produce an $s$-$t$ flow $\tff$
  that differs from $\hff$ by at most $n \eta$ on each edge.
For example, one can do this by solving a flow problem in
  a spanning tree of the graph.
Let $T$ be any spanning tree of the graph $G$.
We now construct a flow in $T$ that with the demands
  $F \bchist (u) -\ii_{ext} (u)$
  at each vertex $u$.
As the sum of the positive demands in this flow is at most $n \eta$, one can find
  such a flow in which every edge flows at most $n \eta$.
Moreover, since the edges of $T$ are a tree, one can find the
  flow in linear time.
We obtain the flow $\tff$ by adding this flow to $\hff$.
The resulting flow is an $s$-$t$ flow of value 
  $F$.
Moreover,
\[
  \norm{\hff - \tff}_{\infty} \leq n \eta.
\]
To ensure that we get a good approximation of the electrical flow,
  we  now impose the requirment that
\[
  \epsilon \leq  \frac{\delta}{2 n^{2} m^{1/2} R^{3/2}}.
\]
To check that requirement $a$ is satisfied, observe that
\begin{align*}
  \energy{\rr}{\tff} 
& = 
  \sum_{e} r_{e} \tf_{e}^{2}
\\
&  \leq 
  \sum_{e} r_{e} (\hff_{e} + n \eta )^{2}
\\
&  \leq 
  \energy{\rr}{\hff}
  +
  2 n \eta \sum_{e} r_{e} \hf_{e}
 +
   n^{2} \eta^{2} \sum_{e} r_{e}  
\\
& \leq 
  \energy{\rr}{\hff}
 +
 (2 n \eta F + n^{2} \eta^{2}) \sum_{e} r_{e}  
\\
& \leq 
  \energy{\rr}{\hff}
 +
 \epsilon (2 n^{2} (mR)^{1/2} + 4 n^{4}) \energy{\rr}{\ff} m R
&
\text{(by \eqref{eqn:energyF})}
\\
& \leq 
\energy{\rr}{\ff}
\left((1+\epsilon)^{2} + \epsilon 6 n^{4} m R^{3/2} \right).
\end{align*}
We may ensure that this is at most $(1+\delta) \energy{\rr}{\ff}$
  by setting
\[
  \epsilon = \frac{\delta}{12 n^{4} m R^{3/2}}.
\]

To establish part $b$,
  note that
\[
  \norm{\phiphi}_{\LL} = 
  \sum_{e} r_{e} f_{e}^{2}
\]
and so
\[
  \sum_{e} r_{e} (f_{e} - \hf_{e})^{2}
= 
  \norm{\phiphi - \hphiphi }_{\LL}^{2}
\leq 
  \epsilon^{2} \energy{\rr}{\ff}.
\]
We now bound $\abs{r_{e} f_{e}^{2} - r_{e} \hf_{e}^{2}}$
  by
\begin{align*}
\abs{r_{e} f_{e}^{2} - r_{e} \hf_{e}^{2}}^{2}
& =
r_{e} \left(f_{e} - \hf_{e} \right)^{2} r_{e} \left(f_{e} + \hf_{e} \right)^{2}
\\
& \leq 
\epsilon^{2} \energy{\rr}{\ff}  (2+\epsilon)^{2}  \energy{\rr}{\ff},
\end{align*}
which implies
\[
\abs{r_{e} f_{e}^{2} - r_{e} \hf_{e}^{2}} 
\leq 
\epsilon  (2+\epsilon)  \energy{\rr}{\ff}.
\]
It remains to bound 
$\abs{r_{e} \hf_{e}^{2} - r_{e} \tf_{e}^{2}} $,
which we do by the calculation
\begin{align*}
\abs{r_{e} \hf_{e}^{2} - r_{e} \tf_{e}^{2}} 
& \leq 
\sqrt{R}
\abs{\hf_{e} - \tf_{e}}
\sqrt{r_{e}}
\abs{\hf_{e} + \tf_{e}}
\\
& \leq 
\sqrt{R}
2 n^{2} \epsilon \sqrt{\energy{\rr}{\ff}}
(2 + \epsilon) \sqrt{\energy{\rr}{\ff}}
\\
& = 
\epsilon \sqrt{R}
2 (2 + \epsilon) n^{2} 
 \energy{\rr}{\ff}.
\end{align*}
Putting these inequalities together we find
\[
\abs{r_{e} f_{e}^{2} - r_{e} \tf_{e}^{2}} 
\leq 
\epsilon 
\sqrt{R}
(2n^{2} + 1) (2 + \epsilon) 
 \energy{\rr}{\ff}
\leq 
\energy{\rr}{\ff} 
\frac{\delta}{2 m R}.
\]

To prove part $c$, first recall that 
\[
  \LL \phiphi = F \chist 
\]
and that
\[
  \energy{\rr}{\ff} = F^{2} \Reff{\rr}.
\]
So,
\begin{align*}
\epsilon^{2}   \norm{\phiphi}_{\LL}^{2} 
& \geq 
  \norm{\phiphi - \tphiphi}_{\LL}^{2}
\\
& = 
   \left(\phiphi - \tphiphi \right)^{T} \LL \left(\phiphi - \tphiphi \right)
\\
& = 
  \norm{\tphiphi}_{\LL}^{2}
+  \norm{\phiphi}_{\LL}^{2}
 - 2  
   \tphiphi^{T} \LL \phiphi
\\
& = 
  \norm{\phiphi}_{\LL}^{2}
+  \norm{\tphiphi}_{\LL}^{2}
 - 2  
  F \chist^{T} \tphiphi .
\end{align*}
Thus,
\begin{align*}
 2  
  F \chist^{T} \tphiphi
& \geq 
  \norm{\phiphi}_{\LL}^{2}
+  \norm{\tphiphi}_{\LL}^{2}
 - \epsilon^{2}   \norm{\phiphi}_{\LL}^{2}
\\
&\geq 
  \left(1 + (1-\epsilon)^{2} - \epsilon^{2} \right)
   \norm{\phiphi}_{\LL}^{2}
\\
&=
  \left(2 - 2 \epsilon \right)
   \norm{\phiphi}_{\LL}^{2}
\\
&=
  \left(2 - 2 \epsilon \right)
  F^{2} \Reff{\rr}.
\end{align*}
\end{proof}


\end{document}